
\documentclass[11pt]{article}%
\usepackage{amsfonts}
\usepackage{amsmath}
\usepackage{amssymb}
\usepackage{graphicx}
\usepackage{xcolor}%
\setcounter{MaxMatrixCols}{30}
%TCIDATA{OutputFilter=latex2.dll}
%TCIDATA{Version=5.50.0.2960}
%TCIDATA{CSTFile=40 LaTeX article.cst}
%TCIDATA{Created=Wednesday, July 19, 2017 11:06:34}
%TCIDATA{LastRevised=Sunday, May 20, 2018 11:55:54}
%TCIDATA{<META NAME="GraphicsSave" CONTENT="32">}
%TCIDATA{<META NAME="SaveForMode" CONTENT="3">}
%TCIDATA{BibliographyScheme=Manual}
%TCIDATA{<META NAME="DocumentShell" CONTENT="Standard LaTeX\Standard LaTeX Article">}
%TCIDATA{Language=American English}
%BeginMSIPreambleData
\providecommand{\U}[1]{\protect\rule{.1in}{.1in}}
%EndMSIPreambleData
\newtheorem{theorem}{Theorem}
\newtheorem{acknowledgement}[theorem]{Acknowledgement}

\newtheorem{corollary}[theorem]{Corollary}

\newtheorem{lemma}[theorem]{Lemma}
\newtheorem{notation}[theorem]{Notation}

\newtheorem{proposition}[theorem]{Proposition}
\newtheorem{remark}[theorem]{Remark}

\newenvironment{proof}[1][Proof]{\noindent\textbf{#1.} }{\ \rule{0.5em}{0.5em}}
%BeginMSIPreambleData
\ifx\pdfoutput\relax\let\pdfoutput=\undefined\fi
\newcount\msipdfoutput
\ifx\pdfoutput\undefined\else
\ifcase\pdfoutput\else
\msipdfoutput=1
\ifx\paperwidth\undefined\else
\ifdim\paperheight=0pt\relax\else\pdfpageheight\paperheight\fi
\ifdim\paperwidth=0pt\relax\else\pdfpagewidth\paperwidth\fi
\fi\fi\fi
%EndMSIPreambleData
\begin{document}

\title{Relative Phase Shifts for Metaplectic Isotopies Acting on Mixed Gaussian States}
\author{Maurice A. de Gosson\thanks{maurice.de.gosson@univie.ac.at}\\University of Vienna\\Faculty of Mathematics (NuHAG)
\and Fernando Nicacio\thanks{nicacio@if.ufrj.br}\\Universidade Federal de Rio de Janeiro\\Instituto de F\'{\i}sica}
\maketitle

\begin{abstract}
We address in this paper the notion of relative phase shift for mixed quantum
systems. We study the Pancharatnam--Sj\"{o}qvist phase shift $\varphi
(t)=\operatorname*{Arg}\operatorname*{Tr}(\widehat{U}_{t}\widehat{\rho})$ for
metaplectic isotopies acting on Gaussian mixed states. We complete and
generalize previous results obtained by one of us while giving rigorous
proofs. This gives us the opportunity to review and complement the theory of
the Conley--Zehnder index which plays an essential role in the determination
of phase shifts.

\end{abstract}

\section{Introduction}

While the postulates of quantum mechanics seem to recognize complex wave
functions as mere instruments for calculating probability amplitudes, their
phases should definitively not be viewed as secondary objects. There is
actually a plethora of examples in which the phase plays the title-role. The
arguably most famous example of this is the Aharonov--Bohm effect dealing with
questions about the factual significance of electromagnetic potentials
\cite{aharonov-bohm}. A break point on general phases behavior in quantum
mechanics also emerges from Berry's seminal work \cite{berry}. Its main
contribution is to recognize that the total phase of a system is composed by
two essentially distinct effects: a phase related to the system dynamics, and
a geometrical phase, which mirrors the geometry of the underlying Hilbert
space of the system of pure states. The same kind of geometrical phenomenon
was observed earlier by Pancharatnam \cite{pancha} in the context of classical
optics, where the phase shift is due to the spherical geometry of the
polarization states of light. Also, the Hannay angle \cite{hannay} is an
example of phase shifts induced by the space shape in classical mechanics.

A more profound and general knowledge was acquired by Mukunda and Simon
\cite{mukunda}, where the authors gave precise definitions for the total, the
dynamical, and the geometrical phases for pure states only as functions of
paths in the Hilbert space.

Consider a quantum system represented at initial time $t=0$ by a function
$\psi\in L^{2}(\mathbb{R}^{n})$. Assuming that the time-evolution of the
system is governed by a one-parameter family of unitary operators
$\widehat{U}_{t}$ on $L^{2}(\mathbb{R}^{n})$ the system will be represented at
time $t$ by the function $\psi_{t}=\widehat{U}_{t}\psi$.{ In \cite{mukunda}
Mukunda and Simon (see also \cite{aharonov}) }defined the (relative) phase
shift of the system when $(\widehat{U}_{t}\psi|\psi)_{L^{2}}\neq0$ by the
formula%
\begin{equation}
\varphi(t)=\operatorname*{Arg}(\widehat{U}_{t}\psi|\psi)_{L^{2}},
\label{shift1}%
\end{equation}
{which was named as the Pancharatnam or total phase. }

Suppose now that the system under consideration is in a \textquotedblleft
mixed state\textquotedblright\ represented by a density operator
$\widehat{\rho}=\sum_{j}\lambda_{j}\widehat{\Pi}_{\psi_{j}}$ ($\widehat{\Pi
}_{\psi_{j}}$ the orthogonal projection on the ray $\mathbb{C}\psi_{j}$). The
operator $\widehat{\rho}$ is a positive semidefinite (and hence self-adjoint)
trace class operator on\ $L^{2}(\mathbb{R}^{n})$ with trace
$\operatorname*{Tr}(\widehat{\rho})=1$. Its time evolution is given by
$\widehat{\rho}_{t}=\widehat{U}_{t}\widehat{\rho}\widehat{U}_{t}^{\ast}$ and
one now defines, following Sj\"{o}qvist \textit{et} \textit{al.}
\cite{sjoeqvist}, the phase shift of this quantum state by
\begin{equation}
\varphi(t)=\operatorname*{Arg}\operatorname*{Tr}(\widehat{U}_{t}\widehat{\rho
}) \label{shift2}%
\end{equation}
when $\operatorname*{Tr}(\widehat{U}_{t}\widehat{\rho})\neq0$. It is easy to
see that this definition coincides with Pancharatnam's formula (\ref{shift1})
when $\widehat{\rho}=\widehat{\Pi}_{\psi}$: since trace class operators form a
two-sided ideal in the algebra of bounded operators the product $\widehat{U}%
_{t}\widehat{\rho}$ is a trace class operator with same rank one as
$\widehat{\rho}$; it follows that $\operatorname*{Tr}(\widehat{U}%
_{t}\widehat{\rho})$ is precisely the only eigenvalue of this operator. The
equation $\widehat{U}_{t}\widehat{\rho}\phi=\lambda\phi$ is equivalent to
$(\phi|\psi)_{L^{2}}\widehat{U}_{t}\psi=\lambda\phi$. Choosing $\phi
=\widehat{U}_{t}\psi$ we get $(\widehat{U}_{t}\psi|\psi)_{L^{2}}%
\widehat{U}_{t}\psi=\lambda\widehat{U}_{t}\psi$ hence $\lambda=(\widehat{U}%
_{t}\psi|\psi)_{L^{2}}$. {The generalization (\ref{shift2}) relies on the fact
that this quantity, as well as the one in (\ref{shift1}), can be defined and
measured by interferometric techniques as explained in
\cite{sjoeqvist,nicacio1}.}

We will study in this paper the Pancharatnam--Sj\"{o}qvist phase shift when
the Wigner distribution of $\widehat{\rho}$ is of the type
\begin{equation}
\rho(z)=(2\pi)^{-n}\sqrt{\det V^{-1}}e^{-\frac{1}{2}V^{-1}z\cdot z}
\label{rhosig1}%
\end{equation}
and $(\widehat{U}_{t})$ is the Schr\"{o}dinger evolution operator determined
by a time-de\-pen\-dent quadratic Hamiltonian. Gaussians distributions of the
type (\ref{rhosig1}) play a central role in quantum mechanics and optics, and
are paradigmatic for all other states. We thus extend the results obtained by
one of us in the recent work \cite{nicacio1}; this allows us in particular to
give precise formulas for the harmonic oscillator in $n$ dimensions.

This work is structured as follows:

\begin{itemize}
\item In section \ref{sec1} we introduce the notion of symplectic isotopy: a
symplectic isotopy is a $C^{1}$-path of symplectic matrices passing through
the origin at time $t=0.$ This notion generalizes that of one-parameter group;
we show that a symplectic isotopy can always be viewed a the Hamiltonian flow
of a (possibly time-dependent) Hamiltonian that is a quadratic form in the
position and momentum variables. To every symplectic isotopy is associated in
a canonical way a $C^{1}$-path of metaplectic operators; this allows the
derivation of Schr\"{o}dinger's equation for quadratic Hamiltonians
\cite{Folland,Birk};

\item In section \ref{secmeta} we review the properties of the Weyl symbol of
metaplectic operators as developed by one of us \cite{LMP,Birk,RMPCZ}; these
properties will be instrumental for our derivation of phase formulas. This
gives us the opportunity to present \textquotedblleft in a
nutshell\textquotedblright\ a rather technical topic which is not very
well-known outside mathematicians working on symplectic geometry and
intersection theory; this section begins by a review of the general notion of
Weyl transform;

\item In section \ref{secdensity} we review the basic properties of density
operators we will need, focusing in particular on the Gaussian case, which is
of great practical interest in quantum optics. Among all quantum states,
Gaussian states are those whose properties are the best understood from a
theoretical point of view; they play a significant role in many areas of
quantum mechanics and optics, quantum chemistry, and signal theory.

\item Section \ref{secpancha} is devoted to the study of the
Pancharatnam--Sj\"{o}qvist phase shift when the Hamiltonian flow is determined
by a quadratic Hamiltonian and acts on a density operator with Gaussian Wigner
distribution. We prove a general formula for the action of metaplectic
operators on Gaussian density matrix and thereafter give detailed calculations
for the harmonic oscillator.

\item In section \ref{tic} we generalize the previous results to the
Inhomogeneous metaplectic group, taking into account affine transformations
related to displacements in phase space.
\end{itemize}

To make the paper self-contained we have carefully detailed the construction
and the properties of the Conley--Zehnder index, and added two Appendices: in
Appendix A we collect the main definitions and properties of the metaplectic
group, and in Appendix B we review the theory of the Leray--Maslov index which
plays an essential role in the definition of the extended Conley--Zehnder
intersection index of symplectic paths without restrictions on the endpoint of
these paths.

\paragraph{Notation and prerequisites}

The standard symplectic form on $\mathbb{R}^{2n}$ is $\sigma=\sum_{j=1}%
^{n}dp_{j}\wedge dx_{j}$, that is $\sigma(z,z^{\prime})=p\cdot x^{\prime
}-p^{\prime}\cdot x$ if $z=(x,p)$, $z^{\prime}=(x^{\prime},p^{\prime})$; in
vector notation $\sigma(z,z^{\prime})=Jz\cdot z^{\prime}=(z^{\prime})^{T}Jz$
where $J=%
\begin{pmatrix}
0_{n\times n} & I_{n\times n}\\
-I_{n\times n} & 0_{n\times n}%
\end{pmatrix}
$. The scalar product on the space $L^{2}(\mathbb{R}^{n})$ is defined by%
\[
(\psi|\phi)_{L^{2}}=\int\psi(x)\overline{\phi(x)}d^{n}x.
\]

Let $Q$ be a real quadratic form on $\mathbb{R}^{m}$. The signature
$\operatorname*{sign}(Q)$ is the number of $>0$ eigenvalues of the Hessian
matrix of $Q$ minus the number of $<0$ eigenvalues. We will use the
generalized Fresnel formula (see Appendix A of Folland \cite{Folland})
\begin{equation}
\int e^{-\frac{1}{2\hbar}Az\cdot z}d^{2n}z=(2\pi\hbar)^{n}\det\nolimits^{-1/2}%
A \label{Fresnel}%
\end{equation}
which is valid for all $A=A^{\ast}$ with $\operatorname{Re}A>0$ and where
$(\det A)^{-1/2}=\alpha_{1}^{-1/2}\cdot\cdot\cdot\alpha_{2n}^{-1/2}$, the
$\alpha_{j}^{-1/2}$ being the square roots of $\alpha_{j}^{-1}$ with positive
real part.

\section{\label{sec1}Symplectic and Metaplectic Isotopies}

\subsection{Hamiltonian and symplectic isotopies}

A symplectomorphism of $\mathbb{R}^{2n}$ is a $C^{\infty}$ diffeomorphism
$f:\mathbb{R}^{2n}\longrightarrow\mathbb{R}^{2n}$ such that $f^{\ast}%
\sigma=\sigma$; equivalently the Jacobian matrix $Df(z)\in\operatorname*{Sp}%
(n)$ for every $z\in\mathbb{R}^{2n}$. If in addition there exists a
Hamiltonian function $H\in C^{\infty}(\mathbb{R}^{2n}\times\mathbb{R}%
,\mathbb{R})$ such that $f=f_{1}^{H}$ ($(f_{t}^{H})$ the flow determined by
the Hamilton equations for $H$) then $f$ is called a \emph{Hamiltonian
symplectomorphism}. A symplectic isotopy is a one-parameter family
$V=(S_{t})_{t\in I}$ of elements of $\operatorname*{Sp}(n)$ depending in a
$C^{1}$ fashion on $t\in I$ where $I$ is some real interval containing $0$ and
such that $f_{0}=I_{\mathrm{d}}$. The interval $I$ can be bounded, or
unbounded. If each $f_{t}$ is a Hamiltonian symplectomorphism, then
$\Sigma=(f_{t})_{t\in I}$ is called a Hamiltonian isotopy.

It is immediate to check that if $S_{t}=e^{tX}$ with $X\in\mathfrak{sp}(n)$
(the symplectic Lie algebra) then $(S_{t})_{t\in\mathbb{R}}$ is a genuine
one-parameter subgroup of $\operatorname*{Sp}(n)$, in fact the flow determined
by the quadratic Hamiltonian $H=-\frac{1}{2}JXz\cdot z$. It turns out that
each symplectic isotopy is a Hamiltonian isotopy determined by some
time-dependent $H$ (we are following here the presentation in \cite{RMP}):

\begin{proposition}
\label{prop5}Let $(f_{t})_{t\in I}$ be a Hamiltonian isotopy. We have
$(f_{t})_{t\in I}=(f_{t}^{H})_{t\in I}$ with%
\begin{equation}
H(z,t)=-\int_{0}^{1}\sigma(\dot{f}_{t}\circ f_{t}^{-1}(\lambda z),z)d\lambda
\label{hzt}%
\end{equation}
where $\dot{f}_{t}=df_{t}/dt$. Equivalently:%
\begin{equation}
H(z,t)=-\int_{0}^{1}\sigma\left(  X_{H}(f_{t}^{-1}(\lambda z),z\right)
)d\lambda\label{hztbis}%
\end{equation}
where $X_{H}=J\partial_{z}H$ is the (time-dependent) Hamilton vector field of
$H$.
\end{proposition}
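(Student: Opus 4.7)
The plan is to recognize the formula (\ref{hzt}) as the explicit primitive produced by the Poincar\'e homotopy operator applied to the closed $1$-form $\iota_{X_{t}}\sigma$ on the star-shaped manifold $\mathbb{R}^{2n}$, where $X_{t}$ is the infinitesimal generator of the isotopy. Once this identification is in place, everything else is standard symplectic bookkeeping and uniqueness of solutions to ODEs.

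First I would introduce the time-dependent vector field $X_{t}(z):=\dot{f}_{t}(f_{t}^{-1}(z))$, so that $\tfrac{d}{dt}f_{t}=X_{t}\circ f_{t}$. Differentiating the symplecticity condition $f_{t}^{\ast}\sigma=\sigma$ with respect to $t$ and using $f_{t}^{\ast}(L_{X_{t}}\sigma)=\tfrac{d}{dt}(f_{t}^{\ast}\sigma)=0$, I conclude $L_{X_{t}}\sigma=0$. Cartan's magic formula together with $d\sigma=0$ then reduces this to $d\iota_{X_{t}}\sigma=0$, so the $1$-form $\alpha_{t}:=\iota_{X_{t}}\sigma$ is closed on $\mathbb{R}^{2n}$.

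Next I would apply the Poincar\'e lemma in its quantitative form: on a star-shaped domain the operator $(K\alpha)(z)=\int_{0}^{1}\alpha_{\lambda z}(z)\,d\lambda$ produces a primitive $dK\alpha=\alpha$ of any closed $1$-form $\alpha$ vanishing at nothing in particular (the constant of integration is fixed by $K\alpha(0)=0$). Unfolding $(\alpha_{t})_{\lambda z}(z)=\sigma(X_{t}(\lambda z),z)$ and setting $H(z,t):=-(K\alpha_{t})(z)$ yields precisely formula (\ref{hzt}). With the standard convention $\iota_{X_{H}}\sigma=-dH$ (equivalently $X_{H}=J\partial_{z}H$), this gives $X_{H(\cdot,t)}=X_{t}$, hence the Hamiltonian flow of $H$ coincides with $(f_{t})$ by uniqueness of the ODE $\dot{f}_{t}=X_{H}\circ f_{t}$ with $f_{0}=\mathrm{Id}$. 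The alternative expression (\ref{hztbis}) is then obtained by substituting $\dot{f}_{t}\circ f_{t}^{-1}=X_{H}(\cdot,t)$ inside the integrand.

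The main obstacle is essentially sign bookkeeping: one must track the conventions relating $\sigma$, $J$ and $X_{H}$ so that the radial primitive $-K\alpha_{t}$ comes out with the sign appearing in (\ref{hzt}), and one must verify the differentiation under the integral sign is justified (immediate since $X_{t}$ depends on $(z,t)$ in a $C^{1}$ fashion, being the composition of the $C^{1}$ maps $\dot{f}_{t}$ and $f_{t}^{-1}$, and the integration is over the compact interval $[0,1]$). There is no genuine analytic difficulty: the argument is the Poincar\'e lemma made explicit on the contractible space $\mathbb{R}^{2n}$.
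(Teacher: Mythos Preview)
Your argument is correct: you identify the generator $X_{t}=\dot f_{t}\circ f_{t}^{-1}$, observe that $\iota_{X_{t}}\sigma$ is closed by differentiating $f_{t}^{\ast}\sigma=\sigma$, and then write down the radial primitive from the Poincar\'e lemma, which is exactly the integral in (\ref{hzt}); the sign check $\iota_{X_{H}}\sigma=-dH$ under the paper's convention $X_{H}=J\partial_{z}H$ is also right.

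The paper itself does not give a proof at all---it merely cites Wang and Banyaga---so there is no meaningful comparison of approaches to make. Your write-up supplies precisely the standard argument those references contain. One minor remark: the statement is phrased for a ``Hamiltonian isotopy'', but your proof only uses that $(f_{t})$ is a \emph{symplectic} isotopy on the contractible space $\mathbb{R}^{2n}$, so you in fact recover the stronger claim advertised in the text preceding the proposition (every symplectic isotopy on $\mathbb{R}^{2n}$ is Hamiltonian). You may want to say that explicitly.
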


\begin{proof}
See Wang \cite{Wang}; on a more conceptual level see Banyaga \cite{Banyaga}.
\end{proof}

In the case of general linear symplectic isotopies we have:

\begin{corollary}
\label{cor2}Let $\Sigma=(S_{t})_{t\in\mathbb{R}}$ be a symplectic isotopy in
$\operatorname*{Sp}(n)$.

(i) The associated Hamiltonian function is the quadratic form
\begin{equation}
H(z,t)=-\frac{1}{2}J\dot{S}_{t}S_{t}^{-1}z\cdot z=\frac{1}{2}\sigma(z,J\dot
{S}_{t}S_{t}^{-1}z) \label{hamzo}%
\end{equation}
where $\dot{S}_{t}=dS_{t}/dt$.

(ii) Writing $S_{t}$ in block-matrix form
\begin{equation}
S_{t}=%
\begin{pmatrix}
A_{t} & B_{t}\\
C_{t} & D_{t}%
\end{pmatrix}
\label{abcd}%
\end{equation}
that Hamiltonian is explicitly given by%
\begin{equation}
H=\tfrac{1}{2}(\dot{D}_{t}C_{t}^{T}-\dot{C}_{t}D_{t}^{T})x^{2}+(\dot{C}%
_{t}B_{t}^{T}-\dot{D}_{t}A_{t}^{T})p\cdot x+\tfrac{1}{2}(\dot{B}_{t}A_{t}%
^{T}-\dot{A}_{t}B_{t}^{T})p^{2}. \label{hamzi}%
\end{equation}

\end{corollary}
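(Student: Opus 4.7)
My plan is to derive both parts of Corollary~\ref{cor2} by specializing and then unpacking Proposition~\ref{prop5}. For part (i), the key point is that a symplectic isotopy is \emph{linear}: with $f_t = S_t$, one has $f_t^{-1}(\lambda z) = \lambda S_t^{-1} z$ and therefore $\dot f_t \circ f_t^{-1}(\lambda z) = \lambda\, \dot S_t S_t^{-1} z$. Plugging this into formula (\ref{hzt}), the parameter $\lambda$ pulls out of the symplectic form (which is bilinear), leaving
\[
H(z,t) \;=\; -\sigma(\dot S_t S_t^{-1} z,\, z)\int_0^{1}\!\lambda\, d\lambda \;=\; -\tfrac{1}{2}\, \sigma(\dot S_t S_t^{-1} z,\, z) \;=\; -\tfrac{1}{2}\, J\dot S_t S_t^{-1} z\cdot z,
\]
the last equality being the definition $\sigma(z,z') = Jz\cdot z'$. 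The alternative form $\tfrac{1}{2}\sigma(z, J\dot S_t S_t^{-1} z)$ is then a direct consequence of the antisymmetry of $\sigma$ together with the identity $\sigma(z,Jw) = z\cdot w$.

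For part (ii) the plan is a careful block computation. I will use the standard formula for the symplectic inverse,
\[
S_t^{-1} \;=\; \begin{pmatrix} D_t^{T} & -B_t^{T} \\ -C_t^{T} & A_t^{T} \end{pmatrix},
\]
multiply out $\dot S_t S_t^{-1}$ block-by-block, and then multiply on the left by $J$. Since $\dot S_t S_t^{-1}$ lies in the symplectic Lie algebra $\mathfrak{sp}(n)$, the matrix $J\dot S_t S_t^{-1}$ is \emph{symmetric}: its diagonal blocks $\dot C_t D_t^{T} - \dot D_t C_t^{T}$ and $\dot A_t B_t^{T}-\dot B_t A_t^{T}$ are each symmetric, and the off-diagonal blocks $\dot D_t A_t^{T} - \dot C_t B_t^{T}$ and $\dot B_t C_t^{T} - \dot A_t D_t^{T}$ are transposes of one another. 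Expanding the quadratic form $-\tfrac12 J\dot S_t S_t^{-1} z\cdot z$ with $z=(x,p)$ into $x$-block, $p$-block, and cross-block contributions then produces exactly the three terms displayed in (\ref{hamzi}).

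I do not expect a serious conceptual obstacle here; the whole argument reduces to bookkeeping once the symplectic inverse is in hand. The one place where a check is needed is the cross term: the $x$-$p$ block and the $p$-$x$ block of $J\dot S_t S_t^{-1}$ must combine into a single coefficient of $p\cdot x$, and this requires verifying that those blocks are mutual transposes. This is precisely the content of the differentiated symplectic relation obtained from $A_t D_t^{T} - B_t C_t^{T}=I$, namely $\dot A_t D_t^{T} + A_t \dot D_t^{T} = \dot B_t C_t^{T} + B_t \dot C_t^{T}$, which is the only nontrivial identity invoked in the calculation.
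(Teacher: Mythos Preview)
Your proposal is correct and follows essentially the same route as the paper: specialize formula (\ref{hzt}) using the linearity of $S_t$ to pull the factor $\lambda$ out of the integral, then for part (ii) compute $S_t^{-1}$ in block form from $S_t J S_t^{T}=J$, multiply out $J\dot S_t S_t^{-1}$, and read off the quadratic form. Your write-up is in fact slightly more detailed than the paper's (the Lie-algebra argument for the symmetry of $J\dot S_t S_t^{-1}$ and the explicit check that the off-diagonal blocks are mutual transposes), but the underlying argument is the same.
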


\begin{proof}
(i) Applying formula (\ref{hzt}) we get%
\begin{equation}
H(z,t)=-\int_{0}^{1}\sigma\left(  \dot{S}_{t}S_{t}^{-1}(\lambda z),z\right)
d\lambda\label{hamzobis}%
\end{equation}
which yields%
\[
H(z,t)=\frac{1}{2}\sigma(z,J\dot{S}_{t}S_{t}^{-1}z)
\]
hence (\ref{hamzo}), taking into account the linearity of $\sigma$ and $S_{t}%
$. (ii) It follows from the identity $S_{t}JS_{t}^{T}=J$ that%
\begin{equation}
S_{t}^{-1}=%
\begin{pmatrix}
D_{t}^{T} & -B_{t}^{T}\\
-C_{t}^{T} & A_{t}^{T}%
\end{pmatrix}
\end{equation}
and hence%
\[
J\dot{S}_{t}S_{t}^{-1}=%
\begin{pmatrix}
\dot{C}_{t}D_{t}^{T}-\dot{D}_{t}C_{t}^{T} & \dot{D}_{t}A_{t}^{T}-\dot{C}%
_{t}B_{t}^{T}\\
\dot{B}_{t}C_{t}^{T}-\dot{A}_{t}D_{t}^{T} & \dot{A}_{t}B_{t}^{T}-\dot{B}%
_{t}A_{t}^{T}%
\end{pmatrix}
;
\]
formula (\ref{hamzi}) readily follows.
\end{proof}

\subsection{The Conley--Zehnder index of a symplectic isotopy}

The Conley--Zehnder index $i_{\mathrm{CZ}}(\Sigma)$ for symplectic paths was
introduced in \cite{CZ} in the context of the study of Hamiltonian periodic
orbits in $\mathbb{R}^{2n}$. Meinrenken \cite{mein} has considerably extended
this index and applied it to Gutzwiller-type trace formulas \cite{mein1} (also
see the recent papers \cite{deng,shasun}). In \cite{useful} one of us has
shown that Conley--Zehnder index can be viewed as a particular case of an
index due to\ Leray which generalizes the Maslov index.

The vocation of $i_{\mathrm{CZ}}(\Sigma)$ is to count the intersections of a
symplectic path $\Sigma=(S_{t})_{0\leq t\leq1}$ with the manifold
\[
\operatorname*{Sp}\nolimits_{0}(n)=\{S\in\operatorname*{Sp}(n):\det(S-I)=0\}.
\]
Let us describe the original construction of the Conley--Zehnder index; we
will need for that some preparatory material. Consider the following subsets
of $\operatorname*{Sp}(n)$:
\begin{align*}
\operatorname*{Sp}\nolimits_{+}(n)  &  =\{S\in\operatorname*{Sp}%
(n):\det(S-I)>0\}\\
\operatorname*{Sp}\nolimits_{-}(n)  &  =\{S\in\operatorname*{Sp}%
(n):\det(S-I)<0\};
\end{align*}
these sets are connected and contractible and, together with
$\operatorname*{Sp}\nolimits_{0}(n)$ form a partition of the symplectic group:%
\[
\operatorname*{Sp}(n)=\operatorname*{Sp}\nolimits_{0}(n)\cup\operatorname*{Sp}%
\nolimits_{+}(n)\cup\operatorname*{Sp}\nolimits_{-}(n).
\]
Consider now the particular symplectic matrices $S_{+}$ and $S_{-}$ defined
by
\[
S_{+}=-I\text{ \ and \ }S_{-}=%
\begin{pmatrix}
L & 0\\
0 & L^{-1}%
\end{pmatrix}
\text{ , \ }L=\operatorname*{diag}(2,-1,...,-1);
\]
it is straightforward to check that we have $S_{+}\in\operatorname*{Sp}%
\nolimits_{+}(n)$ and $S_{-}\in\operatorname*{Sp}\nolimits_{-}(n)$. Define now
an extension $\widetilde{\Sigma}$ of the symplectic path $\Sigma$ by%
\[
\widetilde{\Sigma}(t)=\left\{
\begin{array}
[c]{c}%
S_{t}\text{ , \ }0\leq t\leq1\\
S_{t}^{\prime}\text{ , \ }1\leq t\leq2
\end{array}
\right.
\]
where the $S_{t}^{\prime}\in\operatorname*{Sp}(n)$ are defined as follows: if
$S\in\operatorname*{Sp}\nolimits_{+}(n)$ then $(S_{t}^{\prime})_{1\leq t\leq
2}$ is a continuous path joining $S=S_{1}$ to $S_{+}$ in $\operatorname*{Sp}%
\nolimits_{+}(n)$, and if $S\in\operatorname*{Sp}\nolimits_{-}(n)$ then
$(S_{t}^{\prime})_{1\leq t\leq2}$ is a path joining $S$ to $S_{-}$in
$\operatorname*{Sp}\nolimits_{-}(n)$. Recalling \cite{Folland,Birk} that every
$S\in\operatorname*{Sp}(n)$ has a polar decomposition $S=PR$ where
$P=(S^{T}S)^{1/2}\in\operatorname*{Sp}(n)$ is positive definite and
$R=(S^{T}S)^{-1/2}S$ is in the unitary subgroup $U(n)$ of $\operatorname*{Sp}%
(n)$ \textit{(i.e.} $R^{T}R=RR^{T}=I$, see \cite{Birk}), we define
\[
R_{t}=\left(  \widetilde{\Sigma}^{T}(t)\widetilde{\Sigma}(t)\right)
^{-1/2}\widetilde{\Sigma}(t)\in U(n)
\]
hence $R_{t}$ is of the type%
\[
R_{t}=%
\begin{pmatrix}
A_{t} & B_{t}\\
-B_{t} & A_{t}%
\end{pmatrix}
\text{ \ , }u_{t}=A_{t}+iB_{t}\in U(n,\mathbb{C}).
\]
To the path $\widetilde{\Sigma}$ we associate a path $\gamma$ in $\mathbb{C}$
by the formula $\gamma(t)=(\det u_{t})^{2}$. We have $\gamma(0)=\gamma(2)=1$
and $|\gamma(t)|=1$ hence $\gamma$ is in fact a loop in $S^{1}$. The
Conley--Zehnder of $\Sigma$ is, by definition, the winding number of that
loop:%
\begin{equation}
i_{\mathrm{CZ}}(\Sigma)=\frac{1}{2\pi i}%
%TCIMACRO{\doint \nolimits_{\gamma}}%
%BeginExpansion
{\displaystyle\oint\nolimits_{\gamma}}
%EndExpansion
\frac{dz}{z}. \label{deficz}%
\end{equation}
We note that since $\operatorname*{Sp}\nolimits_{+}(n)$ and
$\operatorname*{Sp}\nolimits_{-}(n)$ are contractible, the integer
$i_{\mathrm{CZ}}(\Sigma)$ does not depend on the choice of the extension
$\widetilde{\Sigma}$ of the symplectic path $\Sigma$.

The main properties of the Conley--Zehnder index are summarized below. We
denote by $\mathcal{C(}\operatorname*{Sp}_{\pm}(n))$ the set of all symplectic
isotopies having their endpoint in $\operatorname*{Sp}_{\pm}%
(n)=\operatorname*{Sp}_{+}(n)\cup\operatorname*{Sp}_{-}(n)$. We denote by
$\Sigma\ast\Sigma^{\prime}$ the concatenation of two paths $\Sigma$ and
$\Sigma^{\prime}$.

\begin{proposition}
The index $i_{\mathrm{CZ}}$ is the only mapping $\mathcal{C(}%
\operatorname*{Sp}_{\pm}(n))\longrightarrow\mathbb{Z}$ having the following properties:

(\textbf{CZ1}) The integer $i_{\mathrm{CZ}}(\Sigma)$ only depends on the
homotopy class (with fixed endpoints) of $\Sigma$;

(\textbf{CZ2}) For every $\Sigma\in\operatorname*{Sp}_{\pm}(n)$ we have
$i_{\mathrm{CZ}}(\Sigma^{-1})=-i_{\mathrm{CZ}}(\Sigma)$;

(\textbf{CZ3}) Let $\Sigma\in\operatorname*{Sp}_{\pm}(n)$ have endpoint $S$
and let $\Sigma^{\prime}$ be a continuous path joining $S$ to $S^{\prime}$ in
the same connected component $\operatorname*{Sp}_{+}(n)$ or
$\operatorname*{Sp}_{-}(n)$ as $S$. Then $i_{\mathrm{CZ}}(\Sigma\ast
\Sigma^{\prime})=i_{\mathrm{CZ}}(\Sigma)$;

(\textbf{CZ4}) For every $r\in\mathbb{Z}$ we have $i_{\mathrm{CZ}}(\Sigma
\ast\alpha^{r})=i_{\mathrm{CZ}}(\Sigma)+2r$ ($\alpha$ the generator of
$\pi_{1}[\operatorname*{Sp}(n)]\equiv\pi_{1}[U(n)]$ whose image in
$\mathbb{Z}$ is $+1$).
\end{proposition}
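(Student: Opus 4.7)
The plan is to prove the two halves of the statement in sequence: first, that the winding-number prescription (\ref{deficz}) does define a map $i_{\mathrm{CZ}}:\mathcal{C}(\operatorname*{Sp}_\pm(n))\to\mathbb{Z}$ satisfying the axioms (\textbf{CZ1})--(\textbf{CZ4}); and second, that any other map $\mu$ with the same properties must coincide with it.

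For the existence part, I would verify each axiom directly from the description of $i_{\mathrm{CZ}}$ in terms of the loop $\gamma(t)=(\det u_t)^2$ in $S^1$. Axiom (\textbf{CZ1}) reduces to the homotopy invariance of the winding number: a homotopy of $\Sigma$ rel.\ endpoints lifts to a homotopy of the capped path $\widetilde{\Sigma}$ because the endpoints stay in one of the contractible sets $\operatorname*{Sp}_\pm(n)$ and the capping path can therefore be deformed continuously; the resulting loops $\gamma$ are homotopic in $S^1$. Axiom (\textbf{CZ2}) follows because the pointwise inverse $t\mapsto S_t^{-1}$ produces, via the polar decomposition, the conjugate loop $\overline{\gamma(t)}$, whose winding number is $-i_{\mathrm{CZ}}(\Sigma)$. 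Axiom (\textbf{CZ3}) uses the fact that if $\Sigma'$ stays within one component of $\operatorname*{Sp}_\pm(n)$, the new capping for $\Sigma\ast\Sigma'$ can be taken as the reverse of $\Sigma'$ followed by the old capping of $\Sigma$, and contractibility of $\operatorname*{Sp}_\pm(n)$ makes this homotopic to the original capping. Axiom (\textbf{CZ4}) is essentially tautological: the generator $\alpha$ of $\pi_1[\operatorname*{Sp}(n)]\equiv\pi_1[U(n)]$ is a loop on which $\det u_t$ winds once, and since $\gamma=(\det u_t)^2$ involves the square, each copy of $\alpha$ contributes $+2$ to the winding.

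For the uniqueness part, let $\mu$ be any map satisfying the four axioms, and let $\Sigma\in\mathcal{C}(\operatorname*{Sp}_\pm(n))$ have endpoint $S$. Axiom (\textbf{CZ3}) lets me replace $S$ by the standard model $S_+$ or $S_-$ according to the component of $S$, so it suffices to know $\mu$ on paths with endpoint $S_+$ or $S_-$. Fix reference paths $\Sigma_+^0$ and $\Sigma_-^0$ from $I$ to $S_+$ and $S_-$ respectively. Any $\Sigma$ ending at $S_\pm$ is then homotopic rel.\ endpoints to $\Sigma_\pm^0\ast\alpha^{r}$ for the integer $r$ that records its $\pi_1$-class, and axioms (\textbf{CZ1}) and (\textbf{CZ4}) give
\[
\mu(\Sigma)=\mu(\Sigma_\pm^0)+2r,
\]
so $\mu$ is determined once $\mu(\Sigma_\pm^0)$ are known. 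To pin these down I apply (\textbf{CZ2}) to the reference paths: the pointwise inverse $(\Sigma_\pm^0)^{-1}$ ends at $S_\pm^{-1}$, which lies in the same component as $S_\pm$ since $\det(S-I)$ and $\det(S^{-1}-I)$ have the same sign. Via (\textbf{CZ3}) and (\textbf{CZ1}), $(\Sigma_\pm^0)^{-1}$ is therefore homotopic to $\Sigma_\pm^0\ast\alpha^{k_\pm}$ for a purely topological integer $k_\pm$, and combining (\textbf{CZ2}) and (\textbf{CZ4}) gives $-\mu(\Sigma_\pm^0)=\mu(\Sigma_\pm^0)+2k_\pm$, hence $\mu(\Sigma_\pm^0)=-k_\pm$. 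Both $\mu$ and $i_{\mathrm{CZ}}$ are forced to take this value, so they coincide everywhere.

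The step I expect to be the main obstacle is the last one in the uniqueness argument, namely confirming that the integer $k_\pm$ is well-defined and consistent with $i_{\mathrm{CZ}}(\Sigma_\pm^0)$ computed from the winding-number definition. For a concrete model such as $\Sigma_+^0:t\mapsto e^{\pi t J}$ in one dimension, one checks directly that $(\Sigma_+^0)^{-1}$ has $\pi_1$-class $k_+=-1$ and $i_{\mathrm{CZ}}(\Sigma_+^0)=1$; the higher-dimensional and the $\operatorname*{Sp}_-(n)$ cases reduce to analogous bookkeeping in the unitary factor. A secondary technical point is justifying that the capping procedure behaves well under inversion and concatenation, but this is essentially a consequence of the openness and contractibility of $\operatorname*{Sp}_\pm(n)$.
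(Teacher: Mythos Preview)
The paper does not actually prove this proposition: its ``proof'' consists of a citation to Hofer--Wysocki--Zehnder and to \cite{Birk}, \S 4.3.1, together with the single remark that (\textbf{CZ1}) follows from the contractibility of $\operatorname*{Sp}_\pm(n)$. So there is no internal argument to compare against, and your outline is considerably more substantive than what the paper offers.

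Your overall plan is the standard one and is sound. Two points deserve more care than you give them. First, in the verification of (\textbf{CZ2}) from the winding-number definition, the unitary factor in the polar decomposition of $S_t^{-1}$ is not literally $R_t^{-1}$ but its conjugate $P_tR_t^TP_t^{-1}$; you need the conjugation-invariance of $\det$ to conclude that the associated $S^1$-valued map is $\overline{\gamma}$. More importantly, the capping of $\Sigma^{-1}$ is not simply the pointwise inverse of the capping of $\Sigma$: the latter ends at $S_\pm^{-1}$, which equals $S_+$ when $S\in\operatorname*{Sp}_+(n)$ but is only in the same component as $S_-$ when $S\in\operatorname*{Sp}_-(n)$. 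You flag this as a ``secondary technical point'', and indeed contractibility handles it, but it is where the actual work in (\textbf{CZ2}) lies. Second, in the uniqueness argument, your reduction $\mu(\Sigma_\pm^0)=-k_\pm$ is correct, but you should note that $k_\pm$ is a purely topological quantity independent of $\mu$, so that the equation genuinely pins down $\mu(\Sigma_\pm^0)$ rather than just constraining it. With these clarifications your argument goes through.
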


\begin{proof}
See \cite{Hofer} and \cite{Birk}, \S 4.3.1. Observe that (\textbf{CZ1}) is an
immediate consequence of the definition (\ref{deficz}) of the Conley--Zehnder
index since $\operatorname*{Sp}\nolimits_{+}(n)$ and $\operatorname*{Sp}%
\nolimits_{-}(n)$ are contractible.
\end{proof}

We moreover have the following important conjugation property:

\begin{proposition}
Let $\Sigma=(S_{t})_{t\in I}$ be a symplectic isotopy in $\operatorname*{Sp}%
(n)$ with endpoint $S\notin\operatorname*{Sp}_{0}(n)$.

(i) $i_{\mathrm{CZ}}(\Sigma)$ is locally constant on the set of all $\Sigma$
with fixed $\dim\operatorname*{Ker}(S-I)$;

(ii) For every $R\in\operatorname*{Sp}(n)$ we have%
\begin{equation}
i_{\mathrm{CZ}}(R\Sigma R^{-1})=i_{\mathrm{CZ}}(\Sigma) \label{fund}%
\end{equation}
where $R\Sigma R^{-1}=(RS_{t}R^{-1})_{t\in I}$.
\end{proposition}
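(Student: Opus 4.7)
The plan is to reduce both statements to the definition of $i_{\mathrm{CZ}}(\Sigma)$ as the winding number of the loop $\gamma(t)=(\det u_t)^2$ in $S^1$. The hypothesis $S\notin \operatorname{Sp}_0(n)$ guarantees that the endpoint lies in the open set $\operatorname{Sp}_{+}(n)\cup\operatorname{Sp}_{-}(n)$, which is precisely what is needed to produce the extension $\widetilde{\Sigma}$ used in (\ref{deficz}). With (i) established, (ii) will follow from the path-connectedness of $\operatorname{Sp}(n)$ and the conjugation-invariance of $\det(\,\cdot\,-I)$.

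For (i), I would first note that when $S\notin\operatorname{Sp}_0(n)$ the condition $\dim\operatorname{Ker}(S-I)$ constant is equivalent to $S$ remaining in the same connected component $\operatorname{Sp}_{+}(n)$ or $\operatorname{Sp}_{-}(n)$. Given a perturbation $\Sigma'$ close to $\Sigma$ in the $C^0$ topology, its endpoint $S'$ lies in the same open component, and since these components are contractible the extension $\widetilde{\Sigma}'$ can be chosen continuously close to $\widetilde{\Sigma}$. Polar decomposition $S\mapsto (S^{T}S)^{-1/2}S$ is a smooth map, so the unitary parts $R_t'$ remain close to $R_t$, hence the associated loop $\gamma'$ is close to $\gamma$ in sup-norm. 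Since the winding number is locally constant on the space of loops in $S^1$, we conclude $i_{\mathrm{CZ}}(\Sigma')=i_{\mathrm{CZ}}(\Sigma)$. (Alternatively, one can deduce the same conclusion from \textbf{CZ1} and \textbf{CZ3} after connecting $S$ to $S'$ by a short path inside $\operatorname{Sp}_{\pm}(n)$.)

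For (ii), I would exploit the path-connectedness of $\operatorname{Sp}(n)$ (which follows from its deformation retract onto $U(n)$) to pick a continuous path $(R_s)_{0\leq s\leq 1}$ in $\operatorname{Sp}(n)$ joining $I$ to $R$, and form the family
\[
\Sigma_s=(R_s S_t R_s^{-1})_{t\in I}.
\]
This varies continuously with $s$ in the space of symplectic isotopies. Its endpoint is $R_s S R_s^{-1}$, and the identity
\[
\det(R_s S R_s^{-1}-I)=\det\!\bigl(R_s(S-I)R_s^{-1}\bigr)=\det(S-I)
\]
shows that the endpoint stays in the same component $\operatorname{Sp}_{\pm}(n)$ as $S$ for every $s$. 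Part (i) then forces $s\mapsto i_{\mathrm{CZ}}(\Sigma_s)$ to be locally constant on $[0,1]$, hence constant by connectedness, and we obtain $i_{\mathrm{CZ}}(R\Sigma R^{-1})=i_{\mathrm{CZ}}(\Sigma)$.

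The main obstacle is the first step: verifying that the winding-number defining $i_{\mathrm{CZ}}$ really is stable under $C^0$-small perturbations of $\Sigma$. The subtlety is that $i_{\mathrm{CZ}}$ is defined through an auxiliary extension $\widetilde{\Sigma}$, so one has to make sure this extension can be chosen to depend continuously on $\Sigma$ in a neighbourhood of a given path. The contractibility of $\operatorname{Sp}_{+}(n)$ and $\operatorname{Sp}_{-}(n)$ is precisely the ingredient that removes this obstacle and makes the local-constancy argument watertight; once this is in hand, (ii) is essentially a formal consequence.
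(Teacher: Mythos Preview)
Your argument is correct and, for part (ii), coincides verbatim with the paper's: connect $R$ to the identity in $\operatorname{Sp}(n)$, observe that $\det(R_sSR_s^{-1}-I)=\det(S-I)$ keeps the endpoint in the same component $\operatorname{Sp}_{\pm}(n)$, and invoke (i). For part (i) the paper simply cites Meinrenken, whereas you supply an explicit stability argument via the winding-number definition (\ref{deficz}) and the contractibility of $\operatorname{Sp}_{\pm}(n)$; your alternative route through (\textbf{CZ1}) and (\textbf{CZ3}) is cleaner still and avoids having to argue that the extension $\widetilde{\Sigma}$ can be chosen to depend continuously on $\Sigma$.
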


\begin{proof}
See Meinrenken \cite{mein} (Proposition 6). Formula (\ref{fund}) follows from
(i) since $i_{\mathrm{CZ}}(R\Sigma R^{-1})$ is invariant if one connects $R$
to the identity in $\operatorname*{Sp}(n)$ (alternatively, it readily follows
from definition (\ref{deficz})).
\end{proof}

One of us has defined \cite{Birk,RMPCZ,useful} the Conley--Zehnder index of
symplectic path without any restriction on the endpoint using the properties
of the Leray index (see Appendix B), to which it is closely related.
Introducing the symplectic form $\sigma^{\ominus}=\sigma\oplus(-\sigma)$ on
$\mathbb{R}^{2n}\oplus\mathbb{R}^{2n}\equiv\mathbb{R}^{2n}\times
\mathbb{R}^{2n}$ and denoting by $\operatorname*{Sp}\nolimits^{\ominus}(2n)$
and $\operatorname*{Lag}^{\ominus}(2n)$ the corresponding symplectic group and
Grassmannian Lagrangian, the following results identifies the Conley--Zehnder
index as previously defined with the Leray index:

\begin{proposition}
\label{PropLCZ}Let $\Sigma=(S_{t})_{t\in I}$ be an arbitrary symplectic
isotopy in $\operatorname*{Sp}(n)$ with endpoint $S$. Let%
\begin{equation}
\nu(\Sigma)=\frac{1}{2}\mu_{\Delta}^{\ominus}((I\oplus S)_{\infty})
\label{defcz}%
\end{equation}
where $\mu_{\Lambda}^{\ominus}$ is the $\Delta$-Maslov index on the universal
covering group $\operatorname*{Sp}\nolimits_{\infty}^{\ominus}(2n)$ of
$\operatorname*{Sp}\nolimits^{\ominus}(2n)$ with $\Delta=\{(z,z):z\in
\mathbb{R}^{2n}\}$ and $(I\oplus S)_{\infty}\in\operatorname*{Sp}%
\nolimits_{\infty}^{\ominus}(2n)$ the homotopy class of the path
\[
I\ni t\longmapsto\{(z,S_{t}z):z\in\mathbb{R}^{2n}\}\in\operatorname*{Lag}%
\nolimits^{\ominus}(2n).
\]
We have $\nu(\Sigma)\in\mathbb{Z}$ and
\begin{equation}
\nu(\Sigma)=i_{\mathrm{CZ}}(\Sigma) \label{eqCZ}%
\end{equation}
when $S\notin\operatorname*{Sp}_{0}(n)$.
\end{proposition}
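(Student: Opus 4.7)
The plan is to reduce the identification $\nu(\Sigma)=i_{\mathrm{CZ}}(\Sigma)$ to the uniqueness characterization of the Conley--Zehnder index stated in the preceding proposition, by verifying that $\nu$ satisfies (\textbf{CZ1})--(\textbf{CZ4}). Along the way I must also show that $\nu(\Sigma)\in\mathbb{Z}$ (the factor $\tfrac12$ in (\ref{defcz}) does give an integer). The whole argument rests on interpreting the graph of $S_t$ as a path in $\operatorname{Lag}^\ominus(2n)$ and invoking the properties of the Leray index collected in Appendix~B.

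First I would set up the geometric dictionary. The graph $\ell_t=\{(z,S_tz):z\in\mathbb{R}^{2n}\}$ is Lagrangian in $(\mathbb{R}^{2n}\oplus\mathbb{R}^{2n},\sigma^\ominus)$, and $\ell_t\cap\Delta$ is canonically identified with $\ker(S_t-I)$. Hence the transversality condition $S\notin\operatorname{Sp}_0(n)$ is equivalent to $\ell_1\cap\Delta=0$, which is exactly the condition required for $\mu_\Delta^\ominus$ to take its ``classical'' value on $(I\oplus S)_\infty$ via the Leray machinery of Appendix~B.

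Next I would establish integrality. By Appendix~B, $\mu_\Delta^\ominus$ is $\mathbb{Z}$-valued on $\operatorname{Sp}^\ominus_\infty(2n)$, but on the image of the embedding $\operatorname{Sp}_\infty(n)\hookrightarrow\operatorname{Sp}^\ominus_\infty(2n)$, $S_\infty\mapsto(I\oplus S)_\infty$, it takes only even values: the generator $\alpha$ of $\pi_1[\operatorname{Sp}(n)]\simeq\pi_1[U(n)]$ lifts to an element with $\mu_\Delta^\ominus((I\oplus\alpha)_\infty)=4$, and the Leray index is locally constant on the transversal stratum and jumps by $2$ at each simple crossing with $\operatorname{Sp}_0(n)$ in the graph picture. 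Combined, this forces $\mu_\Delta^\ominus((I\oplus S)_\infty)\in 2\mathbb{Z}$ whenever $\ell_1\pitchfork\Delta$, so $\nu(\Sigma)\in\mathbb{Z}$.

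Finally I would check the four axioms for $\nu$. (\textbf{CZ1}) is automatic because $\mu_\Delta^\ominus$ is defined on the universal cover and thus depends only on the homotopy class of $\Sigma$ with fixed endpoints. (\textbf{CZ2}) follows from the antisymmetry of the Leray index under group inversion, $\mu_\Delta^\ominus(g_\infty^{-1})=-\mu_\Delta^\ominus(g_\infty)$, applied to $g=I\oplus S$. (\textbf{CZ3}) reduces to the statement that $\mu_\Delta^\ominus$ is locally constant on the set of lifts whose endpoint Lagrangian is transversal to $\Delta$; since $\operatorname{Sp}_+(n)$ and $\operatorname{Sp}_-(n)$ are each contractible and contained in the transversal locus, the index cannot change as long as $S'$ stays in the same connected component. (\textbf{CZ4}) is the normalization, which results from the loop computation $\mu_\Delta^\ominus((I\oplus\alpha)_\infty)=4$ performed in Step~2: concatenating $r$ copies of $\alpha$ adds $4r$ to $\mu_\Delta^\ominus$, hence $2r$ to $\nu$. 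By the uniqueness part of the previous proposition, $\nu$ and $i_{\mathrm{CZ}}$ coincide on $\mathcal{C}(\operatorname{Sp}_\pm(n))$, which is (\ref{eqCZ}).

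The main obstacle is Step~2: proving that the Leray $\Delta$-index of the lifted loop $(I\oplus\alpha)_\infty$ equals precisely $4$ (and more generally establishing the parity that makes $\nu$ integral). This is the only point where one must actually compute rather than manipulate axiomatic properties; it relies on the explicit description of $\mu_\Lambda^\ominus$ from Appendix~B together with the polar decomposition used in the definition~(\ref{deficz}) of $i_{\mathrm{CZ}}$, and it is what fixes the numerical factor $\tfrac12$ in (\ref{defcz}).
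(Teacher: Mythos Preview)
Your overall plan---verify (\textbf{CZ1})--(\textbf{CZ4}) for $\nu$ and invoke the uniqueness proposition---is exactly what the paper does, and your handling of the four axioms via the Appendix~B properties (local constancy for (\textbf{CZ1}) and (\textbf{CZ3}), formula (\ref{inversion}) for (\textbf{CZ2}), formula (\ref{alphamaslov}) for (\textbf{CZ4})) matches the paper's sketch, which simply cites property (MA) and refers to \cite{Birk,useful} for the details.

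The one substantive difference, and a gap in your argument, is the integrality step. Your crossing-and-normalization argument only concludes $\mu_\Delta^\ominus((I\oplus S)_\infty)\in 2\mathbb{Z}$ ``whenever $\ell_1\pitchfork\Delta$'', i.e.\ for $S\notin\operatorname{Sp}_0(n)$; but the proposition asserts $\nu(\Sigma)\in\mathbb{Z}$ for an \emph{arbitrary} endpoint $S$, and your argument says nothing about the non-transversal case. The paper's route is both shorter and covers all endpoints at once: it applies the parity congruence (\ref{mll'}) in the doubled space to get
\[
\mu_\Delta^\ominus((I\oplus S)_\infty)\equiv 2n+\dim\bigl((I\oplus S)\Delta\cap\Delta\bigr)=2n+\dim\operatorname*{Ker}(S-I)\ \ \operatorname{mod}2,
\]
and then invokes the fact that the eigenvalue $1$ of a symplectic matrix has even multiplicity, so that $\dim\operatorname*{Ker}(S-I)$ is even and hence $\mu_\Delta^\ominus$ is even. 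No crossing bookkeeping is needed, and no transversality hypothesis enters. Your loop computation $\mu_\Delta^\ominus((I\oplus\alpha)_\infty)=4$ is still what underlies (\textbf{CZ4}), but for integrality itself the congruence (\ref{mll'}) is the tool the paper uses, and it closes the case you left open.
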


\begin{proof}
That $\frac{1}{2}\mu_{\Delta}^{\ominus}((I\oplus S)_{\infty})\in\mathbb{Z}$
can be seen as follows: in view of the congruence in (\ref{mll'}) we have%
\[
\mu_{\Delta}^{\ominus}((I\oplus S)_{\infty})\equiv2n+\dim((I\oplus
S)\Delta\cap\Delta)\text{ \ }\operatorname{mod}2.
\]
But
\[
(I\oplus S)\Delta\cap\Delta=\{z\in\mathbb{R}^{2n}:Sz=z\}=\operatorname*{Ker}%
(S-I)
\]
so that
\[
\mu_{\Delta}^{\ominus}((I\oplus S)_{\infty})=2n+\dim\operatorname*{Ker}%
(S-I)\text{ \ }\operatorname{mod}2.
\]
The eigenvalue $1$ of a symplectic mapping having even multiplicity the
integer $\dim\operatorname*{Ker}(S-I)$ is always even and so is thus
$\mu_{\Delta}^{\ominus}((I\oplus S)_{\infty})$. Using the characteristic
property (MA) of the relative Maslov index (Appendix B2) it is easy to show
that the restriction of $\nu$ to $\operatorname*{Sp}_{+}(n)\cup
\operatorname*{Sp}_{-}(n)$ satisfies the properties (CZ1)--(CZ4) of the
Conley--Zehnder index (for a detailed argument see \cite{Birk}, \S 4.3.3 or
\cite{useful}. We thus have $\nu(\Sigma)=i_{\mathrm{CZ}}(\Sigma)$ for such paths.
\end{proof}

To study the Conley--Zehnder index of products of symplectic isotopies we need
the notion of symplectic Cayley transform of $S\in\operatorname*{Sp}_{0}(n)$.
It is, by definition \cite{Birk,RMPCZ}, the symmetric $2n\times2n$ matrix%
\begin{equation}
M(S)=\frac{1}{2}J(S+I)(S-I)^{-1}=\frac{1}{2}J+J(S-I)^{-1}. \label{symcay1}%
\end{equation}
It has the following properties:
\begin{equation}
M(S^{-1})=-M(S)\text{ \ and \ }R^{T}M(S)R=M(R^{-1}SR) \label{conjcay}%
\end{equation}
for $S\notin\operatorname*{Sp}\nolimits_{0}(n)$ and $R\in\operatorname*{Sp}%
(n)$. We will use several times in this paper following addition result:

\begin{lemma}
\label{leminv}Let $S,S^{\prime}\in\operatorname*{Sp}_{0}(n)$. If $SS^{\prime
}\in\operatorname*{Sp}_{0}(n)$ then $M=M(S)+M(S^{\prime})$ is invertible and
we have%
\begin{equation}
M=J(S-I)^{-1}(SS^{\prime}-I)(S^{\prime}-I)^{-1}. \label{MJ}%
\end{equation}

\end{lemma}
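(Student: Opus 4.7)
The plan is to exploit the additive form of the Cayley transform in (\ref{symcay1}), namely $M(S) = \tfrac{1}{2}J + J(S-I)^{-1}$ (and likewise for $S'$), rather than the equivalent product form. Summing then gives the clean expression
\[
M(S) + M(S') \;=\; J\bigl[I + (S-I)^{-1} + (S'-I)^{-1}\bigr],
\]
so the content of (\ref{MJ}) reduces to proving the purely algebraic identity
\[
I + (S-I)^{-1} + (S'-I)^{-1} \;=\; (S-I)^{-1}(SS'-I)(S'-I)^{-1}.
\]

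To establish this identity I would clear denominators by multiplying on the left by $(S-I)$ and on the right by $(S'-I)$, both of which are invertible under the standing assumption that $M(S)$ and $M(S')$ are defined. This reduces the claim to
\[
(S-I)(S'-I) + (S-I) + (S'-I) \;=\; SS' - I,
\]
which is a one-line distributive check: the left-hand side expands to $SS' - S - S' + I + S - I + S' - I = SS' - I$. Substituting back into the sum of Cayley transforms yields formula (\ref{MJ}).

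For the invertibility claim I would simply read off (\ref{MJ}): the right-hand side is the product of $J$, $(S-I)^{-1}$, $(SS'-I)$, and $(S'-I)^{-1}$, each of which is invertible---the outer factors because $S-I$ and $S'-I$ are invertible by assumption, and the middle factor $SS'-I$ because of the additional hypothesis on $SS'$. Hence $M$ is a product of four invertible $2n\times 2n$ matrices, and therefore invertible.

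I do not anticipate any genuine obstacle; the lemma is a short linear-algebraic identity dressed up in Cayley-transform notation. The only choice that makes the argument smooth is the use of the additive form $M(S)=\tfrac{1}{2}J + J(S-I)^{-1}$ from (\ref{symcay1}) rather than the product form $\tfrac{1}{2}J(S+I)(S-I)^{-1}$; the latter would force one to carry around a spurious $(S+I)$ factor and verify the identity through a noticeably more cumbersome computation.
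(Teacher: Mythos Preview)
Your argument is correct and complete. The paper itself does not prove this lemma at all---it simply cites \cite{Birk}, \S 4.3.2, Lemma 4.1.4---so your self-contained computation using the additive form $M(S)=\tfrac{1}{2}J+J(S-I)^{-1}$ in fact supplies more than the paper does. (One minor point of caution: the paper uses the symbol $\operatorname{Sp}_0(n)$ inconsistently, once for $\det(S-I)=0$ and once, in (\ref{A9}), for $\det(S-I)\neq 0$; the lemma only makes sense under the latter reading, which is what you have tacitly and correctly assumed.)
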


\begin{proof}
See \cite{Birk}, \S 4.3.2, Lemma 4.1.4.
\end{proof}

In the following result we give an explicit expression for the Conley--Zehnder
index of the product of two symplectic isotopies:

\begin{proposition}
Let $\Sigma=(S_{t})_{t\in I}$ and $\Sigma^{\prime}=(S_{t}^{\prime})_{t\in I}$
be two symplectic isotopies and set $\Sigma\Sigma^{\prime}=(S_{t}S_{t}%
^{\prime})_{t\in I}$ with endpoints $S$ and $S^{\prime}$ not in
$\operatorname*{Sp}\nolimits_{0}(n)$. If $SS^{\prime}\notin\operatorname*{Sp}%
\nolimits_{0}(n)$ then
\begin{equation}
\nu(\Sigma\Sigma^{\prime})=\nu(\Sigma)+\nu(\Sigma^{\prime})+\tfrac{1}%
{2}\operatorname*{sign}(M(S)+M(S^{\prime})) \label{modprod}%
\end{equation}
where $M(S)$ and $M(S^{\prime})$ are the symplectic Cayley transforms of $S$
and $S^{\prime}$, and $\operatorname*{sign}M$ is the signature of the
invertible symmetric matrix $M$.
\end{proposition}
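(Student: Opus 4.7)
The plan is to reduce the stated identity to a cocycle relation for the Leray–Maslov index (via the identification given in Proposition \ref{PropLCZ}), and then to evaluate the signature correction explicitly using the algebra of symplectic Cayley transforms provided by Lemma \ref{leminv}.

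First I would rewrite the three Conley–Zehnder indices in their Leray form. By Proposition \ref{PropLCZ} we have $2\nu(\Sigma)=\mu_{\Delta}^{\ominus}((I\oplus S)_{\infty})$ and similarly for $\Sigma'$ and $\Sigma\Sigma'$. The key point is that in the universal covering group $\operatorname{Sp}_{\infty}^{\ominus}(2n)$ the class $(I\oplus SS')_{\infty}$ coincides with the product $(I\oplus S)_{\infty}\cdot(I\oplus S')_{\infty}$: indeed the pointwise product of the lifting paths $t\mapsto I\oplus S_{t}$ and $t\mapsto I\oplus S'_{t}$ is exactly $t\mapsto I\oplus S_{t}S'_{t}$, and this path is homotopic (with fixed endpoints) to the concatenation representing the group product in the universal cover.

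Next I would invoke the general cocycle/coboundary property of $\mu_{\Delta}^{\ominus}$ on $\operatorname{Sp}_{\infty}^{\ominus}(2n)$ (Appendix~B),
\[
\mu_{\Delta}^{\ominus}(g_{\infty}g'_{\infty})=\mu_{\Delta}^{\ominus}(g_{\infty})+\mu_{\Delta}^{\ominus}(g'_{\infty})+\tau^{\ominus}\!\bigl(\Delta,\,g\Delta,\,gg'\Delta\bigr),
\]
where $\tau^{\ominus}$ is the Wall–Kashiwara signature cocycle attached to $\sigma^{\ominus}$. Specialising to $g=I\oplus S$ and $g'=I\oplus S'$ and dividing by $2$ produces the desired decomposition, provided I can identify
\[
\tfrac{1}{2}\tau^{\ominus}\bigl(\Delta,\,(I\oplus S)\Delta,\,(I\oplus SS')\Delta\bigr)=\tfrac{1}{2}\operatorname{sign}\bigl(M(S)+M(S')\bigr).
\]

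The last identification is the real work, and I expect it to be the main obstacle. The triple $\Delta$, $(I\oplus S)\Delta$, $(I\oplus SS')\Delta$ is a triple of Lagrangians in $(\mathbb{R}^{4n},\sigma^{\ominus})$ that meet pairwise transversally precisely when $S,S',SS'\notin \operatorname{Sp}_{0}(n)$, which is our hypothesis; hence $\tau^{\ominus}$ is given by the signature of a concrete symmetric quadratic form on $\Delta\oplus (I\oplus S)\Delta\oplus (I\oplus SS')\Delta$ in the style of Wall. I would choose coordinates adapted to the graphs of $I$, $S$ and $SS'$ and reduce that quadratic form to one on $\mathbb{R}^{2n}$. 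Here the symplectic Cayley transform enters naturally, because $M(S)=\tfrac{1}{2}J+J(S-I)^{-1}$ is precisely the symmetric matrix that encodes the graph of $S$ viewed as a Lagrangian in $(\mathbb{R}^{2n}\oplus\mathbb{R}^{2n},\sigma^{\ominus})$ relative to $\Delta$. Using Lemma \ref{leminv}, which asserts that $M(S)+M(S')$ is invertible under the assumption $SS'\notin\operatorname{Sp}_{0}(n)$ and satisfies $M(S)+M(S')=J(S-I)^{-1}(SS'-I)(S'-I)^{-1}$, the quadratic form on the reduced space turns out to be congruent to $M(S)+M(S')$, so that the Wall–Kashiwara signature equals $\operatorname{sign}(M(S)+M(S'))$, yielding the announced formula.

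Finally I would remark that the integrality of $\nu(\Sigma\Sigma')$, $\nu(\Sigma)$, $\nu(\Sigma')$ forces $\operatorname{sign}(M(S)+M(S'))$ to be even in this setting, consistent with the parity calculation made in the proof of Proposition \ref{PropLCZ}, and that the formula reduces to the standard $i_{\mathrm{CZ}}$-identity when all endpoints avoid $\operatorname{Sp}_{0}(n)$, via (\ref{eqCZ}).
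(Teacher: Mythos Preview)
Your approach is correct and is essentially the one taken in the references the paper cites (\cite{Birk}, Proposition~4.20, and \cite{RMPCZ}): the paper itself does not give an in-text proof but defers to those sources, which proceed exactly via the identification $2\nu(\Sigma)=\mu_{\Delta}^{\ominus}((I\oplus S)_{\infty})$ of Proposition~\ref{PropLCZ}, the product formula~(\ref{product}) for the relative Maslov index, and the explicit evaluation of the signature cocycle $\tau^{\ominus}(\Delta,(I\oplus S)\Delta,(I\oplus SS')\Delta)$ in terms of the symplectic Cayley transforms. The one place where your sketch is genuinely incomplete is the last step---showing that this Kashiwara--Wall signature equals $\operatorname{sign}(M(S)+M(S'))$; you correctly flag it as the main obstacle, and in the cited sources it is carried out by parametrising the three graph Lagrangians by $z\mapsto(z,z)$, $z\mapsto(z,Sz)$, $z\mapsto(z,SS'z)$ and reducing the defining quadratic form of $\tau^{\ominus}$ to a form on $\mathbb{R}^{2n}$ whose matrix is precisely $M(S)+M(S')$ (Lemma~\ref{leminv} guaranteeing its nondegeneracy).
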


\begin{proof}
See \cite{Birk} (Proposition 4.20) or \cite{RMPCZ} pp. 1163--1164 for detailed
proofs. Notice that $\operatorname*{sign}M=\operatorname*{sign}%
(M(S)+M(S^{\prime}))$ is an even integer since $M$ is invertible.
\end{proof}

\subsection{Metaplectic isotopies\label{sec22}}

The metaplectic group $\operatorname*{Mp}(n)$ being a twofold covering of the
symplectic group (see Appendix A), it follows from the path lifting property
for covering groups that every symplectic isotopy $\Sigma=(S_{t})_{t\in I}$ in
$\operatorname*{Sp}(n)$ can be lifted in a unique way to a path
$\widehat{\Sigma}:t\longmapsto\widehat{S}_{t}$ ($t\in I$) in
$\operatorname*{Mp}(n)$ such that $\widehat{S}_{0}=I_{\mathrm{d}}$
(\textit{i.e.} $\pi^{\operatorname*{Mp}}(\widehat{S}_{t})=S_{t}$). Conversely,
every such path (which we call a metaplectic isotopy) covers a symplectic
isotopy. The following result is well-known \cite{Folland,Birk}:

\begin{proposition}
Let $\widehat{\Sigma}=(\widehat{S}_{t})_{t\in I}$ be a metaplectic isotopy and
$\Sigma=(S_{t})_{t\in I}$ the symplectic isotopy it covers. We have%
\[
i\hbar\frac{d}{dt}\widehat{S}_{t}=\widehat{H}\widehat{S}_{t}%
\]
where $H$ is the quadratic Hamiltonian function (\ref{hamzo}) determined by
$(S_{t})_{t}$ and $\widehat{H}$ is the Weyl quantization of $H${. }
\end{proposition}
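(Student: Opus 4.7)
The plan is to reduce the statement to a pointwise claim at each $t_0 \in I$, then invoke the fundamental fact that a one-parameter symplectic subgroup lifts to a metaplectic one-parameter group generated by the Weyl quantization of a specific quadratic Hamiltonian. Since $\widehat{S}_t\widehat{S}_t^{-1}=I_{\mathrm{d}}$, the identity $i\hbar\dot{\widehat{S}}_t=\widehat{H}\widehat{S}_t$ is equivalent to $i\hbar \dot{\widehat{S}}_t\widehat{S}_t^{-1}=\widehat{H}$, so it suffices to prove this at an arbitrary fixed $t_0$.

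First I would fix $t_0\in I$ and set $X:=\dot{S}_{t_0}S_{t_0}^{-1}\in \mathfrak{sp}(n)$ (by Corollary \ref{cor2} this is precisely the element whose associated quadratic Hamiltonian is $H(\cdot,t_0)=-\tfrac12 JXz\cdot z$). I would then compare two $C^{1}$ paths in $\operatorname{Mp}(n)$ both starting at the identity at parameter $s=0$:
\begin{equation*}
\widehat{T}(s):=\widehat{S}_{t_0+s}\widehat{S}_{t_0}^{-1},\qquad \widehat{U}(s):=\exp\!\Bigl(-\tfrac{i}{\hbar}s\,\widehat{H}(\cdot,t_0)\Bigr).
\end{equation*}
By the standard theory of the metaplectic representation (see Folland \cite{Folland}, Chap.~4, or \cite{Birk}), $\widehat{U}(s)$ is \emph{precisely} the unique metaplectic lift, issued from the identity, of the one-parameter subgroup $e^{sX}\in\operatorname{Sp}(n)$; this is the basic Stone-type statement that the infinitesimal generator of the metaplectic lift of $e^{sX}$ is $-\tfrac{i}{\hbar}$ times the Weyl operator $\widehat{H}_X$ with $H_X(z)=-\tfrac12 JXz\cdot z$.

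The projections of $\widehat{T}$ and $\widehat{U}$ to $\operatorname{Sp}(n)$ are $S_{t_0+s}S_{t_0}^{-1}$ and $e^{sX}$ respectively; both pass through $I$ at $s=0$ with the \emph{same} derivative $X$. Because $\pi^{\operatorname{Mp}}:\operatorname{Mp}(n)\to\operatorname{Sp}(n)$ is a twofold covering (Appendix A), it is a local diffeomorphism whose differential at the identity is a Lie-algebra isomorphism; consequently two $C^{1}$ metaplectic paths that start at the identity and project to symplectic paths with equal first-order jets at $s=0$ must themselves have equal derivatives there. This gives
\begin{equation*}
\dot{\widehat{T}}(0)\;=\;\dot{\widehat{U}}(0)\;=\;-\tfrac{i}{\hbar}\,\widehat{H}(\cdot,t_0),
\end{equation*}
and since $\dot{\widehat{T}}(0)=\dot{\widehat{S}}_{t_0}\widehat{S}_{t_0}^{-1}$, the desired identity $i\hbar\dot{\widehat{S}}_{t_0}=\widehat{H}(\cdot,t_0)\widehat{S}_{t_0}$ follows; $t_0$ being arbitrary, so does the proposition.

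The main obstacle will be invoking cleanly the representation-theoretic lemma identifying the generator of the metaplectic lift of $e^{sX}$ as $-\tfrac{i}{\hbar}\widehat{H}_X$: this is the case of the statement in which $(S_t)$ is itself a one-parameter group, so there is a potential circularity to avoid. The way to avoid it is to note that this special case is proved independently in \cite{Folland,Birk} by direct computation on the generators of $\operatorname{Mp}(n)$ (Fourier transform, chirps, and dilations), and is treated there before the general time-dependent Schr\"odinger equation is derived. Everything else in the argument — unitarity and $C^{1}$-dependence of $\widehat{S}_t$, the path-lifting property, and the identification of infinitesimal generators through the covering — is soft and follows from the structure established in Section \ref{sec1}.
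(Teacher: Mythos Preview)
The paper does not actually supply a proof of this proposition: it is stated as ``well-known'' with references to Folland \cite{Folland} and de Gosson \cite{Birk}, and the text moves on immediately. So there is no proof to compare against in the strict sense.

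Your argument is correct and is in fact the standard one found in those references. The reduction to a fixed $t_0$ via the path $\widehat{T}(s)=\widehat{S}_{t_0+s}\widehat{S}_{t_0}^{-1}$, together with the identification of the infinitesimal generator of the metaplectic lift of $e^{sX}$ as $-\tfrac{i}{\hbar}\widehat{H}_X$, is exactly how the time-dependent case is derived from the autonomous one in \cite{Birk}. Your observation about circularity is well placed: the one-parameter (autonomous) case is established there by direct computation on the generators $\widehat{J}$, $\widehat{V}_{-P}$, $\widehat{M}_{L,m}$, independently of the general statement. The only point that deserves a word of care is the meaning of $\dot{\widehat{T}}(0)$: strictly speaking it lives in the abstract Lie algebra $\mathfrak{mp}(n)\cong\mathfrak{sp}(n)$, and its identification with the unbounded operator $-\tfrac{i}{\hbar}\widehat{H}(\cdot,t_0)$ is through the derived (infinitesimal) metaplectic representation on $\mathcal{S}(\mathbb{R}^n)$; this is implicit in your appeal to \cite{Folland,Birk} and causes no trouble.
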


This result thus \emph{justifies} Schr\"{o}dinger's equation in the case of
quadratic Hamiltonians; for a detailed discussion of the quantum-classical
correspondence from the symplectic point of view see de Gosson \cite{RMP}.

We will need the following elementary conjugation result in our calculation of
the relative phase shift:

\begin{lemma}
\label{Lemmacov}Let $\Sigma^{H}=(S_{t}^{H})_{t\in I}$ be the symplectic
isotopy determined by a Hamiltonian $H$. (i) For $R\in\operatorname*{Sp}(n)$
the symplectic isotopy $R\Sigma^{H}R^{-1}=(RS_{t}^{H}R^{-1})_{t\in I}$ is
determined by $H\circ R^{-1}$, that is
\[
R\Sigma^{H}R^{-1}=\Sigma^{H\circ R^{-1}}.
\]
(ii) Let $\widehat{\Sigma}^{H}=(\widehat{S}_{t}^{H})_{t\in I}$ be the
metaplectic isotopy induced by $\Sigma^{H}$. We have $\widehat{R}%
\widehat{\Sigma}^{H}\widehat{R}^{-1}=\widehat{\Sigma}^{H\circ R^{-1}}$ that
is
\[
(\widehat{R}\widehat{S}_{t}^{H}\widehat{R}^{-1})_{t\in I}=(\widehat{S}%
_{t}^{H\circ R^{-1}})_{t\in I}%
\]
where $\pi^{\operatorname*{Mp}}(\widehat{R})=R$.
\end{lemma}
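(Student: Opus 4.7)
My plan is to derive (i) directly from formula (\ref{hamzo}) of Corollary \ref{cor2}, and then deduce (ii) from (i) by invoking the uniqueness of the path lift from $\operatorname{Sp}(n)$ to $\operatorname{Mp}(n)$.

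For part (i), write $S'_{t}=RS_{t}^{H}R^{-1}$, so that $\dot{S}'_{t}=R\dot{S}_{t}^{H}R^{-1}$ and $(S'_{t})^{-1}=RS_{t}^{H,-1}R^{-1}$, giving $\dot{S}'_{t}(S'_{t})^{-1}=R\dot{S}_{t}^{H}S_{t}^{H,-1}R^{-1}$. Plugging into (\ref{hamzo}), the Hamiltonian of the isotopy $R\Sigma^{H}R^{-1}$ is
\[
H'(z,t)=-\tfrac{1}{2}\,JR\dot{S}_{t}^{H}S_{t}^{H,-1}R^{-1}z\cdot z.
\]
The symplectic identity $R^{T}JR=J$ is equivalent to $JR=(R^{T})^{-1}J$, so moving the factor $(R^{T})^{-1}$ across the inner product gives
\[
H'(z,t)=-\tfrac{1}{2}\,J\dot{S}_{t}^{H}S_{t}^{H,-1}(R^{-1}z)\cdot(R^{-1}z)=H(R^{-1}z,t),
\]
which is precisely $(H\circ R^{-1})(z,t)$. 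Hence $R\Sigma^{H}R^{-1}=\Sigma^{H\circ R^{-1}}$. (As a sanity check one can alternatively verify the transformation law $R_{*}X_{H}=X_{H\circ R^{-1}}$ for Hamiltonian vector fields under a symplectomorphism, which gives the same conclusion without specializing to the quadratic case.)

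For part (ii), consider the map $t\longmapsto\widehat{R}\widehat{S}_{t}^{H}\widehat{R}^{-1}$. This is a continuous path in $\operatorname{Mp}(n)$ which at $t=0$ equals $\widehat{R}\widehat{R}^{-1}=I_{\mathrm{d}}$, and whose projection under the covering map $\pi^{\operatorname{Mp}}$ is
\[
\pi^{\operatorname{Mp}}\!\bigl(\widehat{R}\widehat{S}_{t}^{H}\widehat{R}^{-1}\bigr)=RS_{t}^{H}R^{-1}=S_{t}^{H\circ R^{-1}},
\]
the last equality being (i). By the unique path-lifting property of the covering $\pi^{\operatorname{Mp}}:\operatorname{Mp}(n)\to\operatorname{Sp}(n)$ discussed just before the proposition above, the metaplectic isotopy $\widehat{\Sigma}^{H\circ R^{-1}}=(\widehat{S}_{t}^{H\circ R^{-1}})_{t\in I}$ is the unique continuous lift of $\Sigma^{H\circ R^{-1}}$ starting at $I_{\mathrm{d}}$. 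Therefore $\widehat{R}\widehat{S}_{t}^{H}\widehat{R}^{-1}=\widehat{S}_{t}^{H\circ R^{-1}}$ for every $t\in I$.

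There is no serious obstacle here; the only mildly delicate point is the symplectic covariance step $JR=(R^{T})^{-1}J$ in part (i), and the appeal to \emph{uniqueness} (rather than mere existence) of the lift in part (ii) — which is what forces the conjugated path to coincide with $\widehat{\Sigma}^{H\circ R^{-1}}$ rather than with another element of its fibre.
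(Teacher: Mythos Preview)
Your argument is correct. Part (i) is essentially identical to the paper's: both invoke formula (\ref{hamzo}) together with the symplectic identity $JR=(R^{T})^{-1}J$.

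For part (ii) you take a genuinely different route. The paper differentiates $\widehat{R}\widehat{S}_{t}^{H}\widehat{R}^{-1}$, uses the symplectic covariance $\widehat{R}\widehat{H}\widehat{R}^{-1}=\widehat{H\circ R^{-1}}$ of Weyl operators to recognise that this path solves the Schr\"{o}dinger equation for $H\circ R^{-1}$, and then appeals to uniqueness of solutions with given initial datum. You instead bypass the differential equation entirely and invoke only the unique path-lifting property of the covering $\pi^{\operatorname{Mp}}$. Your argument is more elementary and purely topological, requiring nothing beyond (i) and the fact that $\operatorname{Mp}(n)\to\operatorname{Sp}(n)$ is a covering; the paper's argument, while slightly heavier, has the advantage of making the dynamical content (the conjugated Hamiltonian governing the conjugated flow at the operator level) explicit, which is thematically closer to how the lemma is used later.
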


\begin{proof}
Property (i) follows from formula (\ref{hamzo}) noting that $JS=(S^{T})^{-1}J$
since $S\in\operatorname*{Sp}(n)$. Property (ii) follows from (i) using the
symplectic covariance of Weyl operators which implies that
\[
i\hbar\frac{d}{dt}(\widehat{R}\widehat{S}_{t}^{H}\widehat{R}^{-1}%
)=\widehat{R}^{-1}(\widehat{H}\widehat{S}_{t}^{H})\widehat{R}=(\widehat{R}%
^{-1}\widehat{H}\widehat{R})(\widehat{R}^{-1}\widehat{S}_{t}^{H}\widehat{R});
\]
noting that $\widehat{R}\widehat{H}\widehat{R}^{-1}=H\circ R^{-1}$ we have
$\widehat{S}_{t}^{H\circ R^{-1}}=\widehat{R}\widehat{S}_{t}^{H}\widehat{R}%
^{-1}$ in view of the uniqueness of the solution to Schr\"{o}dinger's equation
with given initial datum in $\mathcal{S}(\mathbb{R}^{n})$.
\end{proof}

We define the (extended) Conley--Zehnder index $\nu(\widehat{\Sigma})$ of a
metaplectic isotopy $\widehat{\Sigma}=(\widehat{S}_{t})_{t\in I}$ on
$\operatorname*{Mp}(n)$ as being $[\nu(\Sigma)]_{\operatorname{mod}4}$, the
class modulo $4$ of the Conley--Zehnder index $\nu(\Sigma)$ of the projected
symplectic path $\Sigma=\pi^{\operatorname*{Mp}}(\widehat{\Sigma})$. When the
endpoint $\widehat{S}$ of $\widehat{\Sigma}$ is such that $S=\pi
^{\operatorname*{Mp}}(\widehat{S})\notin\operatorname*{Sp}_{0}(n)$ then, by
Proposition \ref{PropLCZ},%
\begin{equation}
\nu(\widehat{\Sigma})=[i_{\mathrm{CZ}}(\Sigma)]_{\operatorname{mod}4}.
\label{iczmod4}%
\end{equation}

Following result is important for practical calculations; it shows that the
value modulo four of the Conley--Zehnder index $i_{\mathrm{CZ}}(\Sigma)$ of a
symplectic isotopy with endpoint a free symplectic matrix ~$S_{W}$ is related
to the Maslov index of the corresponding metaplectic isotopy:

\begin{proposition}
Let $\Sigma=(S_{t})_{t\in I}$ be a symplectic isotopy in $\operatorname*{Sp}%
(n)$ with endpoint $S\notin\operatorname*{Sp}\nolimits_{0}(n)$ and $S\ell
_{P}\cap\ell_{P}=0$. Thus $S=S_{W}$ for some generating function $W$ (see
(\ref{PLQ})). We have%
\begin{equation}
\nu(\widehat{\Sigma})=i_{\mathrm{CZ}}(\widehat{\Sigma}%
)=m-\operatorname*{Inert}W_{xx}\text{ \ }\operatorname{mod}4 \label{mod4}%
\end{equation}
where $\widehat{S}=\widehat{S}_{W,m}$ is the endpoint of the metaplectic
isotopy $\widehat{\Sigma}=(\widehat{S}_{t})_{t\in I}$.
\end{proposition}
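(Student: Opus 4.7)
The plan is to exploit the identification $\nu(\Sigma)=\tfrac{1}{2}\mu_{\Delta}^{\ominus}((I\oplus S)_{\infty})$ established in Proposition \ref{PropLCZ}, which reduces the Conley--Zehnder index to a relative Leray/Maslov index on $\operatorname{Sp}\nolimits_{\infty}^{\ominus}(2n)$, and then to extract its class modulo $4$ from the Maslov-class label $m$ that parametrizes the metaplectic endpoint $\widehat{S}_{W,m}$. Since, by the definition in (\ref{iczmod4}), the extended Conley--Zehnder index $\nu(\widehat{\Sigma})$ is $[i_{\mathrm{CZ}}(\Sigma)]_{\operatorname{mod}4}$, the entire argument takes place modulo $4$.

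First I would use the hypothesis $S\ell_{P}\cap\ell_{P}=0$ to write $S=S_{W}$ for a quadratic generating function $W$, so that the graph Lagrangian $(I\oplus S)\Delta=\{(z,S_{W}z):z\in\mathbb{R}^{2n}\}$ has an explicit description in terms of $W$. Because $\mu_{\Delta}^{\ominus}$ depends only on the homotopy class of the path $t\mapsto(I\oplus S_{t})\Delta$, the characteristic property (MA) of the relative Maslov index (Appendix B) reduces the calculation to endpoint invariants: a contribution coming from the lift of the symplectic path into $\operatorname{Sp}\nolimits_{\infty}^{\ominus}(2n)$ (which encodes the same homotopy data as the metaplectic lift, and which carries the integer $m$), together with a pure endpoint contribution coming from the pair of Lagrangians $\Delta$ and $(I\oplus S_{W})\Delta$.

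The central computation is the standard signature formula for free symplectic matrices, which expresses the endpoint relative Maslov index (equivalently, the Leray index on the diagonal side) in terms of the Hessian block $W_{xx}$; the resulting contribution is precisely $\operatorname{Inert}W_{xx}$. Combining this with the definition of $m$ as the Maslov-class parameter of $\widehat{S}_{W,m}$ in the metaplectic representation (Appendix A), one obtains the desired congruence
\begin{equation*}
\tfrac{1}{2}\mu_{\Delta}^{\ominus}((I\oplus S_{W})_{\infty})\equiv m-\operatorname{Inert}W_{xx}\pmod{4},
\end{equation*}
and invoking (\ref{iczmod4}) and (\ref{eqCZ}) identifies the left-hand side with $\nu(\widehat{\Sigma})=i_{\mathrm{CZ}}(\widehat{\Sigma})\pmod 4$.

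The main obstacle will be the precise matching of normalizations between (a) the integer $m$ appearing in the metaplectic representation of $\widehat{S}_{W,m}$ (which is defined via Maslov--Leray indices on $\operatorname{Lag}(n)$) and (b) the Leray index on $\operatorname{Sp}\nolimits_{\infty}^{\ominus}(2n)$ evaluated at $(I\oplus S_{W})_{\infty}$; bridging these two incarnations of the Maslov cocycle is what produces both the shift by $\operatorname{Inert}W_{xx}$ and the reduction modulo $4$. The residual ambiguity in the Leray index is controlled by the evenness of $\dim\operatorname*{Ker}(S-I)$ already observed in the proof of Proposition \ref{PropLCZ}, so the congruence is well defined.
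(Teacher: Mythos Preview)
Your proposal is correct and follows essentially the same route as the paper: the paper's proof is a one-line appeal to the definition (\ref{defcz}) of $\nu(\Sigma)=\tfrac{1}{2}\mu_{\Delta}^{\ominus}((I\oplus S)_{\infty})$ together with ``the properties of the Leray index,'' and your outline spells out precisely what that means---reducing the computation to the Leray/Maslov cocycle on the doubled space, extracting the endpoint contribution $\operatorname{Inert}W_{xx}$ from the free-symplectic-matrix signature formula, and matching with the Maslov label $m$ of $\widehat{S}_{W,m}$. Your proposal is in fact more detailed than the paper's own proof, which leaves the bridging of normalizations between $m$ and $\mu_{\Delta}^{\ominus}$ entirely implicit.
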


\begin{proof}
Formula (\ref{mod4}) follows from the definition (\ref{defcz}) of the
Conley--Zehnder index and the properties of the Leray index.
\end{proof}

\begin{remark}
The integer $\operatorname*{Inert}W_{xx}$ in appearing (\ref{mod4}) is called
\textquotedblleft Morse's index of concavity\textquotedblright\ \cite{Morse}
in the literature on periodic Hamiltonian orbits.
\end{remark}

In terms on the explicit expression (\ref{PLQ}) of the quadratic form $W$,
that is%
\[
W(x,x^{\prime})=\tfrac{1}{2}Px^{2}-Lx\cdot x^{\prime}+\tfrac{1}{2}Qx^{\prime2}%
\]
we have%
\[
W_{xx}=P-L-L^{T}+Q.
\]
Thus, if the metaplectic isotopy $\widehat{\Sigma}$ has endpoint
$\widehat{S}_{W,m}$ we have%
\[
\nu(\widehat{\Sigma})=m-\operatorname*{Inert}(P-L-L^{T}+Q).
\]

\section{Metaplectic Operators and their Weyl Symbols\label{secmeta}}

The Weyl symbol of a metaplectic was introduced implicitly and without
justification in the work \cite{MW} of Mehlig and Wilkinson, and was studied
rigorously in \cite{LMP} (also see \cite{RMPCZ,Birk}).

\subsection{Weyl operators}

We will make use of the two following elementary but yet very useful unitary
operators on $L^{2}(\mathbb{R}^{n})$. For $z_{0}=(x_{0},p_{0})\in
L^{2}(\mathbb{R}^{2n})$ the displacement (or Weyl--Heisenberg) operator
$\widehat{T}(z_{0})$ and the reflection (or Grossmann--Royer
\cite{Grossmann,Royer,Birk,Wigner}) operator $\widehat{\Pi}(z_{0})$ are
defined by%
\begin{align}
\widehat{T}(z_{0})\psi(x)  &  =e^{\frac{i}{\hbar}(p_{0}x-\frac{1}{2}p_{0}%
x_{0})}\psi(x-x_{0})\label{HW}\\
\widehat{\Pi}(z_{0})\psi(x)  &  =e^{\frac{2i}{\hbar}p_{0}(x-x_{0})}\psi
(2x_{0}-x). \label{GR}%
\end{align}
One verifies that $\widehat{\Pi}(z_{0})=\widehat{T}(z_{0})\widehat{\Pi
}(0)\widehat{T}(z_{0})^{\ast}$ and that
\begin{align}
\widehat{T}(z_{0}+z_{1})  &  =e^{-\tfrac{i}{2\hslash}\sigma(z_{0},z_{1}%
)}\widehat{T}(z_{0})\widehat{T}(z_{1})\label{tzotzo}\\
\widehat{T}(z_{0})\widehat{T}(z_{1})  &  =e^{\tfrac{i}{\hslash}\sigma
(z_{0},z_{1})}\widehat{T}(z_{1})\widehat{T}(z_{0}). \label{tzotzobis}%
\end{align}
For $a\in L^{1}(\mathbb{R}^{2n})$ the function $F_{\sigma}a$ defined by
\begin{equation}
F_{\sigma}a(z)=\left(  \tfrac{1}{2\pi\hbar}\right)  ^{n}\int e^{-\frac
{i}{\hbar}\sigma(z,z^{\prime})}a(z^{\prime})d^{2n}z^{\prime} \label{SFT}%
\end{equation}
is called the symplectic Fourier transform of $a$; we have $F_{\sigma
}a(z)=Fa(Jz)$ where $F$ is the usual Fourier transform on $\mathbb{R}^{2n}$.
The operators $\widehat{T}(z_{0})$ and $\widehat{\Pi}(z_{0})$ are related by
$F_{\sigma}$ in the following way: \ \ \ \
\begin{equation}
\widehat{\Pi}(z_{0})\psi=2^{-n}F_{\sigma}[\widehat{T}(\cdot)\psi](-z_{0})
\label{defgr}%
\end{equation}
(see \cite{Birkbis}, \S 8.3.3, Proposition 149). The displacement and
reflection operators $\widehat{\Pi}(z_{0})$ allow to give very simple
definitions of the cross-Wigner and cross-ambiguity functions
\cite{Grossmann,Royer,Birk}: for $\psi,\phi\in\mathcal{S}(\mathbb{R}^{n})$
they are defined \cite{Birk,Wigner} by, respectively,
\begin{align}
W(\psi,\phi)(z)  &  =\left(  \tfrac{1}{\pi\hbar}\right)  ^{n}(\widehat{\Pi
}(z_{0})\psi|\phi)\label{wpsifi}\\
A(\psi,\phi)(z)  &  =\left(  \tfrac{1}{2\pi\hbar}\right)  ^{n}(\psi
|\widehat{T}(z)\phi); \label{apsifi}%
\end{align}
a straightforward computation using the definitions (\ref{GR}) and (\ref{HW})
show that these equivalent to the usual expressions%
\begin{align}
W(\psi,\phi)(z)  &  =\left(  \tfrac{1}{2\pi\hbar}\right)  ^{n}\int
e^{-\frac{i}{\hbar}p\cdot y}\psi(x+\tfrac{1}{2}y)\overline{\phi(x-\tfrac{1}%
{2}y)}d^{n}y\label{wpsi}\\
A(\psi,\phi)(z)  &  =\left(  \tfrac{1}{2\pi\hbar}\right)  ^{n}\int
e^{-\tfrac{i}{\hbar}p\cdot y}\psi(y+\tfrac{1}{2}x)\overline{\phi(y-\tfrac
{1}{2}x)}d^{n}y. \label{apsi}%
\end{align}
The functions $W(\psi,\phi)$ and $A(\psi,\phi)$ are related by the symplectic
Fourier transform \cite{Birk}
\[
W(\psi,\phi)=F_{\sigma}A(\psi,\phi)\text{ \ , \ }A(\psi,\phi)=F_{\sigma}%
W(\psi,\phi)
\]
as immediately follows from formula (\ref{defgr}).

Let $a\in\mathcal{S}(\mathbb{R}^{2n})$. The Weyl operator with symbol $a$ is
defined by $\widehat{A}=\operatorname*{Op}\nolimits^{\mathrm{W}}(a)$ where%
\begin{equation}
\operatorname*{Op}\nolimits^{\mathrm{W}}(a)\psi(x)=\left(  \tfrac{1}{2\pi
\hbar}\right)  ^{n}\iint e^{\frac{i}{\hbar}p(x-y)}a(\tfrac{1}{2}%
(x+y),p)\psi(y)d^{n}pd^{n}y. \label{opw1}%
\end{equation}
Using the displacement operator this definition takes the more tractable form%
\begin{equation}
\operatorname*{Op}\nolimits^{\mathrm{W}}(a)=\left(  \tfrac{1}{\pi\hbar
}\right)  ^{n}\int a(z_{0})\widehat{\Pi}(z_{0})d^{2n}z_{0}. \label{opw3}%
\end{equation}
Setting $a_{\sigma}=F_{\sigma}a$ we also have, using the Plancherel theorem
for the symplectic Fourier transform \cite{Birk},%
\begin{equation}
\operatorname*{Op}\nolimits^{\mathrm{W}}(a)=\left(  \tfrac{1}{2\pi\hbar
}\right)  ^{n}\int a_{\sigma}(z_{0})\widehat{T}(z_{0})d^{2n}z_{0}.
\label{opw2}%
\end{equation}
The integrals above should be interpreted as vector-valued integrals (Bochner
integrals). Using the generators $\widehat{J}$, $\widehat{V}_{-P}$, and
$\widehat{M}_{L,m}$ defined by (\ref{mp1}), (\ref{mp2}), (\ref{mp3}) it is a
simple exercise to show that the displacement and reflection operators satisfy
the symplectic covariance relations
\begin{equation}
\widehat{T}(Sz_{0})=\widehat{S}\widehat{T}(z_{0})\widehat{S}^{-1}\text{
\ \textit{,} \ }\widehat{\Pi}(Sz_{0})=\widehat{S}\widehat{\Pi}(z_{0}%
)\widehat{S}^{-1} \label{cov}%
\end{equation}
for every $\widehat{S}\in\operatorname*{Mp}(n)$, $S=\pi^{\operatorname*{Mp}%
}(\widehat{S})$, and using (\ref{opw2}) it follows that Weyl operators satisfy
the conjugation formula%
\begin{equation}
\widehat{S}\operatorname*{Op}\nolimits^{\mathrm{W}}(a)\widehat{S}%
^{-1}=\operatorname*{Op}\nolimits^{\mathrm{W}}(a\circ S^{-1}). \label{sympcov}%
\end{equation}
We also have a conjugation formula for the operators $\widehat{T}(z_{0})$:
\begin{equation}
\widehat{T}(z_{0})\operatorname*{Op}\nolimits^{\mathrm{W}}(a)\widehat{T}%
(z_{0})^{-1}=\operatorname*{Op}\nolimits^{\mathrm{W}}(T(z_{0})a)
\label{tzocov}%
\end{equation}
where by definition $T(z_{0})a(z)=a(z-z_{0})$; this easily follows from
formula (\ref{tzotzobis}).

Formula (\ref{wpsi}) implies that we have the following relation:%
\begin{equation}
\langle\operatorname*{Op}\nolimits^{\mathrm{W}}(a)\psi,\overline{\phi}%
\rangle=\langle\langle a,W(\psi,\phi)\rangle\rangle\label{average}%
\end{equation}
for all $\psi,\phi\in\mathcal{S}(\mathbb{R}^{n})$; here $\langle\cdot
,\cdot\rangle$ and $\langle\langle\cdot,\cdot\rangle\rangle$ are the
distributional brackets on $\mathbb{R}^{n}$ and $\mathbb{R}^{2n}$,
respectively. This formula allows to define $\widehat{A}=\operatorname*{Op}%
\nolimits^{\mathrm{W}}(a)$ for arbitrary symbols $a\in\mathcal{S}^{\prime
}(\mathbb{R}^{2n})$ \cite{Birk,Wigner}.

\begin{proposition}
\label{propcomp}Assume that the product $\widehat{A}\widehat{B}%
=\operatorname*{Op}\nolimits^{\mathrm{W}}(a)\operatorname*{Op}%
\nolimits^{\mathrm{W}}(b)$ is well defined. Then the twisted Weyl symbol of
$\widehat{C}=\widehat{A}\widehat{B}$ is given by the \textquotedblleft twisted
convolution\textquotedblright\ formula%
\begin{equation}
c_{\sigma}(z)=\left(  \tfrac{1}{2\pi\hbar}\right)  ^{n}\int e^{\frac{i}%
{2\hbar}\sigma(z,z^{\prime})}a_{\sigma}(z-z^{\prime})b_{\sigma}(z^{\prime
})d^{2n}z^{\prime}. \label{cecomp}%
\end{equation}
This in particular applies when $a$ or $b$ is in $\mathcal{S}(\mathbb{R}%
^{2n})$.
\end{proposition}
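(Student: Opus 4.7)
The plan is to work entirely at the level of the representation (\ref{opw2}) of a Weyl operator as a superposition of displacement operators weighted by its twisted symbol, and then reduce the whole computation to the commutation rule (\ref{tzotzo}) satisfied by the $\widehat{T}(z_0)$'s.

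First I would write, using (\ref{opw2}),
\begin{equation*}
\widehat{A}\widehat{B}=\left(\tfrac{1}{2\pi\hbar}\right)^{2n}\iint a_{\sigma}(z_{0})b_{\sigma}(z_{1})\,\widehat{T}(z_{0})\widehat{T}(z_{1})\,d^{2n}z_{0}\,d^{2n}z_{1},
\end{equation*}
viewed as an iterated Bochner integral (this is where the assumption that the product is well defined enters, together with the Schwartz hypothesis noted at the end; for symbols outside of $\mathcal{S}(\mathbb{R}^{2n})$ one extends by duality through (\ref{average})). Next I apply (\ref{tzotzo}) in the form $\widehat{T}(z_{0})\widehat{T}(z_{1})=e^{\frac{i}{2\hbar}\sigma(z_{0},z_{1})}\widehat{T}(z_{0}+z_{1})$ to get
\begin{equation*}
\widehat{A}\widehat{B}=\left(\tfrac{1}{2\pi\hbar}\right)^{2n}\iint e^{\frac{i}{2\hbar}\sigma(z_{0},z_{1})}a_{\sigma}(z_{0})b_{\sigma}(z_{1})\,\widehat{T}(z_{0}+z_{1})\,d^{2n}z_{0}\,d^{2n}z_{1}.
\end{equation*}

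Then I would perform the affine change of variables $z=z_{0}+z_{1}$, $z'=z_{1}$ (Jacobian $1$), using the bilinearity and antisymmetry of $\sigma$ to rewrite $\sigma(z_{0},z_{1})=\sigma(z-z',z')=\sigma(z,z')$. This yields
\begin{equation*}
\widehat{A}\widehat{B}=\left(\tfrac{1}{2\pi\hbar}\right)^{n}\int\Bigl[\left(\tfrac{1}{2\pi\hbar}\right)^{n}\!\!\int e^{\frac{i}{2\hbar}\sigma(z,z')}a_{\sigma}(z-z')b_{\sigma}(z')\,d^{2n}z'\Bigr]\widehat{T}(z)\,d^{2n}z,
\end{equation*}
after an application of Fubini's theorem to reorder the integrals. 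Comparing with the canonical expansion $\widehat{C}=(2\pi\hbar)^{-n}\int c_{\sigma}(z)\widehat{T}(z)d^{2n}z$ and invoking the injectivity of the correspondence between an operator and its twisted symbol (a consequence of the fact that the map $a\mapsto\operatorname{Op}^{\mathrm{W}}(a)$ is a bijection $\mathcal{S}'(\mathbb{R}^{2n})\to\mathcal{L}(\mathcal{S},\mathcal{S}')$, itself a restatement of (\ref{average})) I would read off formula (\ref{cecomp}).

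The main subtlety is not algebraic but analytic: it is the justification of the reordering of integrals (Fubini) and the identification of the twisted symbol through a vector-valued integral. For $a,b\in\mathcal{S}(\mathbb{R}^{2n})$ the integrand is absolutely integrable and this is transparent; for general tempered distributional symbols one must instead test against $W(\psi,\phi)$ using (\ref{average}) and repeat the same algebraic manipulation inside a pairing where all integrals converge absolutely. This is the only real obstacle; the rest is the single commutation identity (\ref{tzotzo}) combined with a linear change of variable.
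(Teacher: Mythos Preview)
Your argument is correct and is precisely the standard derivation of the twisted convolution formula; the paper itself does not give a proof but simply refers to \cite{Birk}, \S 6.3.2, where the computation is exactly the one you outline (expand via (\ref{opw2}), apply (\ref{tzotzo}), change variables, read off the twisted symbol). Your remarks on the analytic justification for $a$ or $b$ in $\mathcal{S}(\mathbb{R}^{2n})$ and the extension by duality via (\ref{average}) are also the right way to handle the general case.
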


\begin{proof}
See for instance \cite{Birk}, \S 6.3.2.
\end{proof}

\subsection{The Weyl symbol of a quadratic Fourier transform}

Let us denote by $\operatorname*{Sp}\nolimits_{0}(n)$ the subset of
$\operatorname*{Sp}(n)$ consisting of all symplectic matrices having no
eigenvalue equal to one:
\begin{equation}
\operatorname*{Sp}\nolimits_{0}(n)=\{S\in\operatorname*{Sp}(n):\det
(S-I)\neq0\}. \label{A9}%
\end{equation}

Let now $S\in\operatorname*{Sp}\nolimits_{0}(n)$ and consider the family of
operators $\widehat{R}_{\nu}(S)$ defined, for $\nu\in\mathbb{R}$, by
\begin{equation}
\widehat{R}_{\nu}(S)=\left(  \tfrac{1}{2\pi\hbar}\right)  ^{n}i^{\nu}%
\sqrt{|\det(S-I)|}\int\widehat{T}(Sz_{0})\widehat{T}(-z_{0})d^{2n}z_{0}.
\label{rus1}%
\end{equation}
One verifies that for all $S\in\operatorname*{Sp}\nolimits_{0}(n)$ and $\nu
\in\mathbb{R}$ the operators $\widehat{R}_{\nu}(S)$ satisfy the intertwining
formula
\[
\widehat{T}(Sz_{0})=\widehat{R}_{\nu}(S)\widehat{T}(z_{0})\widehat{R}_{\nu
}(S)^{-1}.
\]
It follows, using the irreducibility of the Schr\"{o}dinger representation of
the Heisenberg group \cite{Folland}, that there exists a constant $c(S,\nu
)\in\mathbb{C}$ such that $\widehat{R}_{\nu}(S) = c(S,\nu)\widehat{S}$ where
$\pi^{\operatorname*{Mp}}(\widehat{S})=S$. It is moreover easy to check that
the operators are $\widehat{R}_{\nu}(S)$ unitary, hence $|c(S,\nu)|=1$. The
following result connects the integer $\nu$ in (\ref{rus1}) to the
Conley--Zehnder index when $\widehat{R}_{\nu}(S)$ is a true metaplectic operator:

\begin{proposition}
\label{propcz}Let $\Sigma=(S_{t})_{t\in I}$ be symplectic isotopy in
$\operatorname*{Sp}(n)$ leading from the identity to $S\notin%
\operatorname*{Sp}\nolimits_{0}(n)$. Let $\widehat{\Sigma}=(\widehat{S}%
_{t})_{t\in I}$ be the metaplectic isotopy covering $\Sigma$ and
$\widehat{S}\in\operatorname*{Mp}(n)$ be its endpoint (thus $S=\pi
^{\operatorname*{Mp}}(\widehat{S})$). We have
\[
\widehat{S}=\widehat{R}_{\nu(\widehat{\Sigma})}(S)
\]
where $\nu(\widehat{\Sigma})=\nu(\Sigma)$ $\operatorname{mod}4$.
\end{proposition}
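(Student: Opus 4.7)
The irreducibility of the Schr\"odinger representation already furnished $\widehat{R}_{\nu}(S)=c(S,\nu)\widehat{S}_{0}$ for some lift $\widehat{S}_{0}\in\operatorname{Mp}(n)$ of $S$ and a unimodular scalar $c(S,\nu)$. Because $i^{\nu+4}=i^{\nu}$ and $i^{\nu+2}=-i^{\nu}$, the operator $\widehat{R}_{\nu}(S)$ depends on $\nu$ only through $\nu\bmod 4$, and the four operators $\widehat{R}_{0}(S),\widehat{R}_{1}(S),\widehat{R}_{2}(S),\widehat{R}_{3}(S)$ run through $\widehat{S}_{0},i\widehat{S}_{0},-\widehat{S}_{0},-i\widehat{S}_{0}$ in some order. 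Exactly one of these, say $\widehat{R}_{\nu_{0}(\widehat{S})}(S)$ for a unique residue $\nu_{0}(\widehat{S})\in\{0,1,2,3\}$, coincides with the specific lift $\widehat{S}$ selected by the metaplectic isotopy $\widehat{\Sigma}$. The proposition is the statement $\nu_{0}(\widehat{S})\equiv\nu(\widehat{\Sigma})\pmod 4$.

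The key step I would carry out is to establish a composition formula for the family $\widehat{R}_{\nu}$ that parallels the Conley--Zehnder composition law (\ref{modprod}): for $S,S'$ and $SS'$ all with $\det(\cdot-I)\neq 0$,
\begin{equation*}
\widehat{R}_{\nu}(S)\,\widehat{R}_{\nu'}(S')=\widehat{R}_{\nu+\nu'+\frac{1}{2}\operatorname{sign}(M(S)+M(S'))}(SS').
\end{equation*}
To derive it I would insert the definition (\ref{rus1}), use the commutation relation (\ref{tzotzobis}) to contract each factor $\widehat{T}(Sz_{0})\widehat{T}(-z_{0})\widehat{T}(S'z_{1})\widehat{T}(-z_{1})$ into a single displacement times a quadratic phase in $(z_{0},z_{1})$, and then evaluate the resulting Gaussian integral with the generalized Fresnel formula (\ref{Fresnel}). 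The decisive algebraic step is Lemma \ref{leminv}: the identity $M(S)+M(S')=J(S-I)^{-1}(SS'-I)(S'-I)^{-1}$ forces the Fresnel determinant to collapse to $|\det(SS'-I)|$, recovering the normalization factor of $\widehat{R}_{\cdot}(SS')$, while the signature of $M(S)+M(S')$ emerges as the $i^{\operatorname{sign}/2}$ prefactor from Fresnel. Since this composition law is identical, modulo $4$, to (\ref{modprod}), the assignment $\widehat{\Sigma}\mapsto\nu_{0}(\widehat{S})$ obeys the same concatenation rule as $\nu(\widehat{\Sigma})\bmod 4$; combined with the compatibility of (\textbf{CZ4}) with the transformation $\widehat{S}\mapsto(-1)^{r}\widehat{S}$ under concatenation by the loop $\alpha^{r}$, the proposition reduces to a single verification in each of the two connected components $\operatorname{Sp}_{\pm}(n)$.

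For the component $\operatorname{Sp}_{+}(n)$ I would use the harmonic oscillator half-period isotopy $\Sigma_{0}$ generated by $H(z)=\tfrac{\pi}{2}(x^{2}+p^{2})$ on $[0,1]$, whose endpoint is $-I$. A direct calculation gives $\widehat{T}(-z_{0})\widehat{T}(-z_{0})=\widehat{T}(-2z_{0})$, so by (\ref{defgr}) the integral defining $\widehat{R}_{\nu}(-I)$ reduces to $i^{\nu}\widehat{\Pi}(0)$; meanwhile the oscillator spectrum $E_{k}=\hbar\pi(k+\tfrac{1}{2})$ gives $\widehat{S}_{1}=e^{-i\widehat{H}/\hbar}=-i\,\widehat{\Pi}(0)$. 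These two agree precisely when $\nu\equiv 3\pmod 4$, which matches $i_{\mathrm{CZ}}(\Sigma_{0})=-1$ (as is visible from the winding number of $(\det u_{t})^{2}=e^{-2\pi i t}$ for the $U(1)$-component of the rotation). A parallel check handles one path ending near $S_{-}\in\operatorname{Sp}_{-}(n)$, for instance by using a Hamiltonian that splits as a half-period oscillator on the first symplectic plane and a hyperbolic scaling on the remaining coordinates. The main obstacle in the whole argument is the composition-law step: tracking the phases through the Fresnel evaluation with enough precision to reproduce the combinatorial factor $\tfrac{1}{2}\operatorname{sign}(M(S)+M(S'))$ exactly is delicate, but once that identity is in hand the remaining steps are conceptually routine.
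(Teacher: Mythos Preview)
Your plan is correct and is essentially the argument carried out in the references the paper cites at this point (and which the paper itself reproduces a few lines later in formulas (\ref{sr}), (\ref{nus}), (\ref{productmw})): one derives the composition law
\[
\widehat{R}_{\nu}(S)\,\widehat{R}_{\nu'}(S')=\widehat{R}_{\nu+\nu'+\frac{1}{2}\operatorname{sign}(M(S)+M(S'))}(SS')
\]
by a Fresnel computation exploiting Lemma~\ref{leminv}, observes that it matches (\ref{modprod}) modulo $4$, and then anchors. The only substantive difference is the choice of anchor. The cited references verify the statement directly on the dense family of quadratic Fourier transforms $\widehat{S}_{W,m}$ by computing their twisted symbol from the explicit integral kernel---this is (\ref{swmnu}) combined with (\ref{mod4})---after which the product formula handles every $\widehat{S}$ via the factorization of Proposition~\ref{propA1}. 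You instead propose a single spectral check at $S=-I$ together with an appeal to the uniqueness characterization (CZ1)--(CZ4) and local constancy on each component of $\operatorname{Sp}_{\pm}(n)$. Both routes work; the $\widehat{S}_{W,m}$ route is slightly more self-contained in that it avoids invoking the abstract uniqueness theorem for $i_{\mathrm{CZ}}$. One minor remark: your parenthetical winding-number claim $(\det u_t)^2=e^{-2\pi i t}$ has the sign reversed relative to the paper's convention (here $u_t=\cos\omega t+i\sin\omega t$), but the spectral computation $\widehat{S}_1=(-i)^{n}\widehat{\Pi}(0)$ is what actually does the work and it is correct, agreeing with (\ref{nust}).
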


\begin{proof}
This results from the identity (\ref{iczmod4}) (see \cite{LMP} and
\cite{RMPCZ,Birk}).
\end{proof}

The statement above has the following consequences when the endpoint of the
symplectic isotopy $\Sigma$ is a free symplectic matrix $S_{W}$:

\begin{corollary}
Let $\widehat{S}_{W,m}\in\operatorname*{Mp}(n)$ be such that $S_{W}%
=\pi^{\operatorname*{Mp}}(\widehat{S}_{W,m})\notin\operatorname*{Sp}%
\nolimits_{0}(n)$. We then have
\begin{equation}
\widehat{S}_{W,m}=\widehat{R}_{m-\operatorname*{Inert}W_{xx}}(S) \label{swmnu}%
\end{equation}
where $\operatorname*{Inert}W_{xx}$ is the index of inertia of the matrix
$W_{xx}$ of second derivatives of the quadratic form $x\longmapsto W(x,x)$ on
$\mathbb{R}^{n}$.
\end{corollary}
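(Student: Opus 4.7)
The plan is to obtain this corollary by directly composing the two preceding results: Proposition \ref{propcz}, which identifies $\widehat{S}$ with $\widehat{R}_{\nu(\widehat{\Sigma})}(S)$ for any metaplectic isotopy $\widehat{\Sigma}$ terminating at $\widehat{S}$, and formula (\ref{mod4}), which computes $\nu(\widehat{\Sigma})\bmod 4$ explicitly as $m-\operatorname*{Inert}W_{xx}$ whenever the endpoint is the free-type element $\widehat{S}_{W,m}$. The hypothesis $S_{W}\notin\operatorname*{Sp}\nolimits_{0}(n)$ (read in the sense of Section~3.2, i.e.\ $\det(S_{W}-I)\neq 0$) guarantees that $\widehat{R}_{\nu}(S_{W})$ is well defined by (\ref{rus1}).

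First I would invoke path-lifting for the covering $\pi^{\operatorname*{Mp}}:\operatorname*{Mp}(n)\to\operatorname*{Sp}(n)$ to produce a metaplectic isotopy $\widehat{\Sigma}=(\widehat{S}_{t})_{t\in[0,1]}$ starting at the identity and ending exactly at $\widehat{S}_{W,m}$; its projection is a symplectic isotopy $\Sigma$ ending at $S_{W}$. Proposition \ref{propcz} then yields
\[
\widehat{S}_{W,m}=\widehat{R}_{\nu(\widehat{\Sigma})}(S_{W}),
\]
where $\nu(\widehat{\Sigma})$ is understood modulo $4$. The preceding Proposition (formula (\ref{mod4})) gives
\[
\nu(\widehat{\Sigma})\equiv m-\operatorname*{Inert}W_{xx}\pmod{4},
\]
and substituting this into the previous displayed equation produces exactly (\ref{swmnu}).

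The only point requiring care — and effectively the main (minor) obstacle — is the consistency of reducing modulo $4$: one must check that $\widehat{R}_{\nu}(S)$ depends on $\nu$ only modulo $4$ (visible directly from the factor $i^{\nu}$ in (\ref{rus1})), and that two different choices of metaplectic isotopy terminating at the same $\widehat{S}_{W,m}$ produce Conley--Zehnder indices differing by a multiple of $4$. This last point follows from property (\textbf{CZ4}) together with the fact that $\operatorname*{Mp}(n)$ is the \emph{twofold} cover of $\operatorname*{Sp}(n)$: two lifts landing at the same metaplectic element correspond to symplectic loops representing an even multiple of the generator of $\pi_{1}[\operatorname*{Sp}(n)]$, and each traversal of that generator changes $i_{\mathrm{CZ}}$ by $2$ according to (\textbf{CZ4}). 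Hence both sides of (\ref{swmnu}) are unambiguously defined and the identity holds.
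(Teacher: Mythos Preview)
Your proof is correct and follows exactly the route the paper intends: the corollary is stated in the paper without an explicit proof, as an immediate consequence of Proposition~\ref{propcz} combined with formula~(\ref{mod4}), and that is precisely what you do. One small terminological quibble: what you need to produce a path in $\operatorname*{Mp}(n)$ from the identity to $\widehat{S}_{W,m}$ is the \emph{connectedness} of $\operatorname*{Mp}(n)$ rather than path-lifting (which goes from $\operatorname*{Sp}(n)$ up to $\operatorname*{Mp}(n)$); but the substance of your argument, including the careful check that $\widehat{R}_{\nu}(S)$ depends on $\nu$ only modulo $4$ and that two metaplectic isotopies with the same endpoint have Conley--Zehnder indices congruent modulo $4$, is entirely correct.
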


This allows us to give a rigorous explicit formula for the twisted Weyl symbol
of $\widehat{S}_{W,m}$:

\begin{corollary}
The twisted Weyl symbol of $\widehat{S}_{W,m}$ with $S_{W}\notin%
\operatorname*{Sp}\nolimits_{0}(n)$ is given by%
\begin{equation}
(s_{W})_{\sigma}(z)=\frac{i^{m-\operatorname*{Inert}W_{xx}}}{\sqrt{|\det
(S_{W}-I)|}}\exp\left(  \frac{i}{2\hbar}M_{W}z\cdot z\right)  \label{weylmp1}%
\end{equation}
where $M_{W}$ is the symplectic Cayley transform of $S_{W}$.
\end{corollary}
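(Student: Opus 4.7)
The plan is to derive the formula by unpacking the identification
$\widehat{S}_{W,m}=\widehat{R}_{m-\operatorname*{Inert}W_{xx}}(S_W)$ provided by the previous corollary and matching (\ref{rus1}) against (\ref{opw2}). Set $\nu=m-\operatorname*{Inert}W_{xx}$ and $S=S_W$. Starting from
\[
\widehat{R}_{\nu}(S)=\left(\tfrac{1}{2\pi\hbar}\right)^{n}i^{\nu}\sqrt{|\det(S-I)|}\int\widehat{T}(Sz_0)\widehat{T}(-z_0)\,d^{2n}z_0,
\]
I first collapse the product of two displacement operators into a single one using (\ref{tzotzo}). Concretely, $\widehat{T}(Sz_0)\widehat{T}(-z_0)=e^{\frac{i}{2\hbar}\sigma(Sz_0,-z_0)}\widehat{T}((S-I)z_0)=e^{-\frac{i}{2\hbar}\sigma(Sz_0,z_0)}\widehat{T}((S-I)z_0)$. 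Since $\sigma(z_0,z_0)=0$, this phase only depends on the "increment", i.e. $\sigma(Sz_0,z_0)=\sigma((S-I)z_0,z_0)$.

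Next, because $S\in\operatorname*{Sp}_{0}(n)$ the matrix $S-I$ is invertible, so the change of variables $z=(S-I)z_0$ is legitimate and gives $d^{2n}z_0=|\det(S-I)|^{-1}d^{2n}z$. In the new variable $z_0=(S-I)^{-1}z$ the phase becomes
\[
-\sigma(Sz_0,z_0)=-\sigma((S-I)z_0,z_0)=-Jz\cdot(S-I)^{-1}z=J(S-I)^{-1}z\cdot z,
\]
using $J^{T}=-J$ together with the symmetry of the Euclidean inner product. By the alternative expression $M(S)=\tfrac{1}{2}J+J(S-I)^{-1}$ in (\ref{symcay1}) and the fact that $Jz\cdot z=0$, this equals $M(S)z\cdot z=M_W z\cdot z$. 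Substituting back I obtain
\[
\widehat{R}_{\nu}(S)=\left(\tfrac{1}{2\pi\hbar}\right)^{n}\frac{i^{\nu}}{\sqrt{|\det(S-I)|}}\int e^{\frac{i}{2\hbar}M_W z\cdot z}\,\widehat{T}(z)\,d^{2n}z.
\]

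Finally I compare this Bochner integral with the representation (\ref{opw2}) of a Weyl operator in terms of its twisted symbol, $\operatorname*{Op}^{\mathrm W}(a)=(2\pi\hbar)^{-n}\int a_{\sigma}(z)\widehat{T}(z)\,d^{2n}z$. By the uniqueness of the twisted Weyl symbol (which follows from the injectivity of the symplectic Fourier transform on $\mathcal{S}'(\mathbb{R}^{2n})$, granted by the framework recalled before Proposition \ref{propcomp}) I can read off
\[
(s_W)_{\sigma}(z)=\frac{i^{m-\operatorname*{Inert}W_{xx}}}{\sqrt{|\det(S_W-I)|}}\exp\!\left(\frac{i}{2\hbar}M_W z\cdot z\right),
\]
which is (\ref{weylmp1}). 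The only point requiring care is the algebraic step identifying $-\sigma(Sz_0,z_0)$ after the change of variables with $M_W z\cdot z$; once this is in place, the rest is a direct reading of coefficients, so no genuine obstacle remains.
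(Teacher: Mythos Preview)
Your proof is correct and follows essentially the same route as the paper's own argument: invoke the identification $\widehat{S}_{W,m}=\widehat{R}_{m-\operatorname{Inert}W_{xx}}(S_W)$, combine $\widehat{T}(Sz_0)\widehat{T}(-z_0)$ via (\ref{tzotzo}), change variables $z=(S-I)z_0$, and identify the resulting Gaussian phase with $M_Wz\cdot z$ before reading off the twisted symbol from (\ref{opw2}). Your justification of the algebraic identity via $M(S)=\tfrac{1}{2}J+J(S-I)^{-1}$ and $Jz\cdot z=0$ is in fact slightly more explicit than the paper's, which simply asserts the identification as ``a simple algebraic calculation''.
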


\begin{proof}
In view of (\ref{rus1}) and (\ref{swmnu}) we have
\[
\widehat{S}_{W,m}=\left(  \tfrac{1}{2\pi\hbar}\right)  ^{n}%
i^{m-\operatorname*{Inert}W_{xx}}\sqrt{|\det(S_{W}-I)|}\int\widehat{T}%
(S_{W}z_{0})\widehat{T}(-z_{0})d^{2n}z;
\]
using formula (\ref{tzotzo}) this can be rewritten
\[
\widehat{S}_{W,m}=\left(  \tfrac{1}{2\pi\hbar}\right)  ^{n}%
i^{m-\operatorname*{Inert}W_{xx}}\sqrt{|\det(S_{W}-I)|}\int e^{-\frac
{i}{2\hbar}\sigma(S_{W}z_{0},z_{0})}\widehat{T}((S_{W}-I)z_{0})d^{2n}z_{0}.
\]
Making the change of variable $z_{0}\longmapsto(S_{W}-I)^{-1}z_{0}$ we get
\[
\widehat{S}_{W,m}=\left(  \frac{1}{2\pi\hbar}\right)  ^{n}\frac
{i^{m-\operatorname*{Inert}W_{xx}}}{\sqrt{|\det(S_{W}-I)|}}\int e^{-\frac
{i}{2\hbar}\sigma(S_{W}(S_{W}-I)^{-1}z_{0},(S_{W}-I)^{-1}z_{0})}%
\widehat{T}(z_{0})d^{2n}z_{0}%
\]
hence the twisted Weyl symbol of $\widehat{S}_{W,m}$ is
\[
(s_{W})_{\sigma}(z)=\frac{i^{m-\operatorname*{Inert}W_{xx}}}{\sqrt{|\det
(S_{W}-I)|}}e^{-\frac{i}{2\hbar}\sigma(S_{W}(S_{W}-I)^{-1}z_{0},(S_{W}%
-I)^{-1}z_{0})}.
\]
A simple algebraic calculation shows that
\[
\sigma(S_{W}(S_{W}-I)^{-1}z_{0},(S_{W}-I)^{-1}z_{0})=\frac{1}{2}%
J(S_{W}+I)(S_{W}-I)^{-1}=M(S_{W});
\]
formula (\ref{weylmp1}) follows.
\end{proof}

Proposition \ref{propcz} and formula (\ref{weylmp1}) suggest that the
Conley--Zehnder index is related to a choice of argument of the square root of
the determinant of $S-I$. This is indeed the case:

\begin{proposition}
\label{propsw}Let $\widehat{S}_{W,m}\in\operatorname*{Mp}(n)$ have projection
$S_{W}\notin\operatorname*{Sp}_{0}(n)$. We have%
\begin{equation}
\nu(\widehat{S}_{W,m})=n+\frac{1}{\pi}\arg\det(S_{W}-I)\text{ \ }%
\operatorname{mod}2. \label{argdet1}%
\end{equation}
that is%
\begin{equation}
\nu(\widehat{S}_{W,m})=\left\{
\begin{array}
[c]{c}%
n\text{ \ }\operatorname{mod}2\text{ \ if }S_{W}\in\operatorname*{Sp}%
\nolimits_{+}(n)\\
n+2\text{ \ }\operatorname{mod}2\text{ \ if }S_{W}\in\operatorname*{Sp}%
\nolimits_{-}(n)
\end{array}
\right.  . \label{argdet2}%
\end{equation}

\end{proposition}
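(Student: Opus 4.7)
The starting point is the modular identity (\ref{mod4}), $\nu(\widehat{S}_{W,m}) \equiv m - \operatorname*{Inert}W_{xx} \pmod 4$, which in particular yields the same congruence modulo $2$. The task therefore reduces to the purely sign-counting statement
\[
m - \operatorname*{Inert}W_{xx} \equiv n + \tfrac{1}{\pi}\arg\det(S_W - I) \pmod 2.
\]

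For the right-hand side I would compute $\det(S_W - I)$ in terms of the generating data $(P, L, Q)$. Inverting the relations $p = Px - Lx'$, $p' = L^T x - Qx'$ yields the block form
\[
S_W = \begin{pmatrix} L^{-1}P & -L^{-1} \\ L^T - QL^{-1}P & QL^{-1} \end{pmatrix},
\]
and a block-determinant manipulation, together with $W_{xx} = P - L - L^T + Q$, produces the identity
\[
\det(S_W - I) = (-1)^n (\det L)^{-1}\det W_{xx}.
\]
Taking signs and using $\operatorname*{sign}\det W_{xx} = (-1)^{\operatorname*{Inert} W_{xx}}$, this gives
\[
(-1)^{\frac{1}{\pi}\arg\det(S_W - I)} = (-1)^n\, \operatorname*{sign}(\det L)\,(-1)^{\operatorname*{Inert} W_{xx}}.
\]

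The second ingredient is the parity of $m$ built into the metaplectic cover (Appendix A): the normalization factor $i^m\sqrt{|\det L|}$ in the defining integral for $\widehat{S}_{W,m}$ must be a branch of $\sqrt{\det L}$, which forces $m$ to be even when $\det L > 0$ and odd when $\det L < 0$; thus $(-1)^m = \operatorname*{sign}\det L$. Substituting this into the previous identity yields
\[
(-1)^{\frac{1}{\pi}\arg\det(S_W - I)} = (-1)^{n + m + \operatorname*{Inert} W_{xx}},
\]
and since $+\operatorname*{Inert}W_{xx} \equiv -\operatorname*{Inert}W_{xx} \pmod 2$, this matches $\nu(\widehat{S}_{W,m}) \equiv m - \operatorname*{Inert}W_{xx} \pmod 2$, as required.

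The case-by-case formulation (\ref{argdet2}) then follows immediately by observing that $\arg\det(S_W - I)\in\{0,\pi\}$ according as $S_W \in \operatorname*{Sp}_+(n)$ or $\operatorname*{Sp}_-(n)$. The main obstacle is the block-determinant identity $\det(S_W - I) = (-1)^n (\det L)^{-1}\det W_{xx}$: once it is in hand, the parity constraint on $m$ is essentially part of the definition of $\operatorname*{Mp}(n)$ and the final modular arithmetic is bookkeeping.
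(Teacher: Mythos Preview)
Your approach is essentially the paper's: both reduce to the determinant identity $\det(S_W-I)=(-1)^n(\det L)^{-1}\det W_{xx}$ (the paper writes this as $(-1)^n\det B\,\det W_{xx}$ with $B=L^{-1}$), then invoke the parity constraint $(-1)^m=\operatorname{sign}\det L$ from Appendix~A and the congruence~(\ref{mod4}).

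One slip to fix: the block matrix you display is $S_W^{-1}$, not $S_W$. Solving $p=Px-Lx'$, $p'=L^Tx-Qx'$ for $(x,p)$ in terms of $(x',p')$ gives the upper-right block $(L^T)^{-1}$, not $-L^{-1}$; compare formula~(\ref{A6}). This does not damage the argument, since $\det(S_W^{-1}-I)=\det(S_W-I)$ for symplectic $S_W$, but you should either correct the matrix or insert that remark. Also, the ``block-determinant manipulation'' you leave implicit is exactly where the paper does its work: it exhibits an explicit factorization of $S_W-I$ and uses the symplectic relation $C-DB^{-1}A=-(B^T)^{-1}$ to identify the block $W_{xx}$. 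Your proof would be complete once you supply that computation (or an equivalent one in the $(P,L,Q)$ variables).
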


\begin{proof}
The projection $S_{W}=\pi^{\operatorname*{Mp}}(\widehat{S}_{W,m})$ is a free
symplectic matrix, in block-matrix form%
\[
S_{W}=%
\begin{pmatrix}
A & B\\
C & D
\end{pmatrix}
\text{ \ , \ }\det B\neq0.
\]
A straightforward calculation yields the factorization
\[
S_{W}-I=%
\begin{pmatrix}
0 & B\\
I & D-I
\end{pmatrix}%
\begin{pmatrix}
C-(D-I)B^{-1}(A-I) & 0\\
B^{-1}(A-I) & I
\end{pmatrix}
.
\]
Since $S_{W}\in\operatorname*{Sp}(n)$ we have $C-DB^{-1}A=-(B^{T})^{-1}$ and
hence
\[
C-(D-I)B^{-1}(A-I)=B^{-1}A+DB^{-1}-(B^{T})^{-1}=W_{xx}%
\]
so that
\[
S_{W}-I=%
\begin{pmatrix}
0 & B\\
I & D-I
\end{pmatrix}%
\begin{pmatrix}
W_{xx} & 0\\
B^{-1}(A-I) & I
\end{pmatrix}
.
\]
It follows that
\[
\det(S_{W}-I)=(-1)^{n}\det B\det W_{xx}%
\]
and hence%
\[
\arg\det(S_{W}-I)=n\pi+\arg\det B+\arg\det W_{xx}\text{ \ }\operatorname{mod}%
2\pi.
\]
Noticing that $\arg\det W_{xx}=\pi\operatorname{Inert}W_{xx}$ and that this is%
\[
\arg\det(S_{W}-I)=n\pi+\arg\det B+\pi\operatorname{Inert}W_{xx}\text{
\ }\operatorname{mod}2\pi.
\]
In view of formula (\ref{mod4}) and (\ref{weylmp1}) we have $\arg\det(B) =
m\pi$ (see Appendix A) and hence%
\[
\arg\det(S_{W}-I)=(n+m-\operatorname{Inert}W_{xx})\pi\text{ \ }%
\operatorname{mod}2\pi
\]
that is
\[
\arg\det(S_{W}-I)=(n+\nu(\widehat{S}_{W,m}))\pi\text{ \ }\operatorname{mod}%
2\pi
\]
which yields (\ref{argdet1}).
\end{proof}

\subsection{Products of metaplectic operators}

Each $\widehat{S}\in\operatorname*{Mp}(n)$ can be written as a product
$\widehat{S}_{W,m}\widehat{S}_{W^{\prime},m^{\prime}}$ (Appendix A,
Proposition \ref{propA1}). It turns out that $\widehat{S}_{W,m}$ and
$\widehat{S}_{W^{\prime},m^{\prime}}$ can be chosen so that their projections
$S_{W}$ and $S_{W^{\prime}}$ have no eigenvalue equal to one. This fact,
together with the composition formula\ (\ref{cecomp}) leads to a complete
characterization of the symbol of a metaplectic operator. When $\widehat{S}$
has projection $S\notin\operatorname*{Sp}\nolimits_{0}(n)$ we have the
following explicit result:

\begin{proposition}
Let $\widehat{S}\in\operatorname*{Mp}(n)$ be such that $\pi
^{\operatorname*{Mp}}(\widehat{S})\notin\operatorname*{Sp}\nolimits_{0}(n)$.

(i) There exist $\widehat{S}_{W,m}$ and $\widehat{S}_{W^{\prime},m^{\prime}}$
such that $\widehat{S}=\widehat{S}_{W,m}\widehat{S}_{W^{\prime},m^{\prime}}$;
moreover these operators can be chosen so that $S_{W}=\pi^{\operatorname*{Mp}%
}(\widehat{S}_{W,m})\notin\operatorname*{Sp}\nolimits_{0}(n)$ and
$S_{W^{\prime}}=\pi^{\operatorname*{Mp}}(\widehat{S}_{W^{\prime},m^{\prime}%
})\notin\operatorname*{Sp}\nolimits_{0}(n)$.

(ii) We have%
\begin{equation}
\widehat{S}=\widehat{R}_{\nu+\nu^{\prime}+\frac{1}{2}\operatorname*{sign}%
(M)}(S)=\widehat{R}_{\nu(\widehat{S})}(S) \label{sr}%
\end{equation}
where $M=M_{W}+M_{W^{\prime}}$ ($M_{W}$ and $M_{W^{\prime}}$ the symplectic
Cayley transforms of $S_{W}$ and $S_{W^{\prime}}$), and%
\begin{equation}
\nu=m-\operatorname*{Inert}W_{xx}\text{ \ , \ }\nu^{\prime}=m^{\prime
}-\operatorname*{Inert}W_{xx}^{\prime} \label{smm}%
\end{equation}
are the Conley--Zehnder indices of $\widehat{S}_{W,m}$ and $\widehat{S}%
_{W^{\prime},m^{\prime}}$;

(iii) The twisted Weyl symbol of $\widehat{S}$ is given by
\begin{equation}
s_{\sigma}(z)=\frac{i^{\nu(\widehat{S})}}{\sqrt{|\det(S-I)|}}\exp\left(
\frac{i}{2\hbar}Mz\cdot z\right)  \label{productmw}%
\end{equation}
with
\begin{equation}
\nu(\widehat{S})=\nu+\nu^{\prime}+\tfrac{1}{2}\operatorname*{sign}M.
\label{nus}%
\end{equation}

\end{proposition}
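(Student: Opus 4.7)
The plan is to establish (i) by reduction to the known factorization in Proposition \ref{propA1}, then obtain (ii) and (iii) simultaneously by computing the twisted Weyl symbol of the product $\widehat{S}_{W,m}\widehat{S}_{W',m'}$ via the twisted convolution formula (\ref{cecomp}), using the explicit Gaussian symbols (\ref{weylmp1}) as inputs.

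For (i), I would start from the standard factorization $\widehat{S} = \widehat{S}_{W,m}\widehat{S}_{W',m'}$ given by Proposition \ref{propA1} of Appendix A. To achieve the additional constraint that both $S_W$ and $S_{W'}$ lie outside $\operatorname{Sp}_0(n)$, I would use a density/perturbation argument: the set of free symplectic matrices $S_W$ with $S_W \notin \operatorname{Sp}_0(n)$ is open and dense in the set of free matrices (the condition $\det(S_W-I)\neq 0$ excludes a proper algebraic subvariety), and the same holds for $S_{W'}$; since the product map is continuous and we may pre-compose one factor and post-compose the other by a small metaplectic perturbation while keeping their product fixed (by inserting $\widehat{T}\widehat{T}^{-1}$-type identities, or more cleanly by using that the factorization depends continuously on the chosen $W,W'$), we can choose $W,W'$ so that both conditions hold.

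For (ii) and (iii), with the factorization in hand, the twisted Weyl symbols of the factors are given by (\ref{weylmp1}):
\[
(s_{W})_{\sigma}(z) = \frac{i^{\nu}}{\sqrt{|\det(S_W-I)|}} \exp\!\left(\tfrac{i}{2\hbar} M_W z\cdot z\right),
\]
with the analogous expression for $(s_{W'})_{\sigma}$. Applying (\ref{cecomp}), the twisted symbol $s_{\sigma}$ of $\widehat{S}$ is
\[
s_{\sigma}(z) = \frac{i^{\nu+\nu'}}{(2\pi\hbar)^{n}\sqrt{|\det(S_W-I)||\det(S_{W'}-I)|}} \int e^{\tfrac{i}{2\hbar}\Phi(z,z')} d^{2n}z',
\]
where $\Phi(z,z') = \sigma(z,z') + M_W(z-z')\cdot(z-z') + M_{W'}z'\cdot z'$. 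Completing the square in $z'$ produces a quadratic form whose Hessian with respect to $z'$ is exactly $M = M_W + M_{W'}$, which is invertible by Lemma \ref{leminv}, and whose critical point gives a residual quadratic form in $z$ equal to $Mz\cdot z$ after using the identity (\ref{MJ}) and the symplectic Cayley transform formulas. I would then evaluate the remaining Gaussian integral by the generalized Fresnel formula (\ref{Fresnel}) applied to $A = -iM$, producing a factor $(2\pi\hbar)^{n}|\det M|^{-1/2}\, i^{\frac{1}{2}\operatorname{sign}M}$. Invoking Lemma \ref{leminv} once more yields $|\det M| = |\det(S-I)|/(|\det(S_W-I)||\det(S_{W'}-I)|)$, which collapses the prefactors to $|\det(S-I)|^{-1/2}$ and produces the claimed formula (\ref{productmw}) with total phase $i^{\nu+\nu'+\frac{1}{2}\operatorname{sign}M}$.

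The final step is to identify this integer with $\nu(\widehat{S})$. This is exactly the product formula (\ref{modprod}) for the Conley--Zehnder index applied to the symplectic isotopies projecting to $S_W$ and $S_{W'}$ (concatenated to give one projecting to $S$), together with Proposition \ref{propcz} which identifies each $\widehat{S}_{W,m}$ with $\widehat{R}_{\nu}(S_W)$. This immediately gives (\ref{sr}) and (\ref{nus}). The main obstacle will be the careful bookkeeping of the phase in the Fresnel evaluation: the signature $\operatorname{sign}M$ must emerge with the correct sign convention so as to match $\frac{1}{2}\operatorname{sign}M$ in (\ref{modprod}); this requires tracking the argument of $\det(-iM)^{-1/2}$ under the branch convention specified after (\ref{Fresnel}) and verifying that the residual quadratic in $z$ produced by completing the square coincides, up to the symplectic term $\sigma(z,z')$, with $Mz\cdot z$ — an algebraic identity whose verification relies crucially on the Cayley transform representation $M_W = \tfrac12 J + J(S_W-I)^{-1}$ and its conjugation properties (\ref{conjcay}).
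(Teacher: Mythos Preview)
Your approach is essentially the same as the paper's: for (i) you invoke the factorization from Proposition~\ref{propA1} and argue by perturbation that the factors can be taken with projections outside $\operatorname{Sp}_0(n)$ (the paper simply cites \cite{LMP} for this refinement); for (ii)--(iii) you feed the Gaussian twisted symbols (\ref{weylmp1}) into the twisted convolution (\ref{cecomp}), evaluate by Fresnel, and reduce the determinant via Lemma~\ref{leminv}, exactly as the paper sketches. One small caution: the residual quadratic form in $z$ after completing the square is the symplectic Cayley transform $M(S)$ of the product $S=S_WS_{W'}$, not literally $M_W+M_{W'}$; the matrix $M=M_W+M_{W'}$ is the Hessian in the integration variable $z'$ and governs the Fresnel amplitude and the $\tfrac{1}{2}\operatorname{sign}M$ phase, so keep the two roles of ``$M$'' separate when you carry out the algebra.
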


\begin{proof}
See Proposition 10 in \cite{LMP} or \cite{Birk}, \S 7.4 for detailed proofs.
That $\widehat{S}$ can always be factored as $\widehat{S}_{W,m}\widehat{S}%
_{W^{\prime},m^{\prime}}$ where $\widehat{S}_{W,m}$ and $\widehat{S}%
_{W^{\prime},m^{\prime}}$ have projections $S_{W}$ and $S_{W^{\prime}}$ not in
$\operatorname*{Sp}\nolimits_{0}(n)$ was proven in \cite{LMP}. For formula
(\ref{productmw}) the idea is to apply formula (\ref{cecomp}) to
(\ref{weylmp1}) and to use the Fresnel formula (\ref{Fresnel}), which yields,
after some calculations%
\[
c_{\sigma}(z)=\frac{i^{\nu+\nu^{\prime}+\frac{1}{2}\operatorname*{sign}(M)}%
}{\sqrt{|\det[(S_{W}-I)(S_{W^{\prime}}-I)M]|}}e^{\frac{i}{2\hbar}Mz\cdot z}.
\]
A simple calculation taking into account the definition of the symplectic
Cayley transforms shows that%
\begin{equation}
(S_{W}-I)(S_{W^{\prime}}-I)M=S-I \label{identity}%
\end{equation}
($M$ is invertible in view of Lemma \ref{leminv}).
\end{proof}

We have seen in Proposition \ref{propsw} that the Conley--Zehnder index of a
quadratic Fourier transform $\widehat{S}_{W,m}$ is simply related to a choice
of argument for $\det(S_{W}-I)$. Using the result above, this observation can
be generalized to the case of an arbitrary $\widehat{S}\in\operatorname*{Mp}%
(n)$ with projection $S\notin\operatorname*{Sp}\nolimits_{0}(n)$:

\begin{corollary}
Let $\widehat{S}\in\operatorname*{Mp}(n)$ with $S=\pi^{\operatorname*{Mp}%
}(\widehat{S})\notin\operatorname*{Sp}\nolimits_{0}(n)$. We have%
\begin{equation}
\nu(\widehat{S})=n+\frac{1}{\pi}\operatorname*{Arg}\det(S-I)\text{
\ }\operatorname{mod}2. \label{argdet3}%
\end{equation}

\end{corollary}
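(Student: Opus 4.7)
The plan is to reduce the general case to the free-symplectic case already treated in Proposition \ref{propsw}, using the factorization results of the previous Proposition together with the multiplicativity of the determinant.

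First, I will use part (i) of the previous Proposition to factor $\widehat{S}=\widehat{S}_{W,m}\widehat{S}_{W',m'}$ in such a way that the projections $S_W,S_{W'}$ both avoid $\operatorname*{Sp}_0(n)$. By Proposition \ref{propsw} applied to each factor,
\[
\nu(\widehat{S}_{W,m})\equiv n+\tfrac{1}{\pi}\arg\det(S_W-I),\qquad
\nu(\widehat{S}_{W',m'})\equiv n+\tfrac{1}{\pi}\arg\det(S_{W'}-I)\pmod{2},
\]
while formula (\ref{nus}) of the preceding Proposition gives $\nu(\widehat{S})=\nu+\nu'+\tfrac12\operatorname*{sign}M$ with $M=M_W+M_{W'}$.

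Next I will exploit identity (\ref{identity}), namely $(S_W-I)(S_{W'}-I)M=S-I$, which upon taking determinants yields
\[
\arg\det(S-I)\equiv \arg\det(S_W-I)+\arg\det(S_{W'}-I)+\arg\det M\pmod{2\pi}.
\]
The key bridging step is to convert $\arg\det M$ into $\operatorname*{sign}M$ modulo $2$. Since $M$ is a real invertible symmetric $2n\times2n$ matrix (Lemma \ref{leminv}), $\det M$ is real and its sign is determined by the parity $q$ of its negative eigenvalues, so $\tfrac{1}{\pi}\arg\det M\equiv q\pmod{2}$. Writing $p$ for the number of positive eigenvalues, $p+q=2n$ and $\operatorname*{sign}M=p-q=2n-2q$, hence
\[
\tfrac{1}{2}\operatorname*{sign}M=n-q\equiv n+q\pmod{2},
\]
so that $\tfrac{1}{\pi}\arg\det M\equiv n+\tfrac{1}{2}\operatorname*{sign}M\pmod 2$.

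Combining these three ingredients,
\[
\nu(\widehat{S})\equiv 2n+\tfrac{1}{\pi}\bigl(\arg\det(S_W-I)+\arg\det(S_{W'}-I)\bigr)+\tfrac{1}{2}\operatorname*{sign}M\pmod 2,
\]
and substituting the expression for $\arg\det(S_W-I)+\arg\det(S_{W'}-I)$ from the determinant identity together with the bridge $\tfrac{1}{\pi}\arg\det M\equiv n+\tfrac12\operatorname*{sign}M$, the $\tfrac12\operatorname*{sign}M$ terms cancel and the parities collapse to $n+\tfrac{1}{\pi}\arg\det(S-I)\pmod 2$, which is (\ref{argdet3}). The only nontrivial point is the parity translation between $\arg\det M$ and $\operatorname*{sign}M$, and this is precisely what makes the Cayley signature $\tfrac12\operatorname*{sign}M$ in (\ref{nus}) disappear in the final congruence; the rest is bookkeeping of arguments modulo $2\pi$.
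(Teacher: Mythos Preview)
Your proof is correct and follows essentially the same route as the paper's: factor $\widehat{S}=\widehat{S}_{W,m}\widehat{S}_{W',m'}$ with good projections, apply Proposition~\ref{propsw} to each factor, use (\ref{nus}) and the determinant identity (\ref{identity}), and bridge $\arg\det M$ with $\operatorname*{sign}M$ via the inertia of the invertible symmetric matrix $M$. The paper phrases the bridge as $\arg\det M=\pi\operatorname{Inert}M$ together with $\operatorname*{sign}M=2(n-\operatorname{Inert}M)$, which is exactly your computation $\tfrac{1}{\pi}\arg\det M\equiv q\equiv n+\tfrac{1}{2}\operatorname*{sign}M\pmod 2$ in slightly different notation.
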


\begin{proof}
Writing $\widehat{S}=\widehat{S}_{W,m}\widehat{S}_{W^{\prime},m^{\prime}}$
with $S_{W}$ and $S_{W^{\prime}}$ not in $\operatorname*{Sp}\nolimits_{0}(n)$
it follows from the identity (\ref{identity}) that
\[
\det\left[  (S_{W}-I)(S_{W^{\prime}}-I)M\right]  =\det(S-I)
\]
with $M=M_{W}+M_{W^{\prime}}$ and hence
\[
\arg\det(S-I)=\arg\det(S_{W}-I)+\arg\det(S_{W^{\prime}}-I)+\arg\det M .
\]
Since $M=M_{W}+M_{W^{\prime}}$ is invertible (Lemma \ref{leminv}) we have
\[
\arg\det M=\pi\operatorname{Inert}M=-\pi\operatorname{Inert}M\text{
\ }\operatorname{mod}2\pi
\]
and hence, using formulas (\ref{argdet1}) and (\ref{modprod}) together with
the relation $\operatorname*{sign}M=2(n-\operatorname{Inert}M)$,
\begin{align*}
\arg\det(S-I)  &  =\nu(\widehat{S}_{W,m})\pi+\nu(\widehat{S}_{W,m}%
)-\pi(n-\tfrac{1}{2}\operatorname*{sign}M)\text{\ \ }\operatorname{mod}2\pi\\
&  =\nu(\widehat{S}_{W,m})\pi+\nu(\widehat{S}_{W,m})-n\pi+\tfrac{1}{2}%
\pi\operatorname*{sign}M\text{\ \ }\operatorname{mod}2\pi\\
&  =\nu(\widehat{S})\pi-n\pi\text{ \ }\operatorname{mod}2\pi
\end{align*}
proving formula (\ref{argdet3}).
\end{proof}

\section{Gaussian Density Operators\label{secdensity}}

\subsection{Generalities}

A density operator on a complex Hilbert space $\mathcal{H}$ is a positive
semidefinite (and hence selfadjoint) trace class operator $\widehat{\rho}$ on
$\mathcal{H}$ with unit trace. Every trace class operator is the product of
two Hilbert--Schmidt operators, and is hence compact. The spectral theorem for
compact operators implies that there exists a family $(\psi_{j})$ of
orthonormal vectors in $\mathcal{H}$ such that
\begin{equation}
\widehat{\rho}=\sum_{j}\lambda_{j}\widehat{\Pi}_{j} \label{spec1}%
\end{equation}
where $\widehat{\Pi}_{j}$ is the orthogonal projection on the vector $\psi
_{j}$; the $\lambda_{j}$ are the eigenvalues corresponding to the eigenvectors
$\psi_{j}$ and we have $\operatorname*{Tr}(\widehat{\rho})=\sum_{j}\lambda
_{j}=1$. In what follows we will assume that $\mathcal{H}=L^{2}(\mathbb{R}%
^{n})$. Let $a$ be the Weyl symbol of $\widehat{\rho}$; by definition
$\rho=(2\pi\hbar)^{-n}a$ is the Wigner distribution of $\widehat{\rho}$.
Taking (\ref{spec1}) into account we have $\rho=\sum_{j}\lambda_{j}W\psi_{j}$
where $W\psi_{j}$ is the Wigner transform of $\psi_{j}$.

Let $\widehat{A}=\operatorname*{Op}\nolimits^{\mathrm{W}}(a)$ and
$\widehat{B}=\operatorname*{Op}\nolimits^{\mathrm{W}}(b)$ be two trace class
operators (or, more generally, Hilbert--Schmidt operators). Since
Hilbert-Schmidt operators are precisely those with kernels in $L^{2}%
(\mathbb{R}^{n}\times\mathbb{R}^{n})$ it follows that $a\in L^{2}%
(\mathbb{R}^{2n})$ and $b\in L^{2}(\mathbb{R}^{2n})$. The product
$\widehat{A}\widehat{B}$ is of trace class and we have
\begin{equation}
\operatorname*{Tr}(\widehat{A}\widehat{B})=\left(  \tfrac{1}{2\pi\hbar
}\right)  ^{n}\int a(z)b(z)d^{2n}z. \label{trab1}%
\end{equation}
Let $a_{\sigma}=F_{\sigma}a$ be the symplectic Fourier transform of $a$ (see
(\ref{SFT})). The formula%
\begin{equation}
\operatorname*{Tr}(\widehat{A})=\left(  \tfrac{1}{2\pi\hbar}\right)  ^{n}\int
a(z)d^{2n}z=a_{\sigma}(0) \label{tra}%
\end{equation}
is often used in the literature; one should however be aware that it is only
true if one assumes that in addition $a\in L^{1}(\mathbb{R}^{2n})$ (see
\cite{Brislawn,duwong}). (It is instructive to read B. Simon's analysis in
\cite{BSimon} of trace formulas of this type; also see Shubin \cite{sh87}, \S 27).

\subsection{Gaussian states}

Let $\widehat{\rho}$ be a density operators whose Wigner distribution is a
Gaussian:%
\begin{equation}
\rho(z)=(2\pi)^{-n}\sqrt{\det V^{-1}}e^{-\frac{1}{2}V^{-1}z\cdot z}.
\label{rhosig2}%
\end{equation}
The covariance matrix $V$ is a positive definite symmetric (real) $2n\times2n$
matrix, and $z=(x,p)$ is in the phase space $\mathbb{R}^{2n}\equiv
\mathbb{R}^{n}\times\mathbb{R}^{n}$. A necessary and sufficient condition for
a function (\ref{rhosig2}) to represent a quantum state is that the Hermitian
matrix $V+(i\hbar/2)J$ ($J$ the standard symplectic matrix) has no negative
eigenvalues; for short%
\begin{equation}
V+\frac{i\hbar}{2}J\geq0\text{.} \label{quantum1}%
\end{equation}
This condition ensures that the density operator $\widehat{\rho}$ is indeed
positive semidefinite and is equivalent in the Gaussian case to the
Robertson--Schr\"{o}dinger uncertainty principle: see de Gosson and Luef
\cite{goluPR}). It will be convenient to set $V=\frac{1}{2}\hbar F^{-1}$; with
that notation the Wigner distribution of the Gaussian state (\ref{rhosig2})
can be written
\begin{equation}
\rho(z)=(\pi\hbar)^{-n}\sqrt{\det F}e^{-\frac{1}{\hbar}Fz\cdot z}
\label{rhof1}%
\end{equation}
and the quantum condition (\ref{quantum1}) becomes $F^{-1}+iJ\geq0$. The
symplectic Fourier transform (\ref{SFT}) of $\rho$ is given by{
\begin{equation}
\rho_{\sigma}(z)=(2\pi\hbar)^{-n}e^{\frac{1}{4\hbar}JF^{-1}Jz\cdot z}%
=(2\pi\hbar)^{-n}e^{-\frac{1}{4\hbar}F^{-1}Jz\cdot Jz}. \label{rhof1sig}%
\end{equation}
}Notice that if $F=F^{T}\in\operatorname*{Sp}(n)$ then $JFJ=-F^{-1}$ hence, in
this case{
\begin{equation}
\rho_{\sigma}(z)=(2\pi\hbar)^{-n}e^{-\frac{1}{4\hbar}Fz\cdot z} \label{little}%
\end{equation}
which is the symplectic Fourier transform of the Wigner transform of a
generalized coherent state} \cite{Littlejohn,Birk} (see below).

The purity of $\widehat{\rho}$ is by definition $\mu=\operatorname*{Tr}%
(\widehat{\rho}^{2})$, and we have in the Gaussian case
\begin{equation}
\mu=\left(  \frac{\hbar}{2}\right)  ^{n}\det(V)^{-1/2}=\sqrt{\det F}.
\label{puresigma}%
\end{equation}
Thus $\widehat{\rho}$ is a pure state ($\mu=1$) if and only if $\det
(V)=(\hbar/2)^{2n}$, that is $\det F=1$. One shows \cite{Littlejohn,Birk} that
this equivalent to the existence of $R\in\operatorname*{Sp}(n)$ such that
$V=R^{T}R$. It follows that the only Gaussian pure states are those with
Wigner distribution%
\begin{equation}
\rho(z)=(\pi\hbar)^{-n}e^{-\tfrac{1}{\hbar}S^{T}Sz\cdot z}.
\end{equation}

Let $\phi_{0}$ be the standard coherent state: $\phi_{0}(x)=(\pi\hbar
)^{-n/4}e^{-|x|^{2}/2\hbar}$. The action of the local metaplectic operators
$\widehat{M}_{L,m}$ and $\widehat{V}_{-P}$ on $\phi_{0}$ is the $L^{2}%
$-normalized Gaussian $\psi_{X,Y}=\widehat{V}_{-P}\widehat{M}_{L,m}\phi_{0}$
given by%
\[
\psi_{X,Y}(x)=i^{m}(\pi\hbar)^{-n/4}(\det X)^{1/4}e^{-\tfrac{1}{2\hbar
}(X+iY)x\cdot x}%
\]
where $X=L^{T}L$ and $Y=P$. The Wigner transform of $\psi_{X,Y}$ is
(\cite{Littlejohn}, \cite{Birk} \S 8.5):%
\begin{equation}
W\psi_{X,Y}(z)=(\pi\hbar)^{-n}e^{-\tfrac{1}{\hbar}Gz\cdot z} \label{wigauss1}%
\end{equation}
where $G$ is the positive definite symplectic and symmetric matrix%
\begin{equation}
G=%
\begin{pmatrix}
X+YX^{-1}Y & YX^{-1}\\
X^{-1}Y & X^{-1}%
\end{pmatrix}
. \label{gsym}%
\end{equation}
The ambiguity function $F_{\sigma}\rho_{X,Y}$ is easily calculated and one
finds that%
\begin{equation}
F_{\sigma}\rho_{X,Y}(z)=\left(  \tfrac{1}{2\pi\hbar}\right)  ^{n}e^{-\tfrac
{1}{4\hbar}Gz^{2}}. \label{ambgauss1}%
\end{equation}

\section{Relative Phase Shifts\label{secpancha}}

\subsection{A general result}

We will need the following generalization of formula (\ref{trab1}):

\begin{lemma}
\label{Lemmatrab}Let $\widehat{S}\in\operatorname*{Mp}(n)$ be such that
$\pi^{\operatorname*{Mp}}(\widehat{S})\notin\operatorname*{Sp}\nolimits_{0}%
(n)$ and $\rho$ the Gaussian distribution (\ref{rhof1}). The product
$\widehat{S}\widehat{\rho}$ is of trace class and we have
\begin{equation}
\operatorname*{Tr}(\widehat{S}\widehat{\rho})=\left(  \tfrac{1}{2\pi\hbar
}\right)  ^{n}\int s_{\sigma}(z)\rho_{\sigma}(z)d^{2n}z \label{trab3}%
\end{equation}
where $s_{\sigma}$ is the twisted Weyl symbol of $\widehat{S}$.
\end{lemma}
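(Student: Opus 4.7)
The strategy is to use the Bochner-integral representation~(\ref{opw2}) of $\widehat S$ in terms of its twisted Weyl symbol, reducing $\operatorname*{Tr}(\widehat S\widehat\rho)$ to a scalar integral against the ``quantum characteristic function'' $z_{0}\mapsto\operatorname*{Tr}(\widehat T(z_{0})\widehat\rho)$, which I will in turn identify with a multiple of $\rho_\sigma$ via the cross-ambiguity function~(\ref{apsifi}).

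First I would note that $\widehat S$ is unitary while $\widehat\rho$ is of trace class, so $\widehat S\widehat\rho$ is of trace class because the latter is a two-sided ideal in $\mathcal{B}(L^{2}(\mathbb{R}^{n}))$. The hypothesis $S=\pi^{\operatorname*{Mp}}(\widehat S)\notin\operatorname*{Sp}\nolimits_{0}(n)$ is essential: it is exactly what makes formula~(\ref{productmw}) available, exhibiting $s_\sigma$ as a Gaussian phase of constant modulus $|\det(S-I)|^{-1/2}$. Combined with the Gaussian decay of $\rho_\sigma$ built into~(\ref{rhof1sig}) (which comes from the negative-definiteness of the quadratic form $JF^{-1}J$ acting on $\mathbb{R}^{2n}$), this places $s_\sigma\rho_\sigma$ in $L^{1}(\mathbb{R}^{2n})$, so the integral on the right-hand side of the lemma makes sense as an absolutely convergent one.

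The key intermediate identity is
\begin{equation*}
\operatorname*{Tr}\bigl(\widehat T(z_{0})\widehat\rho\bigr)=(2\pi\hbar)^{n}\rho_{\sigma}(z_{0}).
\end{equation*}
I would prove it by expanding $\widehat\rho=\sum_{j}\lambda_{j}\widehat\Pi_{\psi_{j}}$ spectrally, so that $\operatorname*{Tr}(\widehat T(z_{0})\widehat\rho)=\sum_{j}\lambda_{j}(\psi_{j}|\widehat T(z_{0})\psi_{j})$; definition~(\ref{apsifi}) rewrites each inner product as $(2\pi\hbar)^{n}A(\psi_{j},\psi_{j})(z_{0})$, and the sum collapses using the relation $A(\psi,\psi)=F_{\sigma}W\psi$, linearity of the symplectic Fourier transform, and the spectral representation $\rho=\sum_{j}\lambda_{j}W\psi_{j}$ of the Wigner distribution recalled in Section~\ref{secdensity}.

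With these pieces in hand, substituting~(\ref{opw2}) into $\operatorname*{Tr}(\widehat S\widehat\rho)$ and interchanging the trace with the $d^{2n}z_{0}$-integral should yield the result. The main obstacle is precisely this interchange, because $s_\sigma$ is bounded but not integrable, the $L^{2}$-pairing formula~(\ref{trab1}) cannot be invoked directly ($\widehat S$ is unitary, not Hilbert--Schmidt), and the Bochner integral representing $\widehat S$ converges only in the weak operator sense. The remedy is that after taking the trace against $\widehat\rho$ the scalar integrand $s_{\sigma}(z_{0})\operatorname*{Tr}(\widehat T(z_{0})\widehat\rho)$ inherits the Gaussian decay of $\rho_\sigma$ from the identity above, hence is absolutely integrable; a Fubini/approximation argument---for instance, truncating $s_\sigma$ by a Gaussian cutoff, applying the Hilbert--Schmidt trace formula~(\ref{trab1}) to the resulting approximants, and passing to the limit via continuity of the trace in the trace-norm---legitimizes the exchange and, after collecting the $(2\pi\hbar)^{\pm n}$ prefactors, produces the asserted identity.
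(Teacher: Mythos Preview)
Your argument is correct in outline but follows a genuinely different path from the paper. The paper does not pass through the ``characteristic function'' $\operatorname*{Tr}(\widehat T(z_{0})\widehat\rho)$ at all; instead it applies the twisted-convolution formula~(\ref{cecomp}) directly to obtain the twisted symbol $c_{\sigma}$ of the product $\widehat C=\widehat S\widehat\rho$, observes that $c_{\sigma}\in\mathcal S(\mathbb R^{2n})$ because $\rho_{\sigma}$ is Schwartz, and then reads off the trace as $c_{\sigma}(0)$ via~(\ref{tra}). This sidesteps entirely the weak-convergence/interchange issue you flag as the ``main obstacle'': the integral defining $c_{\sigma}(0)$ is already absolutely convergent from the outset, so no approximation by Hilbert--Schmidt truncations is needed. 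Your route is more hands-on and makes the role of the ambiguity function explicit, but at the price of the interchange justification, which you do sketch plausibly.

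One small slip: with the paper's convention $(\psi|\phi)=\int\psi\,\overline\phi$, the trace of $\widehat T(z_{0})\widehat\Pi_{\psi_{j}}$ equals $(\widehat T(z_{0})\psi_{j}|\psi_{j})$, not $(\psi_{j}|\widehat T(z_{0})\psi_{j})$; consequently your intermediate identity should read $\operatorname*{Tr}(\widehat T(z_{0})\widehat\rho)=(2\pi\hbar)^{n}\rho_{\sigma}(-z_{0})$ in general (compare the appearance of $F_{\sigma}\rho(-z)$ in the proof of Theorem~\ref{Thm1}). This is harmless here because the centered Gaussian $\rho_{\sigma}$ of~(\ref{rhof1sig}) is even.
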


\begin{proof}
The product $\widehat{C}=\widehat{S}\widehat{\rho}$ is of trace class because
trace class operators form a two-sided ideal in the algebra of bounded
operators on $L^{2}(\mathbb{R}^{n})$. We can however not apply directly
formula (\ref{trab1}) since $\widehat{S}$ is not a Hilbert--Schmidt operator.
Let us proceed as follows: in view of formula (\ref{cecomp}) in Proposition
\ref{propcomp}. The twisted Weyl symbol $c_{\sigma}$ of $\widehat{C}$ is given
by the absolutely convergent integral
\[
c_{\sigma}(z)=\left(  \tfrac{1}{2\pi\hbar}\right)  ^{n}\int e^{-\frac
{i}{2\hbar}\sigma(z,z^{\prime})}s_{\sigma}(z^{\prime})\rho_{\sigma
}(z-z^{\prime})d^{2n}z^{\prime}%
\]
where the twisted symbol $s_{\sigma}$ of $\widehat{S}$ is given by formula
(\ref{productmw}) and $\rho_{\sigma}$ is the Gaussian (\ref{rhof1sig}). Since
$\rho_{\sigma}\in\mathcal{S}(\mathbb{R}^{2n})$ we have $c_{\sigma}%
\in\mathcal{S}(\mathbb{R}^{2n})$ and hence also $c\in\mathcal{S}%
(\mathbb{R}^{2n})$, and we may therefore apply the trace formula (\ref{tra})
which yields%
\[
\operatorname*{Tr}(\widehat{S}\widehat{\rho})=\operatorname*{Tr}%
(\widehat{C})=c_{\sigma}(0)
\]
which is precisely formula (\ref{trab3}).
\end{proof}

\begin{notation}
\label{remnotation}To simplify the statements below it will be convenient to
write $\nu(S_{t})$ for the Conley--Zehnder index of the symplectic isotopy
$t^{\prime}\longmapsto S_{t^{\prime}}$ for $0\leq t^{\prime}\leq t$ and
$\nu(\widehat{S}_{t})$ for the Conley--Zehnder index of the corresponding
metaplectic isotopy.
\end{notation}

Using this notation we have:

\begin{theorem}
\label{Thm1} Let $\Sigma=(S_{t})_{t\in I}$ be a symplectic isotopy with
endpoint $S\in$. Let $\widehat{\Sigma}=(\widehat{S}_{t})_{t\in I}$ be the
associated metaplectic isotopy, and assume that $S_{t}\notin\operatorname*{Sp}%
\nolimits_{0}(n)$. We have%
\begin{equation}
\operatorname*{Tr}(\widehat{S}_{t}\widehat{\rho})=\frac{i^{\nu(\widehat{S}%
_{t})}}{\sqrt{|\det(S_{t}-I)|}}\det\nolimits^{-1/2}(\tfrac{1}{2}%
F^{-1}+iM(S_{t}^{T})). \label{formess1}%
\end{equation}
The relative phase shift is thus given by the formula%
\begin{equation}
\phi(t)=\frac{\pi}{2}\nu(\widehat{S}_{t})+\operatorname*{Arg}\det
\nolimits^{-1/2}(\tfrac{1}{2}F^{-1}+iM(S_{t}^{T})). \label{phi1}%
\end{equation}

\end{theorem}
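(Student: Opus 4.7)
The strategy is to apply Lemma \ref{Lemmatrab} and reduce the resulting integral to a single complex Gaussian, which is then evaluated by the Fresnel formula (\ref{Fresnel}). Substituting into (\ref{trab3}) the explicit twisted symbol
\[
s_{\sigma}(z)=\frac{i^{\nu(\widehat{S}_{t})}}{\sqrt{|\det(S_{t}-I)|}}\exp\!\Bigl(\tfrac{i}{2\hbar}M(S_{t})z\cdot z\Bigr)
\]
from (\ref{productmw}) together with $\rho_{\sigma}(z)=(2\pi\hbar)^{-n}\exp\!\bigl(-\tfrac{1}{4\hbar}F^{-1}Jz\cdot Jz\bigr)$ from (\ref{rhof1sig}) reduces the trace to a Gaussian integral over $\mathbb{R}^{2n}$ whose exponent is $\tfrac{i}{2\hbar}M(S_{t})z\cdot z-\tfrac{1}{4\hbar}F^{-1}Jz\cdot Jz$, multiplied by constant prefactors depending only on $\nu(\widehat{S}_{t})$, $|\det(S_{t}-I)|$, and powers of $2\pi\hbar$.

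The next step is the volume-preserving change of variable $w=Jz$. Using $z=-Jw$ and $JJ^{T}=I$, the two quadratic forms combine into $-\tfrac{1}{2\hbar}\bigl(\tfrac{1}{2}F^{-1}-iJ^{T}M(S_{t})J\bigr)w\cdot w$, and the algebraic identity $JM(S)J=M(S^{T})$ -- equivalently $J^{T}M(S)J=-M(S^{T})$ -- converts this into $-\tfrac{1}{2\hbar}\bigl(\tfrac{1}{2}F^{-1}+iM(S_{t}^{T})\bigr)w\cdot w$. I would prove this identity from the symplectic relation $S^{T}=JS^{-1}J^{-1}$, the inversion formula $(S^{-1}-I)^{-1}=-(S-I)^{-1}S$, and the Cayley property $M(S^{-1})=-M(S)$ recorded in (\ref{conjcay}). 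Since $F\succ 0$, the symmetric complex matrix $A_{t}=\tfrac{1}{2}F^{-1}+iM(S_{t}^{T})$ has positive definite real part, so the Fresnel formula (\ref{Fresnel}) applies and yields $(2\pi\hbar)^{n}\det^{-1/2}(A_{t})$ with the branch of $\det^{-1/2}$ fixed by the convention recalled in the preliminaries. Once the $(2\pi\hbar)^{\pm n}$ factors cancel, (\ref{formess1}) drops out; the phase identity (\ref{phi1}) follows on taking $\operatorname*{Arg}$ and noting that $\operatorname*{Arg}i^{\nu(\widehat{S}_{t})}=\tfrac{\pi}{2}\nu(\widehat{S}_{t})$ while $\sqrt{|\det(S_{t}-I)|}$ is real positive and contributes no phase.

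The main technical hurdle is the symplectic identity $JM(S)J=M(S^{T})$: pushing $J$ through the Cayley transform is elementary but requires careful bookkeeping of inverses. A secondary subtlety is tracking the branch of $\det^{-1/2}$ consistently through the change of variables, so that the complex-valued equality (\ref{formess1}) holds on the nose and (\ref{phi1}) is a genuine equality rather than merely a congruence modulo $\pi$.
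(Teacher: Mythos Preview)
Your proposal is correct and follows essentially the same route as the paper: apply Lemma \ref{Lemmatrab}, substitute the twisted symbol (\ref{productmw}) and the Gaussian $\rho_{\sigma}$ from (\ref{rhof1sig}), and evaluate the resulting Gaussian integral by the Fresnel formula (\ref{Fresnel}), then invoke $JM(S)J=M(S^{T})$ from (\ref{conjcay}). The only cosmetic difference is that you perform the substitution $w=Jz$ \emph{before} applying Fresnel, whereas the paper applies Fresnel to $A=-\tfrac{1}{2}JF^{-1}J-iM_{t}$ and converts the determinant afterwards; your ordering has the minor advantage that the branch convention for $\det^{-1/2}$ is fixed directly by (\ref{Fresnel}) on the final matrix $\tfrac{1}{2}F^{-1}+iM(S_{t}^{T})$.
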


\begin{proof}
We have $\rho\in L^{1}(\mathbb{R}^{2n})\cap L^{2}(\mathbb{R}^{2n})$ and
$\widehat{S}_{t}$ is bounded on $L^{2}(\mathbb{R}^{n})$. Let $s_{t}$ be the
Weyl symbol of $\widehat{S}_{t}$; applying formula (\ref{trab3}) in Lemma
\ref{Lemmatrab} we get
\[
\operatorname*{Tr}(\widehat{S}_{t}\widehat{\rho})=\int F_{\sigma}%
s_{t}(z)F_{\sigma}\rho(-z)d^{2n}z.
\]
Since $F_{\sigma}s_{t}=(s_{t})_{\sigma}$ is the twisted symbol of the
metaplectic operator $\widehat{S}_{t}$ (formula (\ref{weylmp1})) we get,
taking the expression (\ref{rhof1sig}) into account
\[
\operatorname*{Tr}(\widehat{S}_{t}\widehat{\rho})=\frac{(2\pi\hbar)^{-n}%
i^{\nu(\widehat{S}_{t})}}{\sqrt{|\det(S_{t}-I)|}}\int e^{-\frac{1}{2\hbar
}(-\frac{1}{2}JF^{-1}J-iM_{t})z^{2}}d^{2n}z.
\]
Taking $A=-\frac{1}{2}JFJ-iM_{t}$ in the Fresnel formula (\ref{Fresnel}), we
get%
\[
\operatorname*{Tr}(\widehat{S}_{t}\widehat{\rho})=\frac{i^{\nu(\widehat{S}%
_{t})}}{\sqrt{|\det(S_{t}-I)|}}\det\nolimits^{-1/2}\left(  \tfrac{1}{2}%
F^{-1}+iJM(S_{t})J\right)
\]
hence the result once $\nu(\widehat{S}_{t})=\nu_{\mathrm{CZ}}(\Sigma)$ modulo
four, since $JM(S_{t})J=M(S_{t}^{T})$ in view of the second formula
(\ref{conjcay}).
\end{proof}

\subsection{Application: harmonic oscillator and standard coherent state}

Assume that the symplectic path $\Sigma$ consists of the rotations%
\[
S_{t}=%
\begin{pmatrix}
\cos\omega t & \sin\omega t\\
-\sin\omega t & \cos\omega t
\end{pmatrix}
.
\]
Then $H=\frac{\omega}{2}(x^{2}+p^{2})$ and we have for $\omega t\notin%
\pi\mathbb{Z}$ and $\psi_{0}\in\mathcal{S}(\mathbb{R})$
\begin{equation}
\widehat{S}_{t}\psi_{0}(x)=i^{-[\omega t/\pi]}\sqrt{\frac{1}{2\pi i\hbar
|\sin\omega t|}}\int_{-\infty}^{\infty}e^{\frac{i}{\hbar}W(x,x^{\prime}%
,t)}\psi_{0}(x^{\prime})dx^{\prime} \label{st1}%
\end{equation}
where $m(\widehat{S}_{t})=-[\omega t/\pi]$ is the usual Maslov index
($[\alpha]$ denotes the integer part of $\alpha\in\mathbb{R}$); the generating
function is here%
\begin{equation}
W(x,x^{\prime},t)=\frac{1}{2\sin\omega t}\left[  (x^{2}+x^{\prime2})\cos\omega
t-2xx^{\prime}\right]  \label{stw}%
\end{equation}
(see \textit{e.g.} \cite{dire}, \textit{pp}.196--198). One verifies by direct
calculation that the function $\psi(\cdot,t)=\widehat{S}_{t}\psi_{0}$
satisfies%
\[
i\hbar\frac{\partial\psi}{\partial t}=\frac{\omega}{2}\left(  -\hbar^{2}%
\frac{\partial^{2}}{\partial x^{2}}+x^{2}\right)  \psi\text{ \ , \ }\psi
(\cdot,0)=\psi_{0}.
\]

Choose $\rho(z)=(\pi\hbar)^{-1}e^{-|z|^{2}/\hbar}$ (it is the Wigner transform
of the standard coherent state $\phi_{0}(x)=(\pi\hbar)^{-4}e^{-x^{2}/2\hbar}%
$). We have here $W_{xx}=-2\tan(\omega t/2)$. Also,
\begin{equation}
M_{t}=\frac{1}{2}%
\begin{pmatrix}
\cot(\omega t/2) & 0\\
0 & \cot(\omega t/2)
\end{pmatrix}
\label{mt}%
\end{equation}
hence, since $F=I$ in this case,
\begin{equation}
\det\left(  \tfrac{1}{2}I+iJM_{t}J\right)  =\frac{-e^{i\omega t}}{4\sin
^{2}(\omega t/2)}. \label{detmt}%
\end{equation}
Using the prescriptions following Fresnel's formula (\ref{Fresnel}) we get,
setting $A_{t}=\tfrac{1}{2}I+iJM_{t}J$,%
\[
\det\nolimits^{-1/2}A_{t}=\sqrt{-4\sin^{2}(\omega t/2)e^{i\omega t}}%
\]
and hence, writing $\operatorname*{Arg}\det\nolimits^{-1/2}A_{t}%
=\operatorname*{Arg}(t)$,
\[
\operatorname*{Arg}(t)=\left\{
\begin{array}
[c]{c}%
-\dfrac{\omega t+\pi}{2}\text{ \ \textit{for} \ }2k\pi<\omega t<(2k+1)\pi\\
\dfrac{\omega t-\pi}{2}\text{ \ \textit{for} \ }(2k+1)\pi<\omega t<2(k+1)\pi.
\end{array}
\right.
\]
On the other hand, using formula (\ref{mod4}) we have
\begin{equation}
\nu(\widehat{S}_{t})=-[\frac{\omega t}{\pi}]-\operatorname*{Inert}\left(
-\tan(\frac{\omega t}{2})\right)  \text{ \ }\operatorname{mod}4 \label{nust}%
\end{equation}
where $\operatorname*{Inert}\alpha=0$ if $\alpha>0$ and $\operatorname*{Inert}%
\alpha=1$ if $\alpha<0$; explicitly:%
\[
\nu(\widehat{S}_{t})=-2(k+1)\text{ for }2k\pi<\omega t<2(k+1)\pi.
\]

Using formula (\ref{formess1}) in Theorem \ref{Thm1}, the phase $\varphi
(t)=\operatorname*{Arg}\operatorname*{Tr}(\widehat{S}_{t}\widehat{\rho})$ is
given by%
\[
\varphi(t)=-\frac{\pi}{2}\left(  [\frac{\omega t}{\pi}]+\operatorname*{Inert}%
\left(  -\tan(\frac{\omega t}{2})\right)  \right)  +\operatorname*{Arg}(t)
\]
and hence, summarizing:

\begin{center}%
\begin{tabular}
[c]{||l||l||l||l||}\hline\hline
$\ \ \ \ \ \ \ \ \ \ \ \ \ \omega t$ & $\ \ \nu(\widehat{S}_{t})$ &
$\ \operatorname*{Arg}(t)$ & $\ \ \ \ \ \ \ \ \ \ \ \varphi(t)$\\\hline\hline
$(2k\pi,(2k+1)\pi)$ & $-(2k+1)$ & $-\dfrac{\omega t+\pi}{2}$ & $2k\pi
-\dfrac{\omega t}{2}$\\\hline\hline
$((2k+1)\pi,2(k+1)\pi)$ & $-(2k+1)$ & $\dfrac{\omega t-\pi}{2}$ &
$2k\pi+\dfrac{\omega t}{2}$\\\hline\hline
\end{tabular}

\end{center}

which coincides with the results obtained by one of us in \cite{nicacio1}.

\subsection{The generalized oscillator}

We now consider symplectic isotopies associated with Hamiltonian functions of
the type%
\begin{equation}
H(z)=\frac{1}{2}Kz\cdot z \label{H}%
\end{equation}
where $K=K(t)$ is a positive definite symmetric real matrix. We recall
Williamson's symplectic diagonalization theorem (Folland \cite{Folland}, Ch.4
and \cite{Birk}, \S 8.3.1): there exists $R\in\operatorname*{Sp}(n)$ such that%
\begin{equation}
K=R^{T}DR\text{ \ , \ }D=%
\begin{pmatrix}
\Omega & 0\\
0 & \Omega
\end{pmatrix}
\label{lemfol}%
\end{equation}
where $\Omega$ is a diagonal matrix whose diagonal entries $\omega_{j}>0$ are
such that the $\pm i\omega_{j}$ are the eigenvalues of $JK$. The numbers
$\omega_{j}$ are the symplectic eigenvalues of $F$. We have%
\begin{equation}
H(R^{-1}z)=\frac{1}{2}Dz\cdot z=\sum_{j=1}^{n}\frac{\omega_{j}}{2}(x_{j}%
^{2}+p_{j}^{2}). \label{HR}%
\end{equation}
Rearranging the phase space coordinates by replacing $z=(x,p)$ with
$u=(x_{1},p_{1},...,x_{n},p_{n})$ the symplectic flow $S_{t}^{H\circ R^{-1}}$
is thus given by $u(t)=S_{t}^{H\circ R^{-1}}u$ where $(x_{j}(t),p_{j}%
(t))=S_{t}^{(j)}(x_{j},p_{j})$ with
\[
S_{t}^{(j)}=%
\begin{pmatrix}
\cos\omega_{j}t & \sin\omega_{j}t\\
-\sin\omega_{j}t & \cos\omega_{j}t
\end{pmatrix}
.
\]
The corresponding generating function will thus be $W=\sum_{j=1}^{n}W_{j}$
where the $W_{j}$ are given by formula (\ref{stw}):
\begin{equation}
W_{j}(x_{j},x_{j}^{\prime},t)=\frac{1}{2\sin\omega_{j}t}\left[  (x_{j}%
^{2}+x_{j}^{\prime2})\cos\omega_{j}t-2x_{j}x_{j}^{\prime}\right]  . \label{wj}%
\end{equation}

\begin{theorem}
\label{Thm2}Let $\Sigma=(S_{t}^{H})$ be the symplectic isotopy determined by
(\ref{H}). Let $\widehat{\Sigma}=(\widehat{S}_{t}^{H})$ be the corresponding
metaplectic isotopy. Let $\widehat{\rho}$ a density matrix with Gaussian
Wigner distribution (\ref{rhof1}). We have, for $\omega_{j}t\notin%
\pi\mathbb{Z}$,%
\begin{equation}
\operatorname*{Tr}(\widehat{S}_{t}^{H}\widehat{\rho})=\operatorname*{Tr}%
(\widehat{S}_{t}^{H\circ R^{-1}}\widehat{\rho\circ R^{-1}}) \label{tratra}%
\end{equation}
and hence%
\begin{multline*}
\varphi(t)=-\frac{\pi}{2}\sum_{j=1}^{n}\left(  [\frac{\omega_{j}t}{\pi
}]+\operatorname*{Inert}\left(  -\tan(\frac{\omega_{j}t}{2})\right)  \right)
\\
+\operatorname*{Arg}\det\nolimits^{-1/2}(-\tfrac{1}{2}(R^{-1})^{T}%
FR^{T}+iM(RS_{t}R^{-1})).
\end{multline*}

\end{theorem}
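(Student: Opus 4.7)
The plan is to reduce the generalized oscillator to $n$ decoupled one-dimensional oscillators via the metaplectic lift $\widehat{R}$ of the Williamson diagonalizer, and then apply the single-oscillator computation of Section~5.2 together with Theorem~\ref{Thm1}.

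Starting from Williamson's factorization $K = R^{T} D R$ of (\ref{lemfol}), the conjugated Hamiltonian $H \circ R^{-1}(z) = \tfrac{1}{2} D z \cdot z$ decouples as in (\ref{HR}). Lemma~\ref{Lemmacov}(ii) then gives $\widehat{R}\,\widehat{S}_{t}^{H}\,\widehat{R}^{-1} = \widehat{S}_{t}^{H \circ R^{-1}}$, so cyclicity of the trace yields
\[
\operatorname{Tr}(\widehat{S}_{t}^{H}\widehat{\rho}) = \operatorname{Tr}\bigl(\widehat{S}_{t}^{H\circ R^{-1}}\,\widehat{R}\widehat{\rho}\widehat{R}^{-1}\bigr).
\]
By the symplectic covariance (\ref{sympcov}) of Weyl operators, the operator $\widehat{R}\widehat{\rho}\widehat{R}^{-1}$ is the density operator whose Wigner distribution is $\rho \circ R^{-1}$ --- still Gaussian of the form (\ref{rhof1}) with $F$ transformed accordingly --- and this is precisely the operator denoted $\widehat{\rho\circ R^{-1}}$ in the statement. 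This establishes (\ref{tratra}).

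Next, I would apply Theorem~\ref{Thm1} to the right-hand side of (\ref{tratra}). Because $\Sigma^{H\circ R^{-1}}$ acts blockwise as the rotations $S_{t}^{(j)}$ on the symplectically orthogonal planes $(x_{j},p_{j})$, its polar unitary factor entering the winding-number definition (\ref{deficz}) of $i_{\mathrm{CZ}}$ is a direct sum $u_{t}=\bigoplus_{j} u_{t}^{(j)}$ in $U(n,\mathbb{C})$; hence $\det u_{t} = \prod_{j} \det u_{t}^{(j)}$, the loop $\gamma$ factors as a product of loops in $S^{1}$, and winding numbers add:
\[
\nu\bigl(\widehat{S}_{t}^{H\circ R^{-1}}\bigr) \;=\; \sum_{j=1}^{n} \nu\bigl(\widehat{S}_{t}^{(j)}\bigr).
\]
The single-oscillator result (\ref{nust}) then identifies each summand with $-[\omega_{j}t/\pi] - \operatorname{Inert}(-\tan(\omega_{j}t/2))$ modulo $4$, and multiplication by $\pi/2$ produces the first term in the claimed expression for $\varphi(t)$.

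For the Gaussian determinant factor I would substitute the new data (the $F$-matrix of $\rho\circ R^{-1}$ and the endpoint $R S_{t} R^{-1}$) into formula (\ref{formess1}), using the Cayley conjugation identity $R^{T}M(S)R = M(R^{-1}SR)$ of (\ref{conjcay}) to convert $M((R S_{t} R^{-1})^{T})$ into a conjugate of $M(R S_{t} R^{-1})$. The main obstacle is precisely this bookkeeping of transpose, inverse, and Cayley-transform identities: they must align so that the expression inside the $\det^{-1/2}$ lands exactly on $-\tfrac{1}{2}(R^{-1})^{T} F R^{T} + i M(RS_{t}R^{-1})$, with the branch of the square root fixed by the Fresnel convention (\ref{Fresnel}) employed in the proof of Theorem~\ref{Thm1}. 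The Conley--Zehnder contribution and this argument together then yield the stated formula for $\varphi(t)$.
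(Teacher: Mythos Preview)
Your proposal is correct and follows essentially the same route as the paper: conjugate by the metaplectic lift $\widehat{R}$ of the Williamson diagonalizer via Lemma~\ref{Lemmacov}, use cyclicity of the trace together with symplectic covariance (\ref{sympcov}) to obtain (\ref{tratra}), then split the Conley--Zehnder index additively over the decoupled one-dimensional oscillators and handle the Gaussian determinant via the Cayley conjugation identity (\ref{conjcay}). The only minor difference is that you justify the additivity $\nu(\widehat{S}_{t}^{H\circ R^{-1}})=\sum_{j}\nu(\widehat{S}_{t}^{(j)})$ directly from the winding-number definition (\ref{deficz}), whereas the paper simply asserts it and also invokes the conjugation invariance (\ref{fund}); both arguments are valid and lead to the same conclusion.
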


\begin{proof}
Let $R$ be as in the Williamson diagonalization (\ref{lemfol}); we have
$S_{t}^{H}=R^{-1}S_{t}^{H\circ R^{-1}}R$ and $\widehat{S}_{t}^{H}%
=\widehat{R}^{-1}\widehat{S}_{t}^{H\circ R^{-1}}\widehat{R}$ (Lemma
\ref{Lemmacov}) and thus%
\[
\operatorname*{Tr}(\widehat{S}_{t}^{H}\widehat{\rho})=\operatorname*{Tr}%
(\widehat{R}^{-1}\widehat{S}_{t}^{H\circ R^{-1}}\widehat{R}\widehat{\rho
})=\operatorname*{Tr}(\widehat{S}_{t}^{H\circ R^{-1}}\widehat{R}\widehat{\rho
}\widehat{R}^{-1})
\]
hence (\ref{tratra}) follows in view of the symplectic covariance result
(\ref{sympcov}). Since $W=\sum_{j=1}^{n}W_{j}$ and $W_{j}$ being given by
(\ref{wj}), it follows from formulas (\ref{st1}) and (\ref{wj}) that we have,
for $\omega_{j}t\notin\pi\mathbb{Z}$,
\[
\widehat{S}_{t}^{H\circ R^{-1}}\psi(x)=\left(  \tfrac{1}{2\pi i\hbar}\right)
^{n/2}\Delta(W)\int%
%TCIMACRO{\tprod \nolimits_{j=1}^{n}}%
%BeginExpansion
{\textstyle\prod\nolimits_{j=1}^{n}}
%EndExpansion
(e^{\frac{i}{\hbar}W_{j}(x_{j},x_{j}^{\prime},t)})\psi(x^{\prime}%
)d^{n}x^{\prime}%
\]
where
\[
\Delta(W)=i^{m(t)}|\sin\omega_{1}t\cdot\cdot\cdot\sin\omega_{n}t|^{-1/2}%
\]
and $m(t)=-\sum_{j=1}^{n}[\omega_{j}t/\pi]$ is the Maslov index. We have, by
definition (\ref{rhof1}) of $\rho$ and recalling that $\det R=1$,
\[
\rho\circ R^{-1}(z)=(\pi\hbar)^{-n}\sqrt{\det(R^{-1})^{T}FR^{-1}}e^{-\frac
{1}{\hbar}(R^{-1})^{T}FR^{-1}z\cdot z}.
\]
On the other hand
\[
\nu(\widehat{S}_{t}^{H})=\nu(\widehat{R}^{-1}\widehat{S}_{t}^{H\circ R^{-1}%
}\widehat{R})=\nu(\widehat{S}_{t}^{H\circ R^{-1-}})
\]
(property (\ref{fund}) of the Conley--Zehnder index). We thus have (formula
(\ref{nust}))
\[
\nu(\widehat{S}_{t}^{H\circ R^{-1}})=\nu(\widehat{S}_{t}^{(1)})+\nu
(\widehat{S}_{t}^{(1)})+\cdot\cdot\cdot+\nu(\widehat{S}_{t}^{(n)})
\]
where
\begin{equation}
\nu(\widehat{S}_{t}^{H\circ R^{-1}})=-\left(  \sum_{j=1}^{n}[\frac{\omega
_{j}t}{\pi}]+\operatorname*{Inert}\left(  -\tan(\frac{\omega_{j}t}{2})\right)
\right)  \text{ \ \ }\operatorname{mod}4.
\end{equation}
On the other hand we have $M(R^{-1}S_{t}^{H\circ R^{-1}}R)=R^{T}%
M(S_{t}^{H\circ R^{-1}})R$ in view of the second formula (\ref{conjcay})\ and
hence%
\begin{align*}
\tfrac{1}{2}F^{-1}+iJM_{S_{t}^{H}}J  &  =\tfrac{1}{2}F^{-1}+iJM(R^{-1}%
S_{t}^{H\circ R^{-1}}R)J\\
&  =\tfrac{1}{2}F^{-1}+iJR^{T}M(S_{t}^{H\circ R^{-1}})RJ\\
&  =\tfrac{1}{2}F^{-1}+iR^{-1}JM(S_{t}^{H\circ R^{-1}})J(R^{T})^{-1}.
\end{align*}
We thus have%
\[
\det(\tfrac{1}{2}F^{-1}+iJM(S_{t}^{H})J)=\det(\tfrac{1}{2}RF^{-1}%
R^{T}+iJM(S_{t}^{H\circ R^{-1}})J)
\]
where
\begin{equation}
M_{t}^{j}=\frac{1}{2}%
\begin{pmatrix}
\cot(\omega_{j}t/2) & 0\\
0 & \cot(\omega_{j}t/2)
\end{pmatrix}
\end{equation}
and thus%
\[
\phi(t)=\frac{\pi}{2}\nu(\Sigma)+\operatorname*{Arg}\det\nolimits^{-1/2}%
(-\tfrac{1}{2}F+iM_{t}).
\]

\end{proof}

\section{The Inhomogeneous Case}

Our central result is placed in theorem \ref{Thm1} which now will be slightly
generalized to a larger class of Hamiltonian dynamics, besides the quadratic
term we will include affine transformations related to phase space
displacements. Despite its simplicity, the new form of the Hamiltonian is
widely used in the literature as an approximation for any Hamiltonian
dynamical system, see for instance \cite{corobert,Littlejohn,nicacio2}.

\subsection{The groups $\operatorname*{HSp}(n)$ and $\operatorname*{IMp}(n)$}

The inhomogeneous symplectic group $\operatorname*{ISp}(n)$ is the semi-direct
product \cite{Burdet,comanota}
\[
\operatorname*{ISp}(n)=\operatorname*{Sp}(n)\ltimes\mathrm{T}(2n)
\]
where $\mathrm{T}(2n)$ is the group of phase space translations $T(z_{0}%
):z\longmapsto z+z_{0}$. Its elements are the affine symplectomorphisms
$ST(z)$ (or $T(z)S$) where $S\in\operatorname*{Sp}(n)$ and $T(z)\in
\mathrm{T}(2n)$; note that
\[
ST(z)=T(Sz)S\text{ \ , \ }T(z)S=ST(S^{-1}z).
\]
The group law of $\operatorname*{ISp}(n)$ is given by%
\[
(S,z)(S^{\prime},z^{\prime})=(SS^{\prime},S^{\prime- 1}z+z^{\prime})
%(S,z)(S^{\prime},z^{\prime})=(SS^{\prime},z+Sz^{\prime}).
\]

More interesting is, in a sense, the group $\operatorname*{HSp}(n)$
(\cite{Folland,Burdet}; it is denoted by $\operatorname*{WSp}(n)$ in
\cite{Burdet}). It is defined as follows: Let $\mathrm{H}(2n)$ be the
Heisenberg group, that is $\mathbb{R}^{2n}\times S^{1}$ equipped with the
group law%
\[
(z,t)(z^{\prime},t^{\prime})=(z+z^{\prime},t+t^{\prime}+\tfrac{1}{2}%
\sigma(z,z^{\prime})).
\]
The symplectic group acts on $\mathrm{H}(2n)$ by $S(z,t)=(Sz,t)$ hence we can
form the semidirect product $\operatorname*{Sp}(n)\ltimes\mathrm{H}(2n)$. By
definition this group is $\operatorname*{HSp}(n)$ the group law being given by%
\[
(S,(z,t))(S^{\prime},(z^{\prime},t^{\prime}))= (SS^{\prime},(S^{\prime-
1}z,t)(z^{\prime},t)),t^{\prime}))).
\]

Let now $\widehat{T}:\mathrm{H}(2n)\longrightarrow\mathcal{U}(L^{2}%
(\mathbb{R}^{n}))$ be the Schr\"{o}dinger representation of $\mathrm{H}(2n)$
defined by $\widehat{T}(z,t)=e^{\frac{i}{\hbar}\gamma_{t}}\widehat{T}(z)$.

We similarly denote by $\operatorname*{IMp}(n)$ the group of unitary operators
on $L^{2}(\mathbb{R}^{n})$ generated by the the operators $\widehat{S}%
\in\operatorname*{Mp}(n)$ and $\widehat{T}(z,t)$, $z\in\mathbb{R}^{2n}$. It
follows from the symplectic covariance relations%
\[
\widehat{S}\widehat{T}(z,t)=\widehat{T}(Sz,t)\widehat{S}\text{ \ ,
\ }\widehat{T}(z,t)\widehat{S}=\widehat{S}\widehat{T}(S^{-1}z,t)
\]
that every element of $\operatorname*{IMp}(n)$ can be written in the form
$\widehat{S}\widehat{T}(z)$ or $\widehat{T}(z)\widehat{S}$. This is often
referred to as the extended metaplectic representation of $\operatorname*{HSp}%
(n)$; the projection $\pi^{\operatorname*{IMp}}:\operatorname*{IMp}%
(n)\longrightarrow\operatorname*{HSp}(n)$ is given by
\[
\pi^{\operatorname*{IMp}}(\widehat{T}(z,t)\widehat{S})=(S,z,t).
\]
Notice that if we restrict ourselves to the case $t=0$ this reduces to
\[
\pi^{\operatorname*{IMp}}(\widehat{T}(z)\widehat{S})=(S,z)\in
\operatorname*{ISp}(n).
\]

\subsection{Symplectic paths in $\operatorname*{ISp}(n)$}

Consider an affine metaplectic isotopy $(\widehat{U}_{t})_{t\in\mathbb{R}}$
where $\widehat{U}_{t}\in\operatorname*{IMp}(n)$ is of the type%
\begin{equation}
\widehat{U}_{t}=\widehat{T}(z_{t},\gamma_{t})\widehat{S}_{t}=e^{\frac{i}%
{\hbar}\gamma_{t}}\widehat{T}(z_{t})\widehat{S}_{t}; \label{ut1}%
\end{equation}
here $t\longmapsto z_{t}=(x_{t},p_{t})$ is a $C^{1}$ path in $\mathbb{R}^{2n}$
and $(\widehat{S}_{t})_{t\in\mathbb{R}}$ a metaplectic isotopy. The phase
$\gamma_{t}\in\mathbb{R}$ depends in a $C^{1}$ fashion on $t$. A
straightforward calculation taking into account the identity $\widehat{p}%
\widehat{T}(z_{t})-\widehat{T}(z_{t})\widehat{p}=p_{t}\widehat{T}(z_{t})$
(\textit{cf}. (\ref{tzocov})) yields%
\begin{equation}
i\hbar\frac{d}{dt}\widehat{T}(z_{t})=-(\tfrac{1}{2}\sigma(z_{t},\dot{z}%
_{t})+\sigma(\dot{z}_{t},\widehat{z}))\widehat{T}(z_{t}) \label{ni1}%
\end{equation}
where $\sigma(\dot{z}_{t},\widehat{z})$ is the operator $J\dot{z}_{t}%
\cdot\widehat{z}$ with $\widehat{z}\psi=(x\psi,-i\hbar\partial_{x}\psi)$;
equivalently%
\begin{equation}
i\hbar\frac{d}{dt}\widehat{T}(z_{t})=\left(  \sigma(\widehat{z}-z_{t},\dot
{z}_{t})+\tfrac{1}{2}\sigma(z_{t},\dot{z}_{t})\right)  \widehat{T}(z_{t}).
\label{ni2}%
\end{equation}
On the other hand one easily verifies that%
\[
i\hbar\frac{d}{dt}\widehat{S}_{t}=H(\widehat{z},t)\widehat{S}_{t}%
\]
where $H(\widehat{z},t)$ is the Weyl quantization of the Hamiltonian function
$H(z,t)$ defined by (\ref{hamzo}), that is
\[
H(\widehat{z},t)=-\frac{1}{2}J\dot{S}_{t}S_{t}^{-1}\widehat{z}\cdot
\widehat{z}.
\]
Collecting these results we see that $\widehat{U}_{t}$ satisfies the
Schr\"{o}dinger equation
\[
i\hbar\frac{d}{dt}\widehat{U}_{t}=\left(  -\dot{\gamma}_{t}-\tfrac{1}{2}%
\sigma(z_{t},\dot{z}_{t})-\sigma(\dot{z}_{t},\widehat{z})+H(\widehat{z}%
-z_{t},t)\right)  \widehat{U}_{t}.
\]
We next observe that the operator
\begin{equation}
\label{quadham}\widehat{H}_{z_{t}}=-\dot{\gamma}_{t}-\tfrac{1}{2}\sigma
(z_{t},\dot{z}_{t})-\sigma(\dot{z}_{t},\widehat{z})+H(\widehat{z}-z_{t},t)
\end{equation}
occurring in this equation is the Weyl quantization of the inhomogeneous
quadratic Hamilton function%
\[
H_{z_{t}}(z,t)=-\dot{\gamma}_{t}-\tfrac{1}{2}\sigma(z_{t},\dot{z}_{t}%
)-\sigma(\dot{z}_{t},z)+H(z-z_{t},t).
\]
The solutions of the associated Hamilton equations%
\[
\dot{z}=J\partial_{z}H_{z_{t}}(z,t)=\dot{z}_{t}+J\partial_{z}H(z-z_{t},t)
\]
are given by $z=u+z_{t}$ where $u$ is the solution of the Hamilton equations
for $H(z,t)$. Recalling that the flow determined by $H(z,t)$ is the symplectic
isotopy $(S_{t})_{t\in\mathbb{R}}$ we thus have%
\[
z(t)=S_{t}(z(0)-z_{0})+z_{t}.
\]

\subsection{Application to relative phase shifts}

We assume now that $\rho$ is a Gaussian centered at a point $\overline{z}%
\in\mathbb{R}^{2n}$:%
\begin{equation}
\rho(z)=(2\pi)^{-n}\sqrt{\det V^{-1}}e^{-\frac{1}{2}V^{-1}(z-\overline
{z})\cdot(z-\overline{z})} \label{rhoshift}%
\end{equation}
and that $(\widehat{U}_{t})_{t\in\mathbb{R}}$ is an affine metaplectic isotopy
given by (\ref{ut1}). We will use the following elementary result:

\begin{lemma}
\label{Lemmacomp}Let $\widehat{A}=\operatorname*{Op}\nolimits^{\mathrm{W}}(a)$
and $z_{0}\in\mathbb{R}^{2n}$. We have $\widehat{T}(z_{0})\widehat{A}%
=\operatorname*{Op}\nolimits^{\mathrm{W}}(c)$ where%
\begin{equation}
c_{\sigma}(z)=a_{\sigma}(z-z_{0})e^{-\tfrac{i}{2\hslash}\sigma(z,z_{0})}.
\label{casigma}%
\end{equation}

\end{lemma}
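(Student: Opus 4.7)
The plan is to compute $\widehat{T}(z_0)\widehat{A}$ directly from the Weyl representation formula (\ref{opw2}) and read off the twisted symbol. This reduces the lemma to a translation of variables combined with the Weyl--Heisenberg composition law (\ref{tzotzo}).

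First I would write
\[
\widehat{A}=\operatorname*{Op}\nolimits^{\mathrm{W}}(a)=\left(\tfrac{1}{2\pi\hbar}\right)^{n}\int a_{\sigma}(z')\widehat{T}(z')\,d^{2n}z'
\]
and left-multiply by $\widehat{T}(z_{0})$, bringing $\widehat{T}(z_{0})$ inside the Bochner integral (legal because $\widehat{T}(z_{0})$ is bounded). Then I apply the identity
\[
\widehat{T}(z_{0})\widehat{T}(z')=e^{\tfrac{i}{2\hbar}\sigma(z_{0},z')}\widehat{T}(z_{0}+z'),
\]
which follows from (\ref{tzotzo}) after rearrangement. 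This gives
\[
\widehat{T}(z_{0})\widehat{A}=\left(\tfrac{1}{2\pi\hbar}\right)^{n}\int a_{\sigma}(z')\,e^{\tfrac{i}{2\hbar}\sigma(z_{0},z')}\widehat{T}(z_{0}+z')\,d^{2n}z'.
\]

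Next I perform the change of variables $z=z_{0}+z'$ (Lebesgue measure is translation-invariant), so that $z'=z-z_{0}$ and
\[
\sigma(z_{0},z')=\sigma(z_{0},z-z_{0})=\sigma(z_{0},z)=-\sigma(z,z_{0}),
\]
using the antisymmetry of $\sigma$ and $\sigma(z_{0},z_{0})=0$. This turns the exponent into $-\tfrac{i}{2\hbar}\sigma(z,z_{0})$ and yields
\[
\widehat{T}(z_{0})\widehat{A}=\left(\tfrac{1}{2\pi\hbar}\right)^{n}\int \bigl[a_{\sigma}(z-z_{0})\,e^{-\tfrac{i}{2\hbar}\sigma(z,z_{0})}\bigr]\widehat{T}(z)\,d^{2n}z.
\]
Comparing once more with (\ref{opw2}), the bracketed factor is by definition the twisted Weyl symbol $c_{\sigma}(z)$ of the composition, and this is formula (\ref{casigma}).

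There is no genuine obstacle, only a bookkeeping point: one must justify writing $\widehat{T}(z_{0})\widehat{A}$ as a Weyl operator when $a$ is only a tempered distribution. This is handled by noting that the map $a\mapsto a_{\sigma}$ is a topological isomorphism of $\mathcal{S}'(\mathbb{R}^{2n})$ and that both sides of (\ref{casigma}) depend continuously on $a$; so it suffices to prove the identity for $a\in\mathcal{S}(\mathbb{R}^{2n})$, where the Bochner integral above converges absolutely in operator norm and the manipulations are fully rigorous. The extension to general symbols then follows by density, or equivalently by testing both sides against pairs $\psi,\phi\in\mathcal{S}(\mathbb{R}^{n})$ via (\ref{average}).
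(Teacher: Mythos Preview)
Your proof is correct. The paper's own argument is even more compressed: it observes that the twisted Weyl symbol of $\widehat{T}(z_{0})$ is $t_{\sigma}(z)=(2\pi\hbar)^{n}\delta(z-z_{0})$ and then plugs this into the general twisted-convolution formula (\ref{cecomp}), which immediately collapses the integral and gives (\ref{casigma}). Your computation is the same argument unpacked: rather than citing (\ref{cecomp}) as a black box, you effectively reprove that special case by starting from (\ref{opw2}) and using the Heisenberg relation (\ref{tzotzo}) directly. The two routes are equivalent in content; yours is slightly more self-contained (no distributional symbol, no appeal to Proposition~\ref{propcomp}), while the paper's is shorter because the machinery has already been set up.
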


\begin{proof}
The twisted Weyl symbol of $\widehat{T}(z_{0})$ is given by $t_{\sigma
}(z)=(2\pi\hbar)^{n}\delta(z-z_{0})$; formula (\ref{casigma}) follows using
(\ref{cecomp}).
\end{proof}

\begin{theorem}
\label{Thm3}Let $\widehat{\rho}$ be the density operator with Wigner
distribution (\ref{rhoshift}) and $(\widehat{U}_{t})_{t\in\mathbb{R}}$ the
metaplectic isotopy defined by (\ref{ut1}). We have
\begin{align}
\label{formess2}\operatorname*{Tr}(\widehat{U}_{t}\widehat{\rho})  & = \frac{
i^{i_{\mathrm{CZ}}(\Sigma)} \, \mathrm{e}^{\frac{i}{\hbar} \gamma_{t} +
\frac{i}{\hbar}{ J z_{t}\cdot\bar z} -\frac{1}{2\hbar}{J z_{t}\cdot F^{-1} J
z_{t}} + \Phi(z_{t},\bar z)}} { \sqrt{|\det( S_{t} - I)|} \sqrt{\det(\tfrac
{1}{2}F^{-1} +i M(S_{t}^{T}) )} },\\
\Phi(z_{t},\bar z)  & = \frac{1}{8\hbar} \left[ (F^{-1} + iJ) J z_{t} - 2 i
\bar z \right]  \cdot\left[  \tfrac{1}{2}F^{-1} + i M(S_{t}^{T}) \right] ^{-1}
\left[ (F^{-1} + iJ) J z_{t} - 2 i \bar z \right]  .\nonumber
\end{align}

\end{theorem}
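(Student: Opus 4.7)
The strategy is to mimic the proof of Theorem~\ref{Thm1}, now incorporating the phase $e^{i\gamma_t/\hbar}$, the displacement $\widehat{T}(z_t)$, and the shift $\bar z$ of the Gaussian. The first step is to compute the twisted Weyl symbol $u_\sigma$ of $\widehat{U}_t$. Since $\widehat{U}_t = e^{i\gamma_t/\hbar}\widehat{T}(z_t)\widehat{S}_t$, Lemma~\ref{Lemmacomp} applied to $\widehat{S}_t$ together with the explicit formula~(\ref{productmw}) yields
\[
u_\sigma(z) = e^{\frac{i}{\hbar}\gamma_t}\,\frac{i^{\nu(\widehat{S}_t)}}{\sqrt{|\det(S_t-I)|}}\,e^{\frac{i}{2\hbar}M(S_t)(z-z_t)\cdot(z-z_t)}\,e^{-\frac{i}{2\hbar}\sigma(z,z_t)}.
\]
The second preliminary is the symplectic Fourier transform $\rho_\sigma$ of the shifted Wigner distribution. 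Writing $\rho(z) = \rho_0(z-\bar z)$ with $\rho_0$ the centered Gaussian, translation in $z$ becomes a phase under $F_\sigma$, so
\[
\rho_\sigma(z) = (2\pi\hbar)^{-n}\,e^{-\frac{i}{\hbar}\sigma(z,\bar z)}\,e^{-\frac{1}{4\hbar}F^{-1}Jz\cdot Jz},
\]
using formula~(\ref{rhof1sig}).

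Next, exactly as in the proof of Theorem~\ref{Thm1}, Plancherel for $F_\sigma$ combined with the relation between Weyl symbol and Wigner distribution reduces the trace to a single Gaussian integral
\[
\operatorname{Tr}(\widehat{U}_t\widehat{\rho}) = \int u_\sigma(z)\,\rho_\sigma(-z)\,d^{2n}z.
\]
Substituting the expressions above, the quadratic part of the exponent in $z$ is $-\tfrac{1}{2\hbar}A z\cdot z$ with $A = -iM(S_t) - \tfrac{1}{2}JF^{-1}J$. The positive definiteness of $-JF^{-1}J$ (which follows from $F^{-1}>0$ and $J^T=-J$) guarantees $\operatorname{Re}A>0$, so the Fresnel formula~(\ref{Fresnel}) applies. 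The linear part in $z$ collects the contributions from $M(S_t)z_t$ and the two symplectic phases $\sigma(\cdot,z_t)$, $\sigma(\cdot,\bar z)$, giving a coefficient proportional to $b = (\tfrac{1}{2}J - M(S_t))z_t - J\bar z$.

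To identify $\det A$ with the determinant appearing in~(\ref{formess2}), I would use the identity $JM(S)J = M(S^T)$, which is a direct consequence of the conjugation property~(\ref{conjcay}) together with $SJS^T = J$. This gives $A = -J\bigl(\tfrac{1}{2}F^{-1} + iM(S_t^T)\bigr)J$, hence $\det A = \det\bigl(\tfrac{1}{2}F^{-1} + iM(S_t^T)\bigr)$. The $(2\pi\hbar)^{\pm n}$ prefactors cancel exactly as in Theorem~\ref{Thm1}, and the prefactor $i^{\nu(\widehat{S}_t)}/\sqrt{|\det(S_t-I)|}$ survives intact. The completion of the square contributes $-\tfrac{1}{2\hbar}A^{-1}b\cdot b$, which combines with the constant term $\tfrac{i}{2\hbar}M(S_t)z_t\cdot z_t$ produced by expanding $M(S_t)(z-z_t)\cdot(z-z_t)$.

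The main obstacle will be the final algebraic cleanup. Using $A^{-1} = -JG^{-1}J$ with $G = \tfrac{1}{2}F^{-1} + iM(S_t^T)$, together with the skew-symmetry of $J$, one rewrites $A^{-1}b\cdot b = G^{-1}(Jb)\cdot(Jb)$ where $Jb = -\tfrac{1}{2}z_t + M(S_t^T)Jz_t + \bar z$. Substituting $M(S_t^T) = -i\bigl(G - \tfrac{1}{2}F^{-1}\bigr)$ exposes the combination $(F^{-1}+iJ)Jz_t - 2i\bar z$ that appears inside $\Phi$, and careful bookkeeping splits the exponent into exactly the three pieces announced by~(\ref{formess2}): the cross term $\tfrac{i}{\hbar}Jz_t\cdot\bar z$, the pure $z_t$-quadratic $-\tfrac{1}{2\hbar}Jz_t\cdot F^{-1}Jz_t$, and the remainder $\Phi(z_t,\bar z)$. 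Matching coefficient by coefficient and tracking the factors of $i$ through the relation $M(S_t) = JM(S_t^T)J$ is routine but is where the real work of the proof lies.
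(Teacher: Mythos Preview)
Your proposal is correct and follows essentially the same route as the paper: compute the twisted Weyl symbol of $\widehat{U}_t$ via Lemma~\ref{Lemmacomp}, take the symplectic Fourier transform of the shifted Gaussian (\ref{rhoshift}), plug both into the trace formula $\operatorname{Tr}(\widehat{U}_t\widehat{\rho})=\int u_\sigma(z)\rho_\sigma(-z)\,d^{2n}z$ as in Theorem~\ref{Thm1}, and evaluate the resulting Gaussian integral by the Fresnel formula~(\ref{Fresnel}). The paper's own proof is considerably terser---it stops at ``eq.~(\ref{formess2}) follows''---so your explicit identification of $A$, the relation $A=-J(\tfrac{1}{2}F^{-1}+iM(S_t^T))J$ via $JM(S)J=M(S^T)$, and the completion-of-the-square bookkeeping to extract $\Phi(z_t,\bar z)$ are exactly the details the paper leaves to the reader.
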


\begin{proof}
In view of formula (\ref{casigma}) in Lemma \ref{Lemmacomp} the twisted Weyl
symbol of $\widehat{T}(z_{t})\widehat{S}_{t}$ is the function $z\longmapsto
(s_{t})_{\sigma}(z-z_{t})e^{-\frac{i}{2\hbar}\sigma(z,z_{t})}$. Proceeding as
in the proof of Theorem \ref{Thm1} we have%
\begin{align*}
\operatorname*{Tr}(\widehat{T}(z_{t})\widehat{S}_{t}\widehat{\rho})  &
=\int(s_{t})_{\sigma}(z-z_{t})e^{-\frac{i}{2\hbar}\sigma(z,z_{t})}\rho
_{\sigma}(-z)d^{2n}z\\
&  =\int(s_{t})_{\sigma}(z)e^{-\frac{i}{2\hbar}\sigma(z,z_{t})}\rho_{\sigma
}(z_{t}-z)d^{2n}z.
\end{align*}
Using the Fresnel formula (\ref{Fresnel}) with $(s_{t})_{\sigma}(z)$ in
(\ref{weylmp1}) and $\rho_{\sigma}$, the symplectic Fourier transform
(\ref{SFT}) of $\rho$ in (\ref{rhoshift}), eq.(\ref{formess2}) follows.
\end{proof}

The relative phase shift (\ref{shift2}) for the Gaussian state in
(\ref{rhoshift}) subjected to the inhomogeneous dynamics in (\ref{ut1}) will
thus be (see Notation \ref{remnotation}):%

\begin{equation}
\phi(t)=\frac{\pi}{2}\nu(\widehat{S}_{t}) + \frac{1}{\hbar} \gamma_{t} +
\operatorname*{Arg} \Phi(z_{t},\bar z) + \operatorname*{Arg}\det
\nolimits^{-1/2}(\tfrac{1}{2}F^{-1}+iM(S_{t}^{T})). \label{phi2}%
\end{equation}
This formula reduces to the one in (\ref{phi1}) when $\overline{z} = z_{t} =
0$ and $\gamma_{t} = 0, \, \forall t \in I $.

\section{Discussion and Perspectives}

The proof of the general result in Theorem \ref{Thm1} heavily relies on the
fact that the integral giving the trace is easily calculable because the
integrand is a Gaussian and can, as such, be explicitly determined by a
Fresnel-type formula. This relative straightforwardness is due to the fact
that the twisted Weyl symbol of the unitary evolution operator $(\widehat{U}%
_{t})$ is here a family $(\widehat{S}_{t})$ of metaplectic operators and is
hence itself a (complex) Gaussian function, namely
\[
s_{\sigma}(z)=\frac{i^{\nu+\nu^{\prime}+\frac{1}{2}\operatorname*{sign}(M)}%
}{\sqrt{|\det(S-I)|}}\exp\left(  \frac{i}{2\hbar}M(S)z\cdot z\right)
\]
when $\det(S-I)\neq0$. It would of course be interesting (and even essential)
to extend Theorem \ref{Thm1} to more general situations. But even when
$\widehat{\rho}$ is still a Gaussian state we run into a major difficulty,
which is the determination of the Weyl symbol (twisted, or not) of a general
evolution operator $(\widehat{U}_{t})$. Very little is actually known; to the
best of our knowledge only sporadic attempts exist in the literature, and they
usually consist in using non-rigorous Feynman-type path integral methods.

Fortunately, semiclassical propagation methods are very well developed in the
context of the inhomogeneous dynamics presented in section \ref{tic}. In this
scenario, a generic analytic Hamiltonian function of $\hat{z}$ can be expanded
up to second order, always giving rise to a quadratic structure (\ref{quadham}%
), as proposed in \cite{Littlejohn}. The validity of this approximation is
guaranteed for a time interval limited by the very known Ehrenfest time
$\tau_{\mathrm{E}}\sim\mathrm{log}(\hbar^{-1})$ \cite{corobert}. This scheme
is very well suited for propagation of Gaussian states, since the operators
(\ref{ut1}) keep this set of states invariant. However, any quantum state can
be expanded as a superposition of (Gaussian) coherent states, thus this method
can be applied to the propagation of any initial state
\cite{corobert,Littlejohn}. An example for the propagation of states under a
non-linear (and classically chaotic) Hamiltonian dynamics is given in
\cite{nicacio2}.

{On the other side, the study of the Pancharatnam phase for states outside of
the Gaussian set is possible and can reveal new and interesting scenarios for
the phase behavior. For instance, coherent and incoherent superpositions of
Gaussian states \cite{nicacio2} can be studied quite directly using the tools
presented in this paper; the relation of the total phase for the interference
fringes has not yet been explored in the literature. Even for more general
states, the developed tools can also be applied using the Glauber--Sudarshan
representation \cite{g-srep}, which constitutes an expansion, in principle
written for any quantum state, in terms of the standard coherent state basis.
More generally, the notion of Gabor (or Weyl--Heisenberg) \ frame could be
useful in this context \cite{fagoro17}.}

In the same spirit of the generalization of the total phase (\ref{shift1})
defined in \cite{mukunda} to any density state (\ref{shift2}), a possible
generalization of the dynamical phase \cite{mukunda}
\[
\varphi_{\mathrm{d}}=\mathrm{Im}\int_{0}^{t}(\psi_{t^{\prime}},\dot{\psi
}_{t^{\prime}})dt^{\prime},
\]
can be given, introducing the more general quantity
\[
\varphi_{\mathrm{d}}^{\prime}=\int_{0}^{t}\mathrm{Tr}(\hat{\rho}\hat
{H}(t^{\prime}))dt^{\prime}.
\]
However, in \cite{mukunda} the geometric phase is defined to be $\varphi
-\varphi_{\mathrm{d}}$ for $\varphi$ in (\ref{shift1}). The association of
generalized geometric phase to $\varphi-\varphi_{\mathrm{d}}^{\prime}$ should
be carefully investigated, specially in what concerns its relation with the
Conley--Zehnder index. This index can be viewed as a geometric quantity
associated to paths on the space $L^{2}(\mathbb{R}^{n})$ connecting Gaussian
states. This idea will be developed in forthcoming work.

\section*{APPENDIX\ A: The Metaplectic Group}

For detailed studies of the symplectic group $\operatorname*{Sp}(n)$ see
\cite{Dutta,Birk}; the properties of the metaplectic are studied in
\cite{Folland,Birk}.

\subsection*{A.1 Definition}

The metaplectic group $\operatorname*{Mp}(n)$ is a unitary representation on
$L^{2}(\mathbb{R}^{n})$ of the double cover $\operatorname*{Sp}_{2}(n)$ of the
symplectic group $\operatorname*{Sp}(n)$. The simplest (but not necessarily
the most useful) way to describe $\operatorname*{Mp}(n)$ is to use its
elementary generators $\widehat{J}$, $\widehat{V}_{-P}$, and $\widehat{M}%
_{L,m}$ \cite{Folland,Birk}. Denoting by $\pi^{\operatorname*{Mp}}$ the
covering projection $\operatorname*{Mp}(n)\longrightarrow\operatorname*{Sp}%
(n)$ these operators and their projections are given by%
\begin{align}
\widehat{J}\psi(x)  &  =e^{-in\pi/4}F\psi(x)\text{ \ , \ }\pi
^{\operatorname*{Mp}}(\widehat{J})=J\tag{A1}\label{mp1}\\
\widehat{V}_{-P}\psi(x)  &  =e^{\frac{i}{2\hbar}Px^{2}}\psi(x)\text{ \ ,
\ }\pi^{\operatorname*{Mp}}(\widehat{V}_{-P})=V_{-P}\tag{A2}\label{mp2}\\
\widehat{M}_{L,m}\psi(x)  &  =i^{m}\sqrt{|\det L|}\psi(Lx)\text{ \ , \ }%
\pi^{\operatorname*{Mp}}(\widehat{M}_{L,m})=M_{L,m}. \tag{A3}\label{mp3}%
\end{align}
Here $F$ is the unitary $\hbar$-Fourier transform%
\[
F\psi(x)=\left(  \tfrac{1}{2\pi\hbar}\right)  ^{n}\int e^{-\frac{i}{\hbar
}xx^{\prime}}\psi(x^{\prime})d^{n}x^{\prime}%
\]
and $V_{-P}$ ($P=P^{T}$), $M_{L,m}$ ($\det L\neq0$) are the symplectic
matrices
\[
V_{-P}=%
\begin{pmatrix}
I & 0\\
P & I
\end{pmatrix}
\text{ \ , \ }M_{L,m}=%
\begin{pmatrix}
L^{-1} & 0\\
0 & L^{T}%
\end{pmatrix}
.
\]
The index $m$ in $\widehat{M}_{L,m}$ is an integer corresponding to a choice
of $\arg\det L$: $m$ is even if $\det L>0$ and odd if $\det L<0$. It is called
the \textit{Maslov index} of $\widehat{M}_{L,m}$.

\subsection*{A.2 Definition using quadratic Fourier transforms}

Let $P,Q\in\operatorname*{Sym}(n,\mathbb{R})$ and $L\in GL(n,\mathbb{R})$, and
let $W$ be the real quadratic form on $\mathbb{R}^{n}\times\mathbb{R}^{n}$
defined by%
\begin{equation}
W(x,x^{\prime})=\tfrac{1}{2}Px^{2}-Lx\cdot x^{\prime}+\tfrac{1}{2}Qx^{\prime
2}. \tag{A4}\label{PLQ}%
\end{equation}
To $W$ we associate \cite{Leray,Birk} an operator $\widehat{S}_{W,m}%
:\mathcal{S}(\mathbb{R}^{n})\longrightarrow\mathcal{S}(\mathbb{R}^{n})$ by the
formula%
\[
\widehat{S}_{W,m}\psi(x)=e^{-ni\pi/4}\left(  \tfrac{1}{2\pi\hbar}\right)
^{n/2}i^{m}\sqrt{|\det L|}\int e^{\frac{i}{\hbar}W(x,x^{\prime})}%
\psi(x^{\prime})d^{n}x^{\prime}%
\]
where the integer $m$ (which is only defined modulo 4) corresponds to a choice
or $\arg\det L$ as above. By definition, that integer is the Maslov index of
\textit{Maslov index} of $\widehat{S}_{W,m}$. One verifies by a simple
calculation that we have%

\begin{equation}
\widehat{S}_{W,m}=\widehat{V}_{-P}\widehat{M}_{L,m}\widehat{J}\widehat{V}_{-Q}
\tag{A5}\label{swm}%
\end{equation}
hence $\widehat{S}_{W,m}\in\operatorname*{Mp}(n)$ is a unitary operator on
$L^{2}(\mathbb{R}^{n})$. Using the formulas (\ref{mp1})--(\ref{mp3}) a simple
calculation shows that $S_{W}=\pi^{\operatorname*{Mp}}(\widehat{S}_{W,m})$ is
given by%
\begin{equation}
S_{W}=%
\begin{pmatrix}
L^{-1}Q & L^{-1}\\
PL^{-1}Q-L^{T} & L^{-1}P
\end{pmatrix}
. \tag{A6}\label{A6}%
\end{equation}

The operators $\widehat{S}_{W,m}$ are called quadratic Fourier transforms; one
easily \cite{Leray,Birk} verifies that%
\begin{equation}
(\widehat{S}_{W,m})^{-1}=\widehat{S}_{W^{\ast},m^{\ast}}\text{ \textit{with}
}W^{\ast}(x,x^{\prime})=-W(x^{\prime},x),\ m^{\ast}=n-m. \tag{A7}\label{A7}%
\end{equation}
The quadratic Fourier transforms form a dense subset of $\operatorname*{Mp}%
(n)$. In fact they generate this group:

\begin{proposition}
\label{propA1}Every $\widehat{S}\in\operatorname*{Mp}(n)$ can be written as a
the product of two quadratic Fourier transforms: $\widehat{S}=\widehat{S}%
_{W,m}\widehat{S}_{W^{\prime},m^{\prime}}$ and $\pi^{\operatorname*{Mp}%
}(\widehat{S}_{W,m})=S_{W}$ where $S_{W}\in\operatorname*{Sp}(n)$ is generated
by the quadratic form $W$, that is%
\[
(x,p)=S_{W}(x^{\prime},p^{\prime})\Longleftrightarrow\left\{
\begin{array}
[c]{c}%
p=\partial_{x}W(x,x^{\prime})\\
p^{\prime}=-\partial_{x^{\prime}}W(x,x^{\prime}).
\end{array}
\right.
\]

\end{proposition}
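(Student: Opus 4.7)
The strategy is to first factor the projection $S = \pi^{\operatorname*{Mp}}(\widehat{S}) \in \operatorname*{Sp}(n)$ as a product of two free symplectic matrices (matrices of the form $S_W$), and then lift this factorization to $\operatorname*{Mp}(n)$ by a judicious choice of Maslov indices.

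A direct inspection of formula (\ref{A6}) shows that the image of the parametrization $W \mapsto S_W$ coincides with the set $\mathcal{F} = \{S \in \operatorname*{Sp}(n) : \det B \neq 0\}$ of free symplectic matrices, where $B$ denotes the upper-right $n \times n$ block of $S$. Formula (\ref{A7}) then gives $\mathcal{F}^{-1} = \mathcal{F}$. The set $\mathcal{F}$ is the complement in $\operatorname*{Sp}(n)$ of the zero locus of the polynomial $S \mapsto \det B$, and since $J \in \mathcal{F}$ this polynomial does not vanish identically; combined with the connectedness of the real analytic manifold $\operatorname*{Sp}(n)$, this shows that $\mathcal{F}$ is open and dense. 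Given any $S \in \operatorname*{Sp}(n)$, both $\mathcal{F}$ and its left-translate $S\mathcal{F}$ are open and dense, so by Baire their intersection is nonempty. Picking $S_W \in \mathcal{F} \cap S\mathcal{F}$, the matrix $S_{W'} := S_W^{-1} S$ lies in $\mathcal{F}^{-1} \cdot S\mathcal{F} \cdot S^{-1} \cdot S = \mathcal{F}$ as well, producing the desired factorization $S = S_W S_{W'}$ with both factors free.

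To lift to the metaplectic group, fix any compatible Maslov indices $m,m'$ (i.e.\ of the parities dictated by the signs of $\det L$ and $\det L'$) and consider the product $\widehat{S}_{W,m}\widehat{S}_{W',m'} \in \operatorname*{Mp}(n)$. By construction this operator projects onto $S$ under $\pi^{\operatorname*{Mp}}$, hence equals $\pm\widehat{S}$, since the fiber of the double cover over $S$ contains exactly two elements. Now the prefactor $i^m$ in the integral defining $\widehat{S}_{W,m}$ in Section A.2 yields $\widehat{S}_{W,m+2} = -\widehat{S}_{W,m}$, and $m+2$ still has the parity required by $\arg\det L$. We can therefore absorb any unwanted sign by replacing $m$ with $m+2$, obtaining $\widehat{S} = \widehat{S}_{W,m}\widehat{S}_{W',m'}$ as required.

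The one genuinely nontrivial step is the density of $\mathcal{F}$, which reduces, via connectedness of $\operatorname*{Sp}(n)$, to the explicit observation that $\det B$ is a nonconstant polynomial on $\operatorname*{Sp}(n)$ (witnessed by $J$). Everything else is formal once the parametrization (\ref{A6}) and the inversion formula (\ref{A7}) are in hand.
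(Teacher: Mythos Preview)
Your argument is correct. The paper itself does not give a proof but only cites Leray and de Gosson's monograph; the argument you supply is essentially the standard one found in those references: establish that the set $\mathcal{F}$ of free symplectic matrices is open and dense in $\operatorname*{Sp}(n)$, intersect $\mathcal{F}$ with a translate to factor $S$ at the symplectic level, and then lift to $\operatorname*{Mp}(n)$ using the two-to-one covering and the freedom $m\mapsto m+2$.

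One cosmetic point: the set-theoretic chain ``$S_{W'}\in\mathcal{F}^{-1}\cdot S\mathcal{F}\cdot S^{-1}\cdot S=\mathcal{F}$'' is garbled. The clean way to say it is: $S_W\in S\mathcal{F}$ means $S_W=SF$ for some $F\in\mathcal{F}$, whence $S_{W'}=S_W^{-1}S=F^{-1}\in\mathcal{F}^{-1}=\mathcal{F}$. Also, invoking Baire is overkill here, since the intersection of two open dense subsets of any topological space is already dense; but this does no harm.
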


\begin{proof}
See Leray \cite{Leray}, de Gosson \cite{Birk}; for a detailed discussion of
the notion of generating function see Arnol'd \cite{Arnold}.
\end{proof}

The factorization $\widehat{S} = \widehat{S}_{W,m}\widehat{S}_{W^{\prime},
m^{\prime}}$ of a metaplectic operator is by no means unique; for instance we
can write the identity operator $I$ as $\widehat{S}_{W,m}\widehat{S}%
_{W,m}^{-1}$ $= \widehat{S}_{W,m}\widehat{S}_{W^{\ast},m^{\ast}}$ for every
quadratic Fourier transform $\widehat{S}_{W,m}$. There is however an invariant
attached to $\widehat{S}$: the Maslov index. Denoting by
$\operatorname*{Inert}R$ the index of inertia (= the number of negative
eigenvalues) of the real symmetric matrix $R$ we have:

\begin{proposition}
Let $\widehat{S}=\widehat{S}_{W,m}\widehat{S}_{W^{\prime},m^{\prime}%
}=\widehat{S}_{W^{\prime\prime},m^{\prime\prime}}\widehat{S}_{W^{^{\prime
\prime\prime}},m^{\prime\prime\prime}}$. We have%
\begin{equation}
m+m^{\prime}-\operatorname*{Inert}(P^{\prime}+Q)\equiv m^{\prime\prime
}+m^{\prime\prime\prime}-\operatorname*{Inert}(P^{\prime\prime\prime
}+Q^{\prime\prime})\text{ \ }\operatorname{mod}4. \tag{A8}\label{mm1}%
\end{equation}

\end{proposition}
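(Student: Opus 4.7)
The strategy is to compute the integral kernel of the composition $\widehat{S}_{W,m}\widehat{S}_{W',m'}$ explicitly via the generalized Fresnel formula (\ref{Fresnel}), and compare its overall phase to that obtained from the second factorization. Writing the composition as a double integral gives
\[
\widehat{S}_{W,m}\widehat{S}_{W',m'}\psi(x)=e^{-in\pi/2}(2\pi\hbar)^{-n}i^{m+m'}\sqrt{|\det L\det L'|}\iint e^{\frac{i}{\hbar}[W(x,y)+W'(y,x')]}\psi(x')d^{n}y\,d^{n}x',
\]
whose inner $y$-integral has quadratic part $\tfrac{1}{2}(P'+Q)y\cdot y$ and linear part $-(Lx+L'^{T}x')\cdot y$.

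Assuming first that $P'+Q$ is invertible, I would complete the square in $y$ and apply (\ref{Fresnel}) with $A=-i(P'+Q)$. Under the prescription that $(\det A)^{-1/2}$ uses square roots with positive real part, each positive eigenvalue of $P'+Q$ contributes a factor $e^{i\pi/4}|\lambda_{j}|^{-1/2}$ and each negative one contributes $e^{-i\pi/4}|\lambda_{j}|^{-1/2}$; the net phase coming out of the $y$-integral is
\[
(2\pi\hbar)^{n/2}\,e^{in\pi/4}\,i^{-\operatorname*{Inert}(P'+Q)}\,|\det(P'+Q)|^{-1/2}.
\]
Multiplying by the prefactors $e^{-in\pi/2}i^{m+m'}$ and collecting, the kernel of $\widehat{S}_{W,m}\widehat{S}_{W',m'}$ takes the form
\[
K(x,x')=\alpha\cdot i^{m+m'-\operatorname*{Inert}(P'+Q)}\cdot e^{\frac{i}{\hbar}\Phi(x,x')},
\]
where $\alpha>0$ depends only on $|\det L|,|\det L'|,|\det(P'+Q)|$ and $\hbar,n$, and $\Phi$ is the real quadratic form obtained from the completed square. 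The identical computation applied to the alternative factorization $\widehat{S}_{W'',m''}\widehat{S}_{W''',m'''}$ yields $K(x,x')=\alpha'\cdot i^{m''+m'''-\operatorname*{Inert}(P'''+Q'')}\cdot e^{\frac{i}{\hbar}\Phi'(x,x')}$ with $\alpha'>0$.

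Now the two sides are kernels of the \emph{same} operator $\widehat{S}$, hence they coincide as tempered distributions. Matching the real quadratic phases forces $\Phi=\Phi'$, and matching the unimodular prefactors up to positive real rescaling gives $i^{m+m'-\operatorname*{Inert}(P'+Q)}=i^{m''+m'''-\operatorname*{Inert}(P'''+Q'')}$, which is exactly (\ref{mm1}). The remaining degenerate case in which $P'+Q$ or $P'''+Q''$ is singular is handled by perturbation: the nondegenerate cases are open and dense, both sides of (\ref{mm1}) are locally constant integer-valued quantities, and any factorization can, by Proposition \ref{propA1} and a small adjustment of the generators $P,Q,L$, be deformed into one satisfying the invertibility hypothesis without changing $\widehat{S}$.

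The main obstacle is a bookkeeping one rather than a structural one: correctly extracting the phase factor $i^{-\operatorname*{Inert}(P'+Q)}$ from the Fresnel prescription on an indefinite symmetric matrix, and verifying that the two positive real constants $\alpha,\alpha'$ obtained from the two factorizations do not secretly hide a factor of $\pm1$ that would shift the congruence by $2\bmod 4$. A secondary but less delicate point is the reduction of the degenerate case, which is handled by the density/continuity argument just described.
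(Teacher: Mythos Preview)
Your direct kernel computation is sound in the nondegenerate case and yields the congruence cleanly: once both factorizations produce the same Schwartz kernel, evaluating at $x=x'=0$ (where the quadratic phases $\Phi,\Phi'$ vanish) forces $\alpha\, i^{m+m'-\operatorname{Inert}(P'+Q)}=\alpha'\, i^{m''+m'''-\operatorname{Inert}(P'''+Q'')}$ with $\alpha,\alpha'>0$, and the result follows. Your worry about a hidden $\pm1$ in $\alpha,\alpha'$ is unfounded, since taking absolute values of the kernel identity already gives $\alpha=\alpha'$.

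The paper does not give a proof but refers to Leray \cite{Leray} and de Gosson \cite{AIF,Birk}, where the argument is of a different, more geometric nature: one identifies the integer $m+m'-\operatorname{Inert}(P'+Q)$ with the relative Maslov index $\mu_{\ell_P}$ of Appendix~B.2, which is intrinsically attached to $\widehat{S}$ via the Leray index on $\operatorname{Lag}_\infty(n)$; well-definedness then comes from the cocycle property (\ref{cocycle2}) of the signature rather than from a Fresnel computation. That route handles the degenerate case $\det(P'+Q)=0$ uniformly, whereas your perturbation argument for it is somewhat circular as stated: ``deforming the factorization without changing $\widehat{S}$'' is exactly the kind of move whose effect on the index you are trying to control, and $\operatorname{Inert}(P'+Q)$ is not locally constant across the singular locus. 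Your computational approach is more elementary and perfectly adequate when $P'+Q$ and $P'''+Q''$ are both invertible, which is the generic situation and the only one actually needed later in the paper.
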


\begin{proof}
See Leray \cite{Leray}, de Gosson \cite{AIF,Birk}.
\end{proof}

It follows from formula (\ref{mm1}) that the class modulo $4$ of the integer
$m+m^{\prime}-\operatorname*{Inert}(P^{\prime}+Q)$ does not depend on the way
we write $\widehat{S}\in\operatorname*{Mp}(n)$ as a product $\widehat{S}%
_{W,m}\widehat{S}_{W^{\prime},m^{\prime}}$ of quadratic Fourier transforms;
this class is denoted by $m(\widehat{S})$ and called the Maslov index of
$\widehat{S}$. The mapping
\[
m:\operatorname*{Mp}(n)\in\widehat{S}\longrightarrow m(\widehat{S}%
)\in\mathbb{Z}_{4}\text{ \ }%
\]
is called the Maslov index on $\operatorname*{Mp}(n)$. We have $m(\widehat{S}%
_{W,m})=m$, $\operatorname{mod}4$ (\cite{Leray,AIF}). The theory of the Maslov
index has been further developed by Arnol'd, Leray, and by the author (see the
review \cite{CLM} by Cappell \textit{et al.}).

\section{APPENDIX\ B: Leray and Maslov Indices}

\subsection*{B.1 The Leray index}

Let $\operatorname*{Lag}(n)$ be the Lagrangian Grassmannian of the symplectic
space $(\mathbb{R}^{2n},\sigma)$. We have a natural action%
\[
\operatorname*{Sp}(n)\times\operatorname*{Lag}(n)\longrightarrow
\operatorname*{Lag}(n)\text{.}%
\]
Let $(\ell,\ell^{\prime},\ell^{\prime\prime})\in\operatorname*{Lag}^{3}(n)$;
we denote by $\tau(\ell,\ell^{\prime},\ell^{\prime\prime})$ the signature of
the quadratic form%
\[
Q(z,z^{\prime},z^{\prime\prime})=\sigma(z,z^{\prime})+\sigma(z^{\prime
},z^{\prime\prime})+\sigma(z^{\prime\prime},z)
\]
on $\ell\times\ell^{\prime}\times\ell^{\prime\prime}$. It has the following properties:

\begin{itemize}
\item \textit{Symplectic invariance}:%
\[
\tau(S\ell,S\ell^{\prime},S\ell^{\prime\prime})=\tau(\ell,\ell^{\prime}%
,\ell^{\prime\prime})\text{ \ \textit{for all} }S\in\operatorname*{Sp}(n);
\]

\item \textit{Cocycle property}:%
\begin{equation}
\partial\tau(\ell,\ell^{\prime},\ell^{\prime\prime},\ell^{\prime\prime\prime
})=0 \tag{B1}\label{cocycletau}%
\end{equation}
where $\partial$ is the usual coboundary operator;

\item \textit{Antisymmetry}:%
\[
\tau(\pi(\ell,\ell^{\prime},\ell^{\prime\prime}))=(-1)^{\mathrm{sign}(\pi
)}\tau(\ell,\ell^{\prime},\ell^{\prime\prime})
\]
for every permutation $\pi$ of $(\ell,\ell^{\prime},\ell^{\prime\prime})$.
\end{itemize}

We have%
\begin{equation}
\tau(\ell,\ell^{\prime},\ell^{\prime\prime})\equiv n+\partial\dim(\ell
,\ell^{\prime},\ell^{\prime\prime})\text{ \ }\operatorname{mod}2
\tag{B2}\label{deltadim1}%
\end{equation}
where $\dim(\ell,\ell^{\prime})=\dim(\ell\cap\ell^{\prime})$.

Let $\pi_{\infty}:\operatorname*{Lag}_{\infty}(n)\longrightarrow
\operatorname*{Lag}(n)$ be the universal covering space of
$\operatorname*{Lag}(n)$ (\textquotedblleft Maslov bundle\textquotedblright).
We will write $\ell=\pi_{\infty}^{\operatorname*{Lag}}(\ell_{\infty})$. The
Leray index is the only mapping
\[
\mu:\operatorname*{Lag}\nolimits_{\infty}(n)\times\operatorname*{Lag}%
\nolimits_{\infty}(n)\longrightarrow\mathbb{Z}%
\]
having the two following properties:

\begin{description}
\item[LM1] \textit{It is locally constant on the set}
\[
\{(\ell_{\infty},\ell_{\infty}^{\prime})\in\operatorname*{Lag}%
\nolimits_{\infty}(n)\times\operatorname*{Lag}\nolimits_{\infty}(n):\ell
\cap\ell^{\prime}=0\};
\]

\item[LM2] \textit{Its coboundary descends to the signature}: $\partial
\mu(\ell_{\infty},\ell_{\infty}^{\prime},\ell_{\infty}^{\prime\prime}) =
\tau(\ell,\ell^{\prime},\ell^{\prime\prime})$, \textit{that is}
\begin{equation}
\mu(\ell_{\infty},\ell_{\infty}^{\prime})-\mu(\ell_{\infty},\ell_{\infty
}^{\prime\prime})+\mu(\ell_{\infty}^{\prime},\ell_{\infty}^{\prime\prime
})=\tau(\ell,\ell^{\prime},\ell^{\prime\prime}). \tag{B3}\label{cocycle2}%
\end{equation}

\end{description}

Taking $\ell_{\infty}=\ell_{\infty}^{\prime\prime}$ in (\ref{cocycle2}) and
using the antisymmetry of $\tau$ we get the relation%
\begin{equation}
\mu(\ell_{\infty},\ell_{\infty}^{\prime})=-\mu(\ell_{\infty}^{\prime}%
,\ell_{\infty}). \tag{B4}\label{invLeray}%
\end{equation}

Identifying as usual the unitary group $U(n,\mathbb{C})$ with a subgroup
$U(n)$ of $\operatorname*{Sp}(n)$ we have a transitive action
\[
U(n,\mathbb{C})\times\operatorname*{Lag}(n)\longrightarrow\operatorname*{Lag}%
(n).
\]
Let\ $\ell_{P}=0\times\mathbb{R}^{n}$ the mapping $\ell=u\ell_{P}\longmapsto
uu^{T}$ ($u\in U(n,\mathbb{C})$) induces a homeomorphism
\[
\operatorname*{Lag}(n)\longrightarrow W(n,\mathbb{C})=\{w\in U(n,\mathbb{C}%
):w=w^{T}\}
\]
and we have the identification with the set
\[
\operatorname*{Lag}\nolimits_{\infty}(n)\equiv\{(w,\theta):w\in W(n,\mathbb{C}%
),\,\det w=e^{i\theta}\};
\]
the projection $\pi_{\infty}^{\operatorname*{Lag}}$ is the mapping
$(w,\theta)\longmapsto w$. The Leray index can then be explicitly be defined
in the transversal case $\ell\cap\ell^{\prime}=0$ by the Souriau
\cite{Souriau} formula
\begin{equation}
\mu(\ell_{\infty},\ell_{\infty}^{\prime})=\frac{1}{\pi}(\theta-\theta^{\prime
}+i\operatorname*{Tr}\operatorname*{Log}(-w(w^{\prime})^{-1}) \tag{B5}%
\label{Souriau}%
\end{equation}
when $\ell_{\infty}=(w,\theta)$ and $\ell_{\infty}^{\prime}=(w^{\prime}%
,\theta^{\prime})$. The condition $\ell\cap\ell^{\prime}=0$ is equivalent to
$-w(w^{\prime})^{-1}$ having no eigenvalue on the negative half-axis. In the
non-transversal case one chooses $\ell_{\infty}^{\prime\prime}\in
\operatorname*{Lag}_{\infty}(n)$ such that $\ell^{\prime\prime}\cap\ell
=\ell^{\prime\prime}\cap\ell^{\prime}=0$ and one then defines%
\begin{equation}
\mu(\ell_{\infty},\ell_{\infty}^{\prime})=\mu(\ell_{\infty},\ell_{\infty
}^{\prime\prime})-\mu(\ell_{\infty}^{\prime},\ell_{\infty}^{\prime\prime
})+\tau(\ell,\ell^{\prime},\ell^{\prime\prime}). \tag{B6}\label{LeGo}%
\end{equation}
That the right-hand side in this formula is independent on the choice of
$\ell_{\infty}^{\prime\prime}$ readily follows from the cocycle property
(\ref{cocycletau}) of the signature $\tau$ \cite{JMPA,Birk}.

We have%
\begin{equation}
\mu(\ell_{\infty},\ell_{\infty}^{\prime})\equiv n+\dim(\ell\cap\ell^{\prime
})\text{ }\operatorname{mod}2\text{\ , \ }\mu(\ell_{\infty},\ell_{\infty
}^{\prime})=-\mu(\ell_{\infty}^{\prime},\ell_{\infty}) \tag{B7}\label{mll'}%
\end{equation}
(the first equality immediately follows from (\ref{cocycle2}) using
(\ref{deltadim1}) and the second by taking $\ell_{\infty}^{\prime\prime}%
=\ell_{\infty}$ in (\ref{cocycle2})). Let $\operatorname*{Sp}\nolimits_{\infty
}(n)$ be the universal covering group of $\operatorname*{Sp}(n)$. The natural
group action
\[
\operatorname*{Sp}\nolimits_{\infty}(n)\times\operatorname*{Lag}%
\nolimits_{\infty}(n)\longrightarrow\operatorname*{Lag}\nolimits_{\infty}(n)
\]
such that
\begin{equation}
(\alpha S_{\infty})\ell_{\infty}=S_{\infty}(\beta^{2}\ell_{\infty})=\beta
^{2}(S_{\infty}\ell_{\infty}) \tag{B8}\label{sabl}%
\end{equation}
where $\alpha$ and $\beta$ the generators of the cyclic groups $\pi
_{1}[\operatorname*{Sp}(n)]$ and $\pi_{1}[\operatorname*{Lag}(n)]$,
respectively \cite{Leray}. We have%
\begin{equation}
\mu(\beta^{r}\ell_{\infty},\beta^{r^{\prime}}\ell_{\infty}^{\prime})=\mu
(\ell_{\infty},\ell_{\infty}^{\prime})+2(r-r^{\prime}) \tag{B9}\label{murr}%
\end{equation}
for all $(r,r^{\prime})\in\mathbb{Z}^{2}$.

The Leray index is invariant under the action of $\operatorname*{Sp}%
\nolimits_{\infty}(n)$:%
\begin{equation}
\mu(S_{\infty}\ell_{\infty},S_{\infty}\ell_{\infty}^{\prime})=\mu(\ell
_{\infty},\ell_{\infty}^{\prime}). \tag{B10}\label{spaction}%
\end{equation}
This immediately follows from the fact that both functions $(\ell_{\infty
},\ell_{\infty}^{\prime})\longmapsto\mu(\ell_{\infty},\ell_{\infty}^{\prime})$
and $(\ell_{\infty},\ell_{\infty}^{\prime})\longmapsto\mu(S_{\infty}%
\ell_{\infty},S_{\infty}\ell_{\infty}^{\prime})$ satisfy the characteristic
conditions (LM1) and (LM2) and that the signature is a symplectic invariant.

\subsection*{B.2 Relative Maslov indices}

For $S_{\infty}\in\operatorname*{Sp}\nolimits_{\infty}(n)$ and $\ell
\in\operatorname*{Lag}(n)$ we define the Maslov index on $\operatorname*{Sp}%
\nolimits_{\infty}(n)$ relative to $\ell$ by
\begin{equation}
\mu_{\ell}(S_{\infty})=\mu(S_{\infty}\ell_{\infty},\ell_{\infty}).
\tag{B11}\label{mule}%
\end{equation}
It follows from (\ref{sabl}) that for every $\ell_{\infty}\in
\operatorname*{Lag}\nolimits_{\infty}(n)$ the function $\operatorname*{Sp}%
\nolimits_{\infty}(n)\longrightarrow\mathbb{Z}$ associating to $S_{\infty}$
the integer $\mu(S_{\infty}\ell_{\infty},\ell_{\infty})$ only depends on the
projection $\ell=\pi_{\infty}(\ell_{\infty})$, justifying the notation
(\ref{mule}).

Let $S_{\infty},S_{\infty}^{\prime}\in\operatorname*{Sp}\nolimits_{\infty}(n)$
and $\ell\in\operatorname*{Lag}(n)$. We have the product formula%
\begin{equation}
\mu_{\ell}(S_{\infty}S_{\infty}^{\prime})=\mu_{\ell}(S_{\infty})+\mu_{\ell
}(S_{\infty}^{\prime})+\tau(\ell,S\ell,SS^{\prime}\ell) \tag{B12}%
\label{product}%
\end{equation}
(it readily follows from the coboundary property (\ref{cocycle2}) of the Leray
index); taking $S_{\infty}^{\prime}=S_{\infty}^{-1}$ in this formula it
follows that%
\begin{equation}
\mu_{\ell}(S_{\infty}^{-1})=-\mu_{\ell}(S_{\infty}). \tag{B13}%
\label{inversion}%
\end{equation}

The following identity describes the action of $\pi_{1}[\operatorname*{Sp}%
(n)]$ on the relative Maslov index: for every $r\in\mathbb{Z}$ we have from
(\ref{sabl}) that
\begin{equation}
\mu_{\ell}(\alpha^{r}S_{\infty})=\mu_{\ell}(S_{\infty})+4r. \tag{B14}%
\label{alphamaslov}%
\end{equation}

It follows from the properties (LM1) and (LM2) of the Leray index that

\begin{description}
\item[MA] \textit{The Maslov index relative to} $\ell\in\operatorname*{Lag}%
(n)$ \textit{is the only mapping} $\mu_{\ell}:\operatorname*{Sp}%
\nolimits_{\infty}(n)$ $\longrightarrow\mathbb{Z}$ \textit{which is locally
constant on the set} $\{S_{\infty}:S\ell\cap\ell=0\}$ \textit{and satisfying
the product formula} (\ref{product}).
\end{description}

\begin{acknowledgement}
Maurice de Gosson has been financed by the Austrian FWF research grant
P27773--N25. Fernando Nicacio wishes to acknowledge financial support from the
Brazilian founding agency CAPES.
\end{acknowledgement}

\end{document}